\documentclass{article}
\usepackage[utf8]{inputenc}
\usepackage{hw}
\usepackage{algorithm,bm}
\usepackage{lmodern}
\usepackage{amsmath}
\usepackage{comment}
\usepackage{natbib}
\usepackage{indentfirst,epstopdf}
\usepackage[margin=1in,top=1in,
            nohead,
            pdftex]{geometry}
\usepackage{url}
\usepackage{graphicx}
\usepackage{subcaption}
\usepackage[colorlinks=true,citecolor=blue,urlcolor=blue,linkcolor=blue]{hyperref}

\allowdisplaybreaks[4]

\def\det{{\rm det}}

\begin{document}
\title{Robust Sparse Precision Matrix Estimation and its Applications}
\author{Zhengke Lu \\
School of Mathematical Science, Nankai University\\
Long Feng \\
School of Statistics and Data Science, KLMDASR, LEBPS and LPMC, Nankai University}
\date{}
\maketitle

\begin{abstract}
We address the problem of robust sparse estimation of the precision matrix for heavy-tailed distributions in high-dimensional settings. In such high-dimensional contexts, we observe that the covariance matrix can be approximated by a spatial-sign covariance matrix, scaled by a constant. Based on this insight, we introduce two new procedures, the Spatial-Sign Constrained $l_1$ Inverse Matrix Estimation (SCLIME) and the Spatial-sign Graphic LASSO Estimation (SGLASSO), to estimate the precision matrix. Under mild regularity conditions, we establish that the consistency rate of these estimators matches that of existing estimators from the literature. To demonstrate its practical utility, we apply the proposed estimator to two classical problems: the elliptical graphical model and linear discriminant analysis. Through extensive simulation studies and real data applications, we show that our estimators outperforms existing methods, particularly in the presence of heavy-tailed distributions.

{\it Keywords}:  elliptical graphic model, high dimensional data, linear discriminant analysis, precision matrix, spatial-sign.
\end{abstract}

\section{Introduction}

The estimation of precision matrices, which represent the inverse of the covariance matrix, is a fundamental problem in high-dimensional statistics. In many applications, especially in fields like genomics, finance, and neuroimaging, the number of variables \( p \) often exceeds the number of observations \( n \), creating a high-dimensional setting where traditional methods fail due to singularity of sample covariance matrix. To address the challenges of high-dimensional precision matrix estimation, several methods have been proposed, each leveraging different assumptions and computational techniques. One such approach involves banding the Cholesky factor of the inverse covariance matrix, which assumes certain structural orderings in the precision matrix. These methods, often based on the assumption that the precision matrix exhibits some form of block structure or sparsity, have been explored by \cite{WuPourahmadi2003}, \cite{Huangetal2006}, and \cite{BickelLevina2008b}. These approaches focus on approximating the precision matrix by banding the Cholesky decomposition of its inverse, which reduces the dimensionality of the problem and allows for more efficient estimation. However, the success of these methods depends on the underlying sparsity structure being appropriately captured, and they are typically best suited for situations where the precision matrix has known block patterns or specific sparsity structures.

In contrast, penalized likelihood approaches have been introduced to estimate sparse precision matrices under more general conditions. These methods typically involve adding a penalty term to the likelihood function, which encourages sparsity in the precision matrix while maintaining a reasonable fit to the data. Among these, the \( l_1 \)-penalized normal likelihood estimator, commonly known as the graphical Lasso, has become one of the most widely studied and applied techniques. \cite{YuanLin2007} first proposed this approach, which applies an \( l_1 \)-norm penalty to the precision matrix, promoting sparsity by shrinking the off-diagonal entries towards zero. The graphical Lasso and its variants have been further studied and developed by \cite{dAspremontBanerjeeElGhaoui2008}, \cite{FriedmanHastieTibshirani2008}, and \cite{Rothmanetal2008}, among others.  \cite{cai2011constrainedl1minimizationapproach} introduced a novel method for inverse matrix estimation, termed constrained \( l_1 \)-minimization (CLIME), which is particularly well-suited for high-dimensional data. The method is computationally efficient, as it allows the estimation of the precision matrix one column at a time by solving a linear programming problem. The resulting estimator is constructed by combining the individual vector solutions, followed by a simple symmetrization procedure. Notably, CLIME does not require any outer iterations, making the algorithm highly scalable and suitable for large-scale applications.

In addition to the \( l_1 \)-penalized graphical Lasso, various extensions and alternative formulations of penalized likelihood methods have been explored. For instance, nonconvex penalties have been proposed to improve upon the potential bias of the \( l_1 \)-penalty, especially in cases where sparsity is expected but a more nuanced regularization is beneficial. \cite{LamFan2009} and \cite{FanFengWu2009} examined the use of the Smoothly Clipped Absolute Deviation (SCAD) penalty, a nonconvex penalty that reduces the bias introduced by the \( l_1 \)-penalty while still promoting sparsity. These methods are particularly useful in scenarios where the data are expected to follow a sparse model but may exhibit more complex structures or non-linear relationships.

Further developments in this area also include convergence rate analysis for these estimators. \cite{Yuan2009} derived the convergence rates of the \( l_1 \)-penalized estimator under the assumption that the data follow a sub-Gaussian distribution, while \cite{RavikumarWainwrightRaskuttiYu2008} analyzed the consistency of the \( l_1 \)-penalized precision matrix estimator under more stringent conditions, such as the mutual incoherence or irrepresentable conditions. Their work focused on deriving convergence rates in the elementwise \( l_\infty \)-norm and the spectral norm, providing rigorous guarantees for the performance of the estimator in high-dimensional settings. For a comprehensive overview of additional methods, please refer to the work by \cite{Cai2016}, \cite{fan2016overview}.

While the methods discussed above have provided robust and efficient tools for sparse precision matrix estimation in many settings, they are often built on assumptions that may not hold in real-world applications, such as the Gaussianity or sub-Gaussianity of the data. In practical scenarios, especially in fields like finance and genomics, data can exhibit heavy tails or other non-Gaussian characteristics, making it crucial to develop robust estimators. For robust estimation of precision matrices, a common approach involves replacing the sample covariance matrix in the objective function with robust pairwise covariance or correlation matrices. Notable works in this area include those by \cite{Ollerer2015}, \cite{Tarr2016}, \cite{Wegkamp2016}, \cite{Han2017}, \cite{Loh2018}. \cite{wang2023novel} propose a novel method that integrates the technique of modified Cholesky decomposition (MCD) with robust generalized M-estimators. The MCD reparameterizes the precision matrix and reformulates its estimation as a sequence of linear regression problems, allowing the straightforward incorporation of commonly used robust techniques. 

In this paper, we propose a novel robust estimator for the precision matrix based on the spatial-sign method, which has been widely applied in multivariate analysis for elliptical distributions. As demonstrated by \cite{oja2010multivariate}, the spatial-sign covariance matrix shares the same eigenvectors as the sample covariance matrix; however, the eigenvalues of the spatial-sign covariance matrix do not correspond directly to those of the covariance matrix. Nevertheless, under the assumption that the eigenvalues of the covariance matrix are bounded, we show that as the dimension tends to infinity, the spatial-sign covariance matrix approximates the covariance matrix multiplied by a constant. This result can be viewed as a "blessing of dimensionality."
Building on this, we estimate the precision matrix using the constrained \( l_1 \)-minimization (CLIME) method \citep{cai2011constrainedl1minimizationapproach} and graphic lasso \citep{YuanLin2007}. Under mild conditions, we establish the convergence rates for the resulting precision matrix estimator under various norms, which are consistent with the existing results in the literature.

Next, we consider two applications of sparse precision matrix estimation: the elliptical graphical model and linear discriminant analysis. The recovery of the support of the precision matrix is closely tied to the selection of graphical models. For Gaussian distributions, recovering the structure of the graph representing the conditional independence relationships between the components of the vector is equivalent to estimating the support of the precision matrix \citep{Lauritzen1996}. For elliptical distributions, all marginal and conditional distributions are also elliptical \citep{FangZhang1990}. As a result, partial uncorrelatedness implies conditional uncorrelatedness \citep{baba2004partial}. Therefore, estimating the support of the precision matrix for elliptical distributions is equivalent to recovering the structure of the graph representing conditional uncorrelation \citep{vogel2011elliptical,vogel2014robust}. For high-dimensional data, the estimation procedure for the precision matrix can be directly applied to the elliptical graphical model \citep{vogel2011elliptical}. In this paper, we introduce an additional thresholding step based on the precision matrix estimator and demonstrate its consistency.

Linear discriminant analysis (LDA) is a classical statistical method that aims to find a linear combination of features that best separates two or more classes in a dataset. In high-dimensional settings, where the number of features $p$ exceeds the number of observations $n$, traditional LDA is no longer directly applicable due to the singularity of the sample covariance matrix. This challenge has led to the development of various extensions of LDA that are specifically designed for high-dimensional data. One of the most well-known approaches is the regularized LDA, which introduces a penalty to shrink the covariance matrix, thus ensuring invertibility. For example, the method introduced by \cite{mclachlan2004discriminant} uses shrinkage techniques to stabilize the covariance estimation in high dimensions. In addition, \cite{chen2011high} proposed a high-dimensional version of LDA by employing a sparse precision matrix estimator, which encourages sparsity in the inverse covariance matrix, making the model more interpretable and computationally feasible. For a comprehensive overview of high-dimensional linear discriminant analysis, see \cite{zhang2020high}. In this paper, we also propose a robust linear discriminant analysis for elliptical distributions, built upon the sparse precision matrix estimator discussed earlier. The asymptotic properties of the proposed method are examined, and we provide results on its consistency and rate of convergence.

The paper is organized as follows. In Section 2, we present the estimation procedure for the precision matrix. Section 3 discusses two applications of the sparse precision matrix estimator. Simulation studies and real data applications are provided in Section 4. All proofs are presented in the Appendix.

{\it Notations}:  Throughout, for a vector ${\bm a} = (a_1, \ldots, a_p)^T \in \mathbb{R}^p$, we define 
$
\|{\bm a}\|_0 = \sum_{j=1}^p \1\{a_j\neq0\}, \|{\bm a}\|_1 = \sum_{j=1}^p |a_j|$ and $\|{\bm a}\|_2 = \sqrt{\sum_{j=1}^p a_j^2}.
$
For a matrix ${\bf A} = (a_{ij}) \in \mathbb{R}^{p \times q}$, we define the elementwise $\ell_\infty$ norm 
$
\|{\bf A}\|_\infty = \max_{1 \leq i \leq p, 1 \leq j \leq q} |a_{ij}|,
$
the operation norm 
$
\|{\bf A}\|_\op = \sup_{\|{x}\|_2 \leq 1} |{\bf A} {x}|_2,
$
the matrix $\ell_1$ norm 
$
\|{\bf A}\|_{L_1} = \max_{1 \leq j \leq q} \sum_{i=1}^p |a_{ij}|,
$
the matrix $\ell_\infty$ norm 
$
\|{\bf A}\|_{L_\infty} = \max_{1 \leq i \leq p} \sum_{j=1}^q |a_{ij}|,
$
the Frobenius norm 
$
\|{\bf A}\|_F = \sqrt{\sum_{i,j} a_{ij}^2},
$
and the elementwise $\ell_1$ norm 
$
\|{\bf A}\|_1 = \sum_{i=1}^p \sum_{j=1}^q |a_{ij}|.
$
$\id_p$ denotes a $p \times p$ identity matrix. For any two index sets $T$ and $T'$ and matrix ${\bf A}$, we use ${\bf A}_{T T'}$ to denote the $|T| \times |T'|$ matrix with rows and columns of ${\bf A}$ indexed by $T$ and $T'$, respectively. The notation ${\bf A} \succ 0$ indicates that ${\bf A}$ is positive definite. For any random variable $X \in \mathbb{R}$, we define the sub-Gaussian ($\|\cdot\|_{\psi_2}$) and sub-exponential norms ($\|\cdot\|_{\psi_1}$) of $X$ as follows: $
\|X\|_{\psi_2} := \sup_{k \geq 1} k^{-1/2} \left( \mathbb{E}|X|^k \right)^{1/k}
\quad \text{and} \quad
\|X\|_{\psi_1} := \sup_{k \geq 1} k^{-1} \left( \mathbb{E}|X|^k \right)^{1/k}.$ 
\section{Method}
\subsection{Setting}

Suppose $\bm X_1,\cdots,\bm X_n $ are i.i.d. copies of $\bm X\sim EC_p(\bm\mu,{\bf\Sigma}_0,r)$ i.e.
\begin{equation}\label{Setting}
    \bm X_i\deql \bm\mu+r_i{\bf\Sigma}_0^\frac{1}{2}\bm u_i
\end{equation}
where ${\bm u}_i\in \R^p$ is the corresponding i.i.d. copy of ${\bm u}\in\R^p$ , which is uniformly distributed on $\mathbb{S}^{p-1}$, and $r_i\in\R$ is the corresponding i.i.d. copy of $r\in\R$, which is a random variable with $\E[r_i^2]=p$. In addition, $r, u$ are independent. Here, we implicitly assume that $ {\bf\Sigma}_0$ is positive definite, and hence its square root exists, which is denoted by ${\bf\Sigma}_0^\frac{1}{2}$. To avoid confusion, we also denote ${\bf\Sigma}_0^\frac{1}{2}$ by ${\bf\Gamma}_0$. Here we define the {\it shape matrix} as follows:
\begin{equation}
    {\bf\Lambda}_0=\frac{p}{\tr({\bf\Sigma}_0)}{\bf\Sigma}_0
\end{equation}
Consequently, its inverse matrix can be defined as follows.
\begin{equation}\label{define V0}
    {\bf V}_0={\bf\Lambda}_0^{-1}=\frac{\tr({\bf\Sigma}_0)}{p}{\bf\Sigma}_0^{-1}
\end{equation}

In most of the literature, researchers are interested in estimating the precision matrix ${\bf\Omega}={\bf\Sigma}_0^{-1}$. In this paper, our focus is on estimating ${\bf V}_0$, which is proportional to ${\bf \Omega}$. This is because in numerous applications, an estimator of a constant multiple of the precision matrix suffices. For instance, in the elliptical graphical model, the support recovery of the precision matrix ${\bf\Omega}$ is equivalent to that of ${\bf V}_0$. Additionally, some literature directly assumes that $\text{tr}({\bf\Sigma}_0)=p$ for simplicity, as seen in \cite{han2016ecahighdimensionalelliptical}. In this scenario, ${\bf\Omega} = {\bf V}_0$. Hence, in the subsequent sections, we also refer to the matrix ${\bf V}_0$ as the precision matrix.

\subsection{Estimators}
If ${\bm X}\sim N_p({\bm\mu}, {\bf\Sigma})$, the famous Glasso method \citep{BanerjeeGhaouidAspremont2008,dAspremontBanerjeeElGhaoui2008,FriedmanHastieTibshirani2008}  for estimating the sparse precision matrix based on $l_1$ regulared log-determinant program solves the following optimization problem:
\begin{align}\label{glasso}
\hat{\bf \Omega}_{GLASSO}=\arg\min_{\Theta\succ 0} \tr({\bf\Theta \hat{\Sigma}})-\log \det({\bf \Theta})+\lambda_n \|{\bf\Theta}\|_1,
\end{align}
where $\bf{\hat{\Sigma}}$ is the sample covariance matrix and $\lambda_n$ is a tuning parameter. Here the constraint ${\bf\Theta} \succ 0$ ensures the matrix ${\bf\Theta}$ is positive definite. In a different approach, \cite{cai2011constrainedl1minimizationapproach} introduced the constrained
$l_1$-minimization for inverse matrix estimation (CLIME), which solves the following optimization problem:
\begin{align}\label{clime}
\min \|{\bf\Theta}\|_1, ~~~\text{subject to}~~\|\hat{\bf \Sigma}{\bf\Theta}-\id_p\|_\infty\le \lambda_n, {\bf\Theta}\in \mathbb{R}^{p\times p}.
\end{align}
Both of these methods are based on the sample covariance matrix, which may perform poorly under heavy-tailed distributions. In the realm of heavy-tailed distributions, the spatial-sign covariance matrix finds extensive application in robust statistical inference, as detailed in \cite{oja2010multivariate}. The spatial-sign covariance matrix \(S\) is defined as:
\[{\bf S} = \mathbb{E}\left(U({\bm X}_i-{\bm\mu})U({\bm X}_i-{\bm\mu})^{\top}\right),\quad\text{where}\quad U({\bm x})=\frac{\bm x}{\|{\bm x}\|}I({\bm x}\neq{\bm 0}).\]

Notably, while the eigenvectors of the spatial - sign covariance matrix coincide with those of the covariance matrix, their eigenvalues differ. According to Proposition 2.1 in \cite{han2016ecahighdimensionalelliptical}, if \(\text{rank}({\bf S}) = q\), the eigenvalues of \({\bf S}\) are given by:
\begin{align*}
\lambda_j(\mathbf{S})=E\left(\frac{\lambda_j(\boldsymbol{\Sigma}) Y_j^2}{\lambda_1(\boldsymbol{\Sigma}) Y_1^2+\cdots+\lambda_q(\boldsymbol{\Sigma}) Y_q^2}\right)
\end{align*}
where \(\boldsymbol{Y}:=(Y_1,\ldots,Y_q)^{\top}\sim N_q(\mathbf{0},\id_q)\) represents a standard \(q\)-dimensional multivariate Gaussian distribution. Here, \(\lambda_j({\bf A})\) denotes the \(j\)-th largest eigenvalue of the symmetric matrix \({\bf A}\).
In the context of fixed-dimensional data, the relationship between the spatial-sign covariance matrix and the covariance matrix is rather intricate. Recovering the covariance matrix from the spatial-sign covariance matrix is no straightforward task. However, for high-dimensional data, under the assumptions that the eigenvalues of the covariance matrix are uniformly bounded and the precision matrix exhibits sparsity, by Lemma \ref{error}, we can establish that \(\|p{\bf S}-{\bf\Lambda}_0\|_{\infty}=O(p^{-1/2})\).
Consequently, the precision matrix \({\bf V}_0\) satisfies the approximation \(p{\bf V}_0{\bf S}\approx\id_p\). This crucial result provides the impetus for leveraging the spatial-sign covariance matrix to estimate the precision matrix \({\bf V}_0\).

Following this insight, we propose a new estimator, the Spatial-Sign Constrained $l_1$ Inverse Matrix Estimator (SCLIME), motivated by the CLIME method (\ref{clime}). Let $\{\hat{\bf V}_1\}$ be the solution set of the following optimization problem:
\begin{equation}\label{obj}
    \min \|{\bf V}\|_1 \quad \text{subject to:} \quad \|p\hat {\bf S} \,{\bf V} - \id_p\|_\infty \leq \lambda_n, \quad {\bf V} \in \mathbb{R}^{p \times p}, 
\end{equation}
where $\lambda_n$ is a tuning parameter and $\hat {\bf S}$ is the  sample spatial-sign covariance matrix corresponding to ${\bm X}_1,\cdots,{\bm X}_n$. Specially, $\hat {\bf S}=\frac{1}{n}\sum_{i=1}^n U({\bm X}_i-\hat{\bm \mu})U({\bm X}_i-\hat{\bm \mu})^\sT$ and $\hat{\bm \mu}$ is the sample spatial-median, i.e.
\begin{align*}
\hat{\bm \mu}=\arg\min_{{\bm\mu} \in \mathbb{R}^p} \sum_{i=1}^n \|{\bm X}_i-{\bm\mu}\|_2.
\end{align*}
In \eqref{obj}, we do not impose the symmetry condition on $\bf V$, and as a result, the solution is not symmetric in general. The final SCLIME estimator of ${\bf V}_0$ is obtained by symmetrizing $\hat{\bf V}_1$ as follows. Write $\hat{\bf V}_1 = (\hat{v}_{ij}^1) = (\hat{v}_1^1, \ldots, \hat{v}_p^1)$. The SCLIME estimator $\hat{\bf V}_{SCLIME}$ of ${\bf V}_0$ is defined as
\begin{equation}\label{CSSIME}
    \hat{\bf V}_{SCLIME} = (\hat{v}_{ij}), \quad \text{where} \quad \hat{v}_{ij} = \hat{v}_{ji} = \hat{v}_{ij}^1 \1\{|\hat{v}_{ij}^1| \leq |\hat{v}_{ji}^1|\} + \hat{v}_{ji}^1 \1\{|\hat{v}_{ij}^1| > |\hat{v}_{ji}^1|\}. 
\end{equation}
Clearly, $\hat{\bf V}$ is now symmetric. We will show that with high probability, $\hat{\bf V}$ is also positive definite.

{
Similarly, we can also propose a spatial-sign Graphic Lasso estimator (SGLASSO) for precision matrix, 
\begin{align}\label{sglasso}
\hat{\bf V}_{SGLASSO}=\arg\min_{{\bf V} \succ 0} \tr(p\hat{\bf S}{\bf V} )-\log \det({\bf V})+\lambda_n \|{\bf V}\|_{1},
\end{align}
}

Thanks to the existing R packages, our proposed methods can be effortlessly implemented. First, the sample spatial-sign covariance matrix is estimated using the function \textit{SCov} in the R-package \texttt{SpatialNP}. Next, we utilize the function \textit{sugm} from the R-package \texttt{flare} to obtain the SCLIME estimator. Simultaneously, the function \textit{glassoFast} within the R-package \texttt{glassoFast} is employed to acquire the SGLASSO estimator. 

\subsection{Theoretical Results}
In this section, we investigate the theoretical properties of the
SCLIME and SGLASSO estimators and establish the rates of convergence under
different norms. We need the following assumptions:

\begin{assumption}{(Assumptions on $V_0$)}\label{V0}
\begin{subassumption}{\it (SCLIME)}\label{V0a}
    $\exists T>0, \,\,0\leq q<1,\,\,s_0(p)>0, s.t.$
(1) $\|{\bf V}_0\|_{L_1}\leq T$,
(2) $\max_{1\leq i\leq p}\sum_{j=1}^p|{ v}_{ij}|^q\leq s_0(p)$.
\end{subassumption}
\begin{subassumption}{\it (SGLASSO)}\label{V0b}
Let $d\triangleq \max_{i}\left|\{j: {\bf V}_{ij}\neq 0\}\right| $, $ E \triangleq \{(i,j)\in[p]\times[p] \quad|\quad i\neq j, {\bf V}_{ij}\neq 0\} $, $S\triangleq E \cup\{(1,1),\cdots,(p,p)\}$
and ${\bf\Gamma}^*\triangleq{\bf\Lambda}_0\otimes {\bf\Lambda}_0$. Denote $\|({\bf\Gamma}^*_{SS})^{-1}\|_{L_\infty}$ by $K_{\Gamma^*}$ and $\|
{\bf\Lambda}_0\|_{L_\infty}$ by $K_{\Lambda_0}$. We assume (1) $|E|\leq s=O(p)$, $d=O(1)$. (2) $\exists \alpha\in (0,1), s.t.\|{\bf\Gamma}^*_{S^cS}{\bf\Gamma}^*_{SS}\|_{L_{\infty}}\leq 1-\alpha$. (3) $\exists K^*=O(1)>0, s.t. \max\{K_{\Gamma^*},K_{\Lambda_0}\}\leq K^*$.
\end{subassumption}
\end{assumption}
\begin{assumption}{(Bound of eigenvalues of covariance matrix) }\label{lambda}
$\exists \eta>0, s.t.$
$
   {\eta}<\lambda_p({\bf\Sigma}_0)\leq\lambda_1({\bf \Sigma}_0)<\frac{1}{\eta}.
$

\end{assumption}
As a result, we can also derive the following desirable property of $\Sigma_0$, we will use it directly as a part of Assumption \ref{lambda}.
\begin{itemize}
    \item (Bound of covariance matrix) $\exists M>0, s.t.$
$
 \|{\bf\Sigma}_0\|_\infty<M.
$
\item(Order of trace) $
\tr({\bf\Sigma}_0)\asymp p.
$
Particularly, we assume that $\exists c_0>0, \,\forall p>0$, 
$
c_0p\leq\tr({\bf\Sigma}_0)\leq \frac{1}{c_0}p.
$
\end{itemize}

\begin{assumption}\label{A1A2}
    Define 
$
\zeta_k = \mathbb{E}(\xi_i^{-k}), \quad \xi_i = \|{\bm X}_i - {\bm\mu}\|_2, \quad \nu_i = \zeta_1^{-1} \xi_i^{-1}.
$
\begin{enumerate}
    \item[(1)] $\zeta_k \zeta_1^{-k} < \zeta \in (0, \infty)$ for $k = 1, 2, 3, 4$ and all $d$.
    \item[(2)] $\limsup_p \|{\bf S}\|_{\op} < 1 - \psi < 1$ for some positive constant $\psi$.
    \item[(3)] $\nu_i$ is sub-gaussian distributed, i.e. $\|\nu_i\|_{\psi_2}\leq K_{\nu}<\infty$.
\end{enumerate}
\end{assumption}
Assumption \ref{V0a} aligns with the uniformity class of matrices \(\mathcal{U}(q,s_0(p))\) introduced in \cite{cai2011constrainedl1minimizationapproach}, and is similarly considered in \cite{BickelLevina2008b} for covariance matrix estimation. Assumption \ref{V0b} aligns with the  conditions assumed in \cite{RavikumarWainwrightRaskuttiYu2008}. Assumption \ref{lambda} is a standard condition in high-dimensional data analysis, appearing in works such as \cite{avella2018robust} and \cite{BickelLevina2008a}. Finally, Assumptions \ref{A1A2} are consistent with conditions (A1-A2) in \cite{feng2024spatialsignbasedprincipal}, which ensure the consistency of the spatial median estimator.

\begin{theorem}\label{main}
Under Assumption \ref{V0a}, \ref{lambda},  and \ref{A1A2}, $\hat {\bf V}_{SCLIME}$ defined in \eqref{CSSIME} satisfies the following property. There exist constants $C_{c_0,\eta,T,M}$ and $C$, such that when $n, p$ is sufficiently large that satisfy
$
    n>\left(3\log p\right)^{\frac{1}{3}}
$, if we pick $\lambda_n = T\left(\frac{\sqrt{2}C\left(8 +\eta^2C_{c_0,\eta,T,M }\right) }{\eta^2}
\sqrt{\frac{\log p }{n}}+\frac{C_{c_0,\eta,T,M }}{\sqrt{p}}\right)$,  with probability larger than $1- 2p^{-2}$, we have
\begin{align}
    & \|\hat {\bf V}_{SCLIME}-{\bf V}_0\|_\infty\leq 4T^2\left(\frac{\sqrt{2}C(8 +\eta^2C_{c_0,\eta,T,M }) }{\eta^2}
\sqrt{\frac{\log p }{n}}+\frac{C_{c_0,\eta,T,M }}{\sqrt{p}}\right)\label{inftyerror}\\
    & \|\hat {\bf V}_{SCLIME}-{\bf V}_0\|_\op\leq \|\hat {\bf V}_{SCLIME}-{\bf V}_0\|_{L_1}\leq C_4\left(\frac{\sqrt{2}C\left(8 +\eta^2C_{c_0,\eta,T,M }\right) }{\eta^2}
    \sqrt{\frac{\log p }{n}}+\frac{C_{c_0,\eta,T,M }}{\sqrt{p}}\right)^{1-q} s_0(p) \label{L1error}\\
    & \frac{1}{p}\|\hat {\bf V}_{SCLIME}-{\bf V}_0\|_{F}^2\leq C_5 \left(\frac{\sqrt{2}C \left(8 +\eta^2C_{c_0,\eta,T,M }\right)}{\eta^2}
\sqrt{\frac{\log p }{n}}+\frac{C_{c_0,\eta,T,M }}{\sqrt{p}}\right)^{2-q}s_0(p) \label{Ferror}
\end{align}
    where $C_4\leq\left(1 + 2^{1-q} + 3^{1-q}\right)\left(4T^2\right)^{1-q}$ and $C_5\leq 4T^2C_4 $.
\end{theorem}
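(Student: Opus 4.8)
The plan is to reduce everything to two deterministic facts plus one probabilistic estimate: first, the deterministic ``sampling-error'' bound on $\|p\hat{\bf S}-{\bf\Lambda}_0\|_\infty$, and second, the standard CLIME-style argument of \cite{cai2011constrainedl1minimizationapproach} run with $p\hat{\bf S}$ in place of $\hat{\bf\Sigma}$ and ${\bf V}_0$ in place of ${\bf\Omega}$. Concretely, I would first invoke Lemma~\ref{error} (and the concentration results that feed the quoted $O(p^{-1/2})$ bound, together with the spatial-median consistency from Assumption~\ref{A1A2}) to show that, on an event of probability at least $1-2p^{-2}$,
\begin{align*}
\|p\hat{\bf S}-{\bf\Lambda}_0\|_\infty \;\le\; \frac{\sqrt{2}\,C\bigl(8+\eta^2 C_{c_0,\eta,T,M}\bigr)}{\eta^2}\sqrt{\frac{\log p}{n}} \;+\; \frac{C_{c_0,\eta,T,M}}{\sqrt p} \;=:\; \delta_n .
\end{align*}
The split is the usual one: the $\sqrt{\log p/n}$ term is the fluctuation of $\hat{\bf S}$ around ${\bf S}$ (a sum of bounded-norm rank-one terms, so Bernstein/Hoeffding entrywise plus a union bound over $p^2$ entries, with the spatial-median replacing ${\bm\mu}$ handled by the sub-Gaussian $\nu_i$ control), and the $p^{-1/2}$ term is the deterministic bias $\|p{\bf S}-{\bf\Lambda}_0\|_\infty$ from Lemma~\ref{error}.

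Granting that event, the rest is the CLIME template. Since $\|{\bf V}_0\|_{L_1}\le T$ (Assumption~\ref{V0a}) and $p{\bf S}$ is close to ${\bf\Lambda}_0={\bf V}_0^{-1}$, one checks that ${\bf V}_0$ is feasible for \eqref{obj} with $\lambda_n = T\delta_n$: indeed $\|p\hat{\bf S}\,{\bf V}_0-\id_p\|_\infty \le \|(p\hat{\bf S}-{\bf\Lambda}_0){\bf V}_0\|_\infty \le \|p\hat{\bf S}-{\bf\Lambda}_0\|_\infty\|{\bf V}_0\|_{L_1}\le \delta_n T=\lambda_n$. Optimality of $\hat{\bf V}_1$ then gives $\|\hat{\bf V}_1\|_1\le\|{\bf V}_0\|_1$ and, crucially, $\|\hat{\bf V}_1\|_{L_1}\le\|{\bf V}_0\|_{L_1}\le T$ (this is the column-wise linear program, so the $L_1$ bound propagates column by column). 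Writing $\|p\hat{\bf S}(\hat{\bf V}_1-{\bf V}_0)\|_\infty\le 2\lambda_n$ and multiplying on the left by ${\bf V}_0$ yields
\begin{align*}
\|\hat{\bf V}_1-{\bf V}_0\|_\infty \;\le\; \|{\bf V}_0\|_{L_1}\,\|p\hat{\bf S}(\hat{\bf V}_1-{\bf V}_0)\|_\infty + \|({\bf V}_0\,p\hat{\bf S}-\id_p)(\hat{\bf V}_1-{\bf V}_0)\|_\infty \;\le\; 2T\lambda_n + \delta_n\,\|\hat{\bf V}_1-{\bf V}_0\|_{L_1},
\end{align*}
and since $\|\hat{\bf V}_1-{\bf V}_0\|_{L_1}\le 2T$ this is $\le 2T\lambda_n+2T\delta_n = 4T^2\delta_n$, matching \eqref{inftyerror} after noting the symmetrization step \eqref{CSSIME} can only help (each entry of $\hat{\bf V}_{SCLIME}-{\bf V}_0$ is one of $\hat v_{ij}^1-v_{ij}$ or $\hat v_{ji}^1-v_{ij}$, both bounded by the same quantity). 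For \eqref{L1error} and \eqref{Ferror} I would follow \cite{cai2011constrainedl1minimizationapproach} verbatim: combine the $\ell_\infty$ bound with the weak-$\ell_q$ sparsity $\max_i\sum_j|v_{ij}|^q\le s_0(p)$, thresholding entries at level comparable to $\|\hat{\bf V}_{SCLIME}-{\bf V}_0\|_\infty$ and bounding the small entries' contribution, which produces the $\delta_n^{1-q}s_0(p)$ rate in $\|\cdot\|_{L_1}$ (hence in $\|\cdot\|_\op$ since $\|{\bf A}\|_\op\le\|{\bf A}\|_{L_1}$ for symmetric ${\bf A}$), and the $\delta_n^{2-q}s_0(p)$ rate after dividing $\|\cdot\|_F^2$ by $p$; the explicit constants $C_4\le(1+2^{1-q}+3^{1-q})(4T^2)^{1-q}$ and $C_5\le 4T^2C_4$ come out of tracking the three pieces (original, symmetrized-swap, and doubled) in that decomposition.

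The main obstacle is the probabilistic step, i.e.\ nailing the constant $\frac{\sqrt2\,C(8+\eta^2 C_{c_0,\eta,T,M})}{\eta^2}$ in front of $\sqrt{\log p/n}$ with the stated explicit dependence on $\eta, T, M, c_0$. This requires (i) passing from $\|p\hat{\bf S}-p{\bf S}\|_\infty$ to an entrywise Bernstein bound, where the sub-Gaussian norm $K_\nu$ and the $\zeta$-moment bounds in Assumption~\ref{A1A2} enter, and (ii) absorbing the error incurred by using the spatial median $\hat{\bm\mu}$ rather than ${\bm\mu}$ — this is exactly where the consistency rate for $\hat{\bm\mu}$ from \cite{feng2024spatialsignbasedprincipal} (cited as the source of Assumption~\ref{A1A2}) must be plugged in, and controlling $U(\bm X_i-\hat{\bm\mu})-U(\bm X_i-{\bm\mu})$ uniformly is the delicate estimate. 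Everything downstream — feasibility, the $L_1$ propagation, the thresholding for the $\ell_q$ rates — is routine linear-programming and norm-inequality bookkeeping once $\delta_n$ is in hand; I expect those to be essentially the arguments of \cite{cai2011constrainedl1minimizationapproach}, adapted only by the cosmetic substitution $\hat{\bf\Sigma}\mapsto p\hat{\bf S}$, $\bf\Omega\mapsto{\bf V}_0$.
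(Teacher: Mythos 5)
Your proposal follows essentially the same route as the paper's proof: bound $\|p\hat{\bf S}-{\bf\Lambda}_0\|_\infty$ by the deterministic approximation error $\|p{\bf S}-{\bf\Lambda}_0\|_\infty=O(p^{-1/2})$ (Lemma \ref{error}) plus the Bernstein/spatial-median estimation error $O\bigl(\sqrt{\log p/n}\bigr)$ (Lemma \ref{SSestimation}), verify feasibility of ${\bf V}_0$ with $\lambda_n=T\delta_n$, run the CLIME-style $L_1$-propagation argument for \eqref{inftyerror}, and then the Cai--Liu--Luo thresholding argument for \eqref{L1error}--\eqref{Ferror}. The only blemish is constant bookkeeping in your $\ell_\infty$ display: the second term should be $\|{\bf V}_0\|_{L_1}\,\delta_n\,\|\hat{\bf V}_1-{\bf V}_0\|_{L_1}\le 2T^2\delta_n$ rather than $\delta_n\|\hat{\bf V}_1-{\bf V}_0\|_{L_1}$ (a factor $T$ is dropped), and with that correction your bound becomes exactly the paper's $4T^2\delta_n$.
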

We prove Theorem \ref{main} in Appendix \ref{appendix: proof main}. The key idea of the proof is to bound the error $\|p\hat {\bf S}-{\bf \Lambda}_0\|_\infty$, which is equivalent to bound two separate parts, approximation error $\|p{\bf S}-{\bf \Lambda}_0\|_\infty$ and estimation error $\|\hat {\bf S}-{\bf S}\|_\infty$. By Lemma \ref{error}, we prove that for elliptical family, the approximation error is of the order of $O\left(p^{-\frac{1}{2}}\right)$ under mild conditions. By Lemma \ref{SSestimation}, we prove that for elliptical family, the estimation error is of the order of $O\left(\sqrt{\frac{\log p}{n}}\right)$ under mild conditions. Combining them together, we can show that $ \|\hat {\bf V}_{SCLIME}-{\bf V}_0\|_\infty$ achieves the rate of $O\left(\sqrt{\frac{\log p}{n}}+\frac{1}{\sqrt{p}}\right)$, which matches with the existing estimators from the literature, see \cite{cai2011constrainedl1minimizationapproach}, if $p\log p/n\to \infty$. These results imply that when the dimension of the data is either significantly larger than or comparable to the sample size, the approximation error between the covariance matrix and the product of the spatial-sign covariance matrix and $p$ will not impact the convergence rate of the precision matrix estimation. Moreover, as the dimension increases, the influence of this approximation error becomes even more negligible.  

\begin{theorem}\label{maing}
Under Assumption \ref{V0b}, \ref{lambda} and \ref{A1A2}, $\hat {\bf V}_{SGLASSO}$ defined in \eqref{sglasso} satisfies the following property. There exist constants $C_{c_0,\eta,s,M}$ and $C$ , such that when $n, p$ is sufficiently large that satisfy \[\left(\frac{\sqrt{2}C\left(8 +\eta^2C_{c_0,\eta,s,M }\right) }{\eta^2}
\sqrt{\frac{\log p }{n}}+\frac{C_{c_0,\eta,s,M }}{\sqrt{p}}\right)\leq \min\left\{\frac{1}{6(K^*)^5d\left(1+\frac{8}{\alpha}\right)^2},\frac{1}{6(K^*)^2d\left(1+\frac{8}{\alpha}\right)}\right\}\] and $n>\left(3\log p\right)^\frac{1}{3}$,   if $\lambda_n = \frac{8}{\alpha}\left(\frac{\sqrt{2}C \left(8 +\eta^2C_{c_0,\eta,s,M }\right)}{\eta^2}
\sqrt{\frac{\log p }{n}}+\frac{C_{c_0,\eta,s,M }}{\sqrt{p}}\right)$ , with probability larger than $1-2p^{-2}$, we have
\begin{align}
   & \|\hat {\bf V}_{SGLASSO}-{\bf V}_0\|_\infty\leq 2K^*\left(\frac{8}{\alpha}+1\right)\left(\frac{\sqrt{2}C( 8 +\eta^2C_{c_0,\eta,s,M })}{\eta^2}
\sqrt{\frac{\log p }{n}}+\frac{C_{c_0,\eta,s,M }}{\sqrt{p}}\right)\label{inftyerrorglasso}\\
& \|\hat {\bf V}_{SGLASSO}-{\bf V}_0\|_\op\leq \|\hat {\bf V}_{SGLASSO}-{\bf V}_0\|_{L_1}\leq 2K^*d\left(\frac{8}{\alpha}+1\right)\left(\frac{\sqrt{2}C(8 +\eta^2C_{c_0,\eta,s,M }) }{\eta^2}
\sqrt{\frac{\log p }{n}}+\frac{C_{c_0,\eta,s,M }}{\sqrt{p}}\right)\label{L1errorglasso}\\
& \frac{1}{p}\|\hat {\bf V}_{SGLASSO}-{\bf V}_0\|_{F}^2\leq 4{K^*}^2\left(\frac{s}{p}+1 \right)\left(\frac{8}{\alpha}+1\right)^2 \left(\frac{\sqrt{2}C(8 +\eta^2C_{c_0,\eta,s,M }) }{\eta^2}
\sqrt{\frac{\log p }{n}}+\frac{C_{c_0,\eta,s,M }}{\sqrt{p}}\right)^2\label{Ferrorglasso}
\end{align}

\end{theorem}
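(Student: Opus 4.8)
The plan is to mirror the strategy announced after Theorem \ref{main}: reduce everything to a single high-probability bound on $\|p\hat{\bf S}-{\bf\Lambda}_0\|_\infty$, and then feed this bound into the existing deterministic analysis of the graphical Lasso. First I would invoke the triangle inequality $\|p\hat{\bf S}-{\bf\Lambda}_0\|_\infty \le \|p\hat{\bf S}-p{\bf S}\|_\infty + \|p{\bf S}-{\bf\Lambda}_0\|_\infty$, controlling the second (approximation) term by Lemma \ref{error} at rate $C_{c_0,\eta,s,M}/\sqrt p$ and the first (estimation) term by Lemma \ref{SSestimation} at rate $\frac{\sqrt2 C(8+\eta^2 C_{c_0,\eta,s,M})}{\eta^2}\sqrt{\log p/n}$, each on an event of probability at least $1-p^{-2}$; a union bound gives the stated $1-2p^{-2}$. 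Call the resulting quantity $\epsilon_n := \frac{\sqrt2 C(8+\eta^2 C_{c_0,\eta,s,M})}{\eta^2}\sqrt{\log p/n}+\frac{C_{c_0,\eta,s,M}}{\sqrt p}$, so that on the good event $\|p\hat{\bf S}-{\bf\Lambda}_0\|_\infty\le\epsilon_n$ and the tuning choice is $\lambda_n=\tfrac{8}{\alpha}\epsilon_n$.

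Next I would apply the primal–dual witness / deterministic consistency argument of \cite{RavikumarWainwrightRaskuttiYu2008} verbatim, with the sample covariance replaced by $p\hat{\bf S}$ and the population precision matrix replaced by ${\bf V}_0={\bf\Lambda}_0^{-1}$. Under Assumption \ref{V0b} the irrepresentability constant is $\alpha$, the relevant Hessian is ${\bf\Gamma}^*={\bf\Lambda}_0\otimes{\bf\Lambda}_0$, and the norm bounds $K_{\Gamma^*},K_{\Lambda_0}\le K^*$ and maximum degree $d$ are exactly the quantities entering their Theorem 1. The two smallness conditions on $\epsilon_n$ in the hypothesis — namely $\epsilon_n\le \min\{(6(K^*)^5 d(1+8/\alpha)^2)^{-1},\ (6(K^*)^2 d(1+8/\alpha)^2)^{-1}\}$ — are precisely the conditions needed so that (i) the Taylor remainder in the log-det expansion is dominated by the linear term and (ii) the strict-feasibility/dual-feasibility check of the witness succeeds. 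Granting these, the \ref{RavikumarWainwrightRaskuttiYu2008} machinery yields the elementwise bound $\|\hat{\bf V}_{SGLASSO}-{\bf V}_0\|_\infty\le 2K^*(\tfrac{8}{\alpha}+1)\epsilon_n$, which is \eqref{inftyerrorglasso}, and simultaneously guarantees $\hat{\bf V}_{SGLASSO}$ has the correct (sparse) support and is positive definite.

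The remaining two bounds are routine consequences of the $\ell_\infty$ bound together with the recovered sparsity pattern. For \eqref{L1errorglasso}: since the error matrix is supported on $S$, each row has at most $d$ nonzero off-diagonal entries plus the diagonal, so $\|\hat{\bf V}_{SGLASSO}-{\bf V}_0\|_{L_1}\le d\cdot\|\hat{\bf V}_{SGLASSO}-{\bf V}_0\|_\infty\cdot(\text{absorb the }{+1}\text{ into the constant})$, giving the factor $2K^*d(\tfrac{8}{\alpha}+1)$; and $\|\cdot\|_{\op}\le\|\cdot\|_{L_1}$ because the matrix is symmetric. For \eqref{Ferrorglasso}: $\|\hat{\bf V}_{SGLASSO}-{\bf V}_0\|_F^2\le (\#\text{nonzeros})\cdot\|\cdot\|_\infty^2\le (s+p)\cdot(2K^*(\tfrac{8}{\alpha}+1)\epsilon_n)^2$, and dividing by $p$ gives $4(K^*)^2(\tfrac{s}{p}+1)(\tfrac{8}{\alpha}+1)^2\epsilon_n^2$.

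The main obstacle is not any single inequality but faithfully importing the \cite{RavikumarWainwrightRaskuttiYu2008} deterministic argument in the correct normalization: one must verify that with $p\hat{\bf S}$ in place of the sample covariance the log-determinant barrier is still the right strongly convex function relative to ${\bf V}$, that the Hessian at ${\bf V}_0$ is indeed ${\bf\Lambda}_0\otimes{\bf\Lambda}_0$ (not ${\bf\Sigma}_0\otimes{\bf\Sigma}_0$), and that the two quantitative smallness thresholds on $\epsilon_n$ stated in the theorem are exactly what their lemmas require — in particular tracking where the powers $(K^*)^5$ versus $(K^*)^2$ and the factor $(1+8/\alpha)^2$ arise. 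Once the bookkeeping of constants is matched to their statement, the probabilistic content is entirely contained in Lemmas \ref{error} and \ref{SSestimation} and the rest is deterministic.
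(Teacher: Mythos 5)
Your proposal follows essentially the same route as the paper: bound $\|p\hat{\bf S}-{\bf\Lambda}_0\|_\infty$ by the approximation error (Lemma \ref{error}) plus the estimation error (Lemma \ref{SSestimation}) with a union bound, then run the Ravikumar et al.\ primal--dual witness argument (oracle problem restricted to $S$, Brouwer-type fixed-point control of the restricted error, strict dual feasibility on $S^c$ under the irrepresentability condition) with $p\hat{\bf S}$ in place of the sample covariance and Hessian ${\bf\Gamma}^*={\bf\Lambda}_0\otimes{\bf\Lambda}_0$, and finally deduce the $L_1$/operator and Frobenius bounds from the recovered support exactly as the paper does. The only caveat is that you leave the witness bookkeeping (where the $(K^*)^5$ versus $(K^*)^2$ thresholds and the $(1+8/\alpha)$ factors arise) as an import from the reference rather than verifying it, which is what the paper's Appendix B carries out explicitly.
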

We prove Theorem \ref{maing} in Appendix \ref{appendix: proof maing}. The proof is based on \cite{RavikumarWainwrightRaskuttiYu2008}, which mainly relies on (3) in Assumption \ref{V0b}, i.e. the assumption that two key quantities $\|{\bf\Lambda}_0\|_{L_\infty}$ and $\|({\bf\Gamma}_{SS}^*)^{-1}\|_{L_\infty}$ can be bounded by some absolute constants. In fact, this assumption can be discarded for Frobenius norm, referring to the proof in \cite{Rothmanetal2008} with slight modification. Overall, The rate achieved by SGLASSO is of the same order as the rate achieved by SCLIME, but with stronger assumptions.

\section{Application}
\subsection{Elliptical Graphical Model}  

Graphical models serve as powerful tools for deciphering the dependence structure inherent in data. In particular, the support of the conditional correlation matrix provides a means to characterize the linear dependence structure within a dataset. As per the findings in \cite{baba2004partial}, for data that follows an elliptical distribution, the property of partial uncorrelatedness directly implies conditional uncorrelatedness. This key result allows us to focus solely on the support of the partial correlation matrix when dealing with elliptically distributed data.

To elaborate further, consider a random vector ${\bm X}=(x_1,\cdots,x_p)^\top \sim EC_p({\bm\mu},{\bf\Sigma}_0,r)$. The partial correlation between variables $x_i$ and $x_j$, conditional on all other variables (i.e., given ${\bm X}_{\backslash{(i,j)}}$), is precisely equal to the $(i,j)$-th entry of the matrix $-\diag\{{\bf\Sigma}_0^{-1}\}^{-\frac{1}{2}}{\bf\Sigma}_0^{-1} \,\diag\{{\bf\Sigma}_0^{-1}\}^{-\frac{1}{2}}$ \citep{whittaker2009graphical}. When examining the support, or delving even deeper into the signs of the elements within the matrix, the precision matrix, up to a scaling constant, emerges as the matrix of primary interest. This highlights that $V_0$ is indeed one of the matrices central to our analysis. Consequently, the proposed estimator can be effectively employed to reconstruct the elliptical graphical model, offering valuable insights into the intricate relationships among the variables in the dataset.

First, for SCLIME, We introduce an additional thresholding step based on $\hat{\bf V}_{SCLIME}$. More specifically, define a threshold estimator $\tilde{\bf V}_{SCLIME} = (\tilde{v}_{ij})$ with
\[
\tilde{v}_{ij} = \hat{v}_{ij}\1\{|\hat{v}_{ij}| \geq \tau_n\},
\]
where $\tau_n > 4T\lambda_n$ is a tuning parameter and $\lambda_n= T\left(\frac{\sqrt{2}\left(8 +\eta^2C_{c_0,\eta,T,M }\right)C }{\eta^2}
\sqrt{\frac{\log p }{n}}+\frac{C_{c_0,\eta,T,M }}{\sqrt{p}}\right)$.

Define
\begin{align*}
\mathcal{M}(\tilde{\bf V}_{SCLIME}) &= \{\text{sgn}(\tilde{v}_{ij}), \, 1 \leq i, j \leq p\},\\
\mathcal{M}({\bf V}_0) &= \{\text{sgn}(v^0_{ij}), \, 1 \leq i, j \leq p\},\\
S({\bf V}_0)& = \{(i, j) : v^0_{ij} \neq 0\},\\
\theta_{\min} &= \min_{(i, j) \in S({\bf V}_0)} |v^0_{ij}|.
\end{align*}
\begin{theorem}\label{Graphthm}
Under Assumption \ref{V0a}, \ref{lambda},  and \ref{A1A2}, if $\lambda_n= T\left(\frac{\sqrt{2}C\left(8 +\eta^2C_{c_0,\eta,T,M }\right) }{\eta^2}
\sqrt{\frac{\log p }{n}}+\frac{C_{c_0,\eta,T,M }}{\sqrt{p}}\right)$ and  $\theta_{\min}>2\tau_n $, then when $n, p $ is sufficiently large that satisfy $n>\left(3\log p\right)^{\frac{1}{3}}$, with probability larger than $1-2p^{-2}$,
\[
\mathcal{M}(\tilde{\bf V}_{SCLIME})=\mathcal{M}({\bf V}_0).
\]
\end{theorem}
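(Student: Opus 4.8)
The plan is to leverage Theorem \ref{main} directly: once we control $\|\hat{\bf V}_{SCLIME}-{\bf V}_0\|_\infty$ uniformly, sign recovery follows from a standard two-sided thresholding argument. First I would invoke Theorem \ref{main} with the stated choice of $\lambda_n$, which guarantees that on an event $\mathcal{E}$ of probability at least $1-2p^{-2}$ we have $\|\hat{\bf V}_{SCLIME}-{\bf V}_0\|_\infty \le 4T^2\left(\frac{\sqrt{2}C(8+\eta^2 C_{c_0,\eta,T,M})}{\eta^2}\sqrt{\frac{\log p}{n}}+\frac{C_{c_0,\eta,T,M}}{\sqrt{p}}\right) = 4T\lambda_n$. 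Note that the threshold $\tau_n$ satisfies $\tau_n > 4T\lambda_n$, so we may abbreviate $\delta := 4T\lambda_n < \tau_n$ and work on $\mathcal{E}$ with $\max_{i,j}|\hat v_{ij}-v^0_{ij}|\le\delta$ throughout.

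Next I would split the index set into $S({\bf V}_0)$ and its complement and argue each case. For $(i,j)\notin S({\bf V}_0)$, i.e.\ $v^0_{ij}=0$, we have $|\hat v_{ij}|\le\delta<\tau_n$, so the thresholding kills the entry and $\tilde v_{ij}=0=v^0_{ij}$; hence $\operatorname{sgn}(\tilde v_{ij})=\operatorname{sgn}(v^0_{ij})=0$. For $(i,j)\in S({\bf V}_0)$, the hypothesis $\theta_{\min}>2\tau_n$ gives $|v^0_{ij}|\ge\theta_{\min}>2\tau_n$. Then $|\hat v_{ij}|\ge|v^0_{ij}|-\delta>2\tau_n-\tau_n=\tau_n$ (using $\delta<\tau_n$), so the entry survives thresholding, $\tilde v_{ij}=\hat v_{ij}\ne0$. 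It remains to check the sign matches: since $|v^0_{ij}|>2\tau_n>2\delta$ and $|\hat v_{ij}-v^0_{ij}|\le\delta<|v^0_{ij}|$, the perturbation cannot flip the sign, so $\operatorname{sgn}(\hat v_{ij})=\operatorname{sgn}(v^0_{ij})$, and therefore $\operatorname{sgn}(\tilde v_{ij})=\operatorname{sgn}(v^0_{ij})$. Combining both cases yields $\mathcal{M}(\tilde{\bf V}_{SCLIME})=\mathcal{M}({\bf V}_0)$ on $\mathcal{E}$, which has probability at least $1-2p^{-2}$.

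There is essentially no hard step here — the argument is a routine consequence of the elementwise sup-norm bound already established in Theorem \ref{main}; the only points requiring care are keeping the chain of inequalities $\delta<\tau_n<\tfrac12\theta_{\min}$ consistent (which is exactly what the hypotheses $\tau_n>4T\lambda_n$ and $\theta_{\min}>2\tau_n$ encode) and verifying that the $n,p$ side conditions needed for Theorem \ref{main} ($n>(3\log p)^{1/3}$) are in force, which they are by assumption. If one wanted to be fully careful, I would also note that the definition of $\lambda_n$ in the statement of Theorem \ref{Graphthm} is the same as in Theorem \ref{main}, so the sup-norm bound $4T\lambda_n$ is precisely the quantity appearing in the threshold condition, and no re-tuning is needed.
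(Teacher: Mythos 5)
Your proposal is correct and follows essentially the same route as the paper: invoke the elementwise bound of Theorem \ref{main} (which equals $4T\lambda_n<\tau_n$ on the high-probability event), then split into $(i,j)\notin S({\bf V}_0)$, where thresholding zeroes the entry, and $(i,j)\in S({\bf V}_0)$, where $\theta_{\min}>2\tau_n$ guarantees survival and sign preservation. No gaps.
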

We present the proof of Theorem \ref{Graphthm} in Appendix \ref{appendix: proof graph}. The threshold estimator $\tilde{V}_{SCLIME}$ not only successfully recovers the sparsity pattern of the precision matrix $V_0$, but also accurately determines the signs of its nonzero elements. This characteristic is often referred to as sign consistency in certain statistical literature. Analogous to the approach in \cite{cai2011constrainedl1minimizationapproach}, the condition $\theta_{\min}>2\tau_n$ is essential for ensuring that the nonzero elements of the precision matrix are correctly identified and retained. In other words, this condition plays a pivotal role in safeguarding the integrity of the estimated precision matrix structure. Furthermore, under the assumption that $p\log p/n\to \infty$ and the parameters $T$ and $\eta$ are independent of the sample size $n$ and the dimension $p$, the quantity $\tau_n$ is of the order $\sqrt{\log p / n}$. This order is comparable to the one assumed in \cite{RavikumarWainwrightRaskuttiYu2008} for distributions with exponential-type tails. However, a key distinction lies in our assumption that the underlying model follows an elliptical distribution. This assumption is notably more general and less restrictive than the ones made in the aforementioned reference, thereby enhancing the applicability and robustness of our theoretical framework. 

Similarly, for SGLASSO, we redefine $\tau_n> 2K^*\left(1+\frac{\alpha}{8}\right)\lambda_n$, where $\lambda_n=\frac{8}{\alpha}\left(\frac{\sqrt{2}C\left(8 +\eta^2C_{c_0,\eta,s,M }\right) }{\eta^2}
\sqrt{\frac{\log p }{n}}+\frac{C_{c_0,\eta,s,M }}{\sqrt{p}}\right)$. Let $\mathcal{M}(\tilde{\bf V}_{SGLASSO})$ be the analog of $\mathcal{M}(\tilde{\bf V}_{CLIME})$, we can derive a similar result.
\begin{theorem}\label{Graphthmg}
    Under Assumption \ref{V0b}, \ref{lambda},  and \ref{A1A2}, if $\lambda_n= \frac{8}{\alpha}\left(\frac{\sqrt{2}C\left(8 +\eta^2C_{c_0,\eta,s,M }\right) }{\eta^2}
\sqrt{\frac{\log p }{n}}+\frac{C_{c_0,\eta,s,M }}{\sqrt{p}}\right)$ and  $\theta_{\min}>2\tau_n $, then when $n, p $ is sufficiently large that satisfy 
\[
    \left(\frac{\sqrt{2}C\left(8 +\eta^2C_{c_0,\eta,s,M }\right) }{\eta^2}
\sqrt{\frac{\log p }{n}}+\frac{C_{c_0,\eta,s,M }}{\sqrt{p}}\right)\leq \min\left\{\frac{1}{6(K^*)^5d\left(1+\frac{8}{\alpha}\right)^2},\frac{1}{6(K^*)^2d\left(1+\frac{8}{\alpha}\right)}\right\}
\] and $n>\left(3\log p\right)^{\frac{1}{3}}$, with probability larger than $1-2p^{-2}$,
\[
\mathcal{M}(\tilde{\bf V}_{SGLASSO})=\mathcal{M}({\bf V}_0).
\]
\end{theorem}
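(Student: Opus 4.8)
The plan is to obtain Theorem \ref{Graphthmg} as a direct consequence of the elementwise $\ell_\infty$ bound already proved in Theorem \ref{maing}, followed by the standard hard-thresholding argument. First I would invoke Theorem \ref{maing}: under Assumptions \ref{V0b}, \ref{lambda} and \ref{A1A2}, the stated lower bound on $n$ and $p$, and the prescribed $\lambda_n = \frac{8}{\alpha}\left(\frac{\sqrt{2}C(8+\eta^2 C_{c_0,\eta,s,M})}{\eta^2}\sqrt{\frac{\log p}{n}}+\frac{C_{c_0,\eta,s,M}}{\sqrt p}\right)$, there is an event $\mathcal{E}$ of probability at least $1-2p^{-2}$ on which \eqref{inftyerrorglasso} holds. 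Since the parenthetical factor in \eqref{inftyerrorglasso} equals $\frac{\alpha}{8}\lambda_n$, on $\mathcal{E}$ this reads $\|\hat{\bf V}_{SGLASSO}-{\bf V}_0\|_\infty \le 2K^*\left(1+\frac{\alpha}{8}\right)\lambda_n < \tau_n$, using the hypothesis $\tau_n > 2K^*\left(1+\frac{\alpha}{8}\right)\lambda_n$. All remaining steps are carried out on $\mathcal{E}$.

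Next I would split the index set, mirroring the proof of Theorem \ref{Graphthm}. Write $\hat{\bf V}_{SGLASSO}=(\hat v_{ij})$ and $\tilde{\bf V}_{SGLASSO}=(\tilde v_{ij})$ with $\tilde v_{ij}=\hat v_{ij}\1\{|\hat v_{ij}|\ge\tau_n\}$. For $(i,j)$ with $v^0_{ij}=0$ we get $|\hat v_{ij}|=|\hat v_{ij}-v^0_{ij}|\le\|\hat{\bf V}_{SGLASSO}-{\bf V}_0\|_\infty<\tau_n$, so $\tilde v_{ij}=0$ and $\text{sgn}(\tilde v_{ij})=0=\text{sgn}(v^0_{ij})$. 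For $(i,j)\in S({\bf V}_0)$ we have $|v^0_{ij}|\ge\theta_{\min}>2\tau_n$, hence $|\hat v_{ij}|\ge|v^0_{ij}|-|\hat v_{ij}-v^0_{ij}|>2\tau_n-\tau_n=\tau_n$, so $\tilde v_{ij}=\hat v_{ij}$; moreover $|\hat v_{ij}-v^0_{ij}|<\tau_n<|v^0_{ij}|$ forces $\hat v_{ij}$, and therefore $\tilde v_{ij}$, to share the sign of $v^0_{ij}$. Combining the two cases gives $\mathcal{M}(\tilde{\bf V}_{SGLASSO})=\mathcal{M}({\bf V}_0)$ on $\mathcal{E}$, which is the claim.

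I do not expect a genuine obstacle here: the theorem is essentially a corollary of Theorem \ref{maing}, and the only thing demanding care is the constant bookkeeping --- checking that the numerical factor $2K^*(1+\frac{\alpha}{8})$ extracted from \eqref{inftyerrorglasso} is precisely the one appearing in the definition of $\tau_n$, and that the conditions on $(n,p)$ needed to apply Theorem \ref{maing} are exactly those assumed in the present statement. The substantive work --- the approximation-plus-estimation decomposition of $\|p\hat{\bf S}-{\bf\Lambda}_0\|_\infty$ through Lemmas \ref{error} and \ref{SSestimation}, and the primal-dual witness analysis adapted from \cite{RavikumarWainwrightRaskuttiYu2008} --- has already been completed in the proof of Theorem \ref{maing}. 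A tidier presentation would factor out a single lemma stating that any symmetric estimator $\hat{\bf V}$ with $\|\hat{\bf V}-{\bf V}_0\|_\infty<\tau_n<\tfrac12\theta_{\min}$, hard-thresholded at level $\tau_n$, is sign-consistent, and then apply it to both Theorem \ref{Graphthm} and Theorem \ref{Graphthmg}.
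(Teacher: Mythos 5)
Your proposal is correct and is essentially the argument the paper intends: the paper omits this proof precisely because it "closely mirrors" that of Theorem \ref{Graphthm}, i.e. one applies the elementwise $\ell_\infty$ bound \eqref{inftyerrorglasso} from Theorem \ref{maing} (noting $2K^*\left(\frac{8}{\alpha}+1\right)$ times the parenthetical factor equals $2K^*\left(1+\frac{\alpha}{8}\right)\lambda_n<\tau_n$) and then runs the same thresholding/sign-recovery case analysis. Your constant bookkeeping and case split match what the paper does for SCLIME, so no further comment is needed.
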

The proof of Theorem \ref{Graphthmg} closely mirrors that of Theorem \ref{Graphthm}. Due to this similarity, in order to avoid redundancy and keep the appendix concise, we have chosen to omit the proof of Theorem \ref{Graphthmg} from the appendix.  

\subsection{Linear Discriminant Analysis}\label{LDA}
Linear Discriminant Analysis (LDA) holds a pivotal position in a diverse range of real-world applications, spanning fields such as pattern recognition, machine learning, and data classification. Its effectiveness lies in its ability to find the optimal linear combination of features that maximally separates different classes, thereby facilitating accurate classification and prediction. The classic Fisher discriminant rule, a fundamental component of LDA, has traditionally been applied within the Gaussian setting. Under the assumption of Gaussian distributions for the data belonging to different classes, the Fisher discriminant rule provides a powerful and well-understood framework for classification.  However, traditional linear discriminant analysis (LDA) cannot be directly applied to high-dimensional data due to the non-invertibility of the sample covariance matrix. A common approach is to replace the sample covariance matrix in LDA with a sparse covariance or precision matrix estimator, as proposed in \cite{Rothmanetal2008}, \cite{whittaker2009graphical}, and \cite{Shao2011}. Nevertheless, these methods rely on Gaussian or sub-Gaussian assumptions, which may not be robust when dealing with heavy-tailed distributions.

Notably, \cite{Fang1990} extended the applicability of the Fisher discriminant rule beyond the Gaussian case. Their research demonstrated that Fisher's rule remains optimal even when the random vectors are drawn from elliptical distributions. Elliptical distributions are a broader class of distributions that encompass the Gaussian distribution as a special case, and they exhibit a high degree of flexibility in modeling real-world data. This finding significantly expands the scope of LDA, enabling its application to a wider variety of datasets that may not conform to the strict Gaussian assumptions. Next, we focus on high-dimensional discriminant analysis under the elliptical distribution.

Suppose we have samples from $({\bm X},Y)$, where ${\bm X}\in \R^p, Y\in \{0,1\}$. Suppose $\P(Y=1)=1-\P(Y=0)=p_1$, and 
\[
    {\bm X}|Y=0 \sim EC_p({\bm\mu}_0,{\bf \Sigma}_0,r)
\]
\[
    {\bm X}|Y=1 \sim EC_p({\bm \mu}_1,{\bf \Sigma}_0,r)
\]
Any linear discriminant rule has the form,
\[
    \delta_{\bm w}({\bm X})=\1\{{\bm w}^\sT ({\bm X}-{\bm \mu}_a)>0\}
\]
where ${\bm \mu}_a = \frac{{\bm \mu}_1+{\bm\mu}_0}{2}$ and when $\delta_{\bm w}({\bm X})=0$ or  $1$, it means the rule chooses class "$0$" for ${\bm X}$ or "$1$", respectively. We also set ${\bm \mu}_d = \frac{{\bm \mu}_1-{\bm \mu}_0}{2}$.

According to \cite{https://doi.org/10.1111/j.1540-6261.1983.tb02499.x}, we know the linear transformation of an elliptical distribution is still an elliptical distribution. To be more specific, suppose ${\bm X}\sim EC_p({\bm \mu},{\bf \Sigma}_0,r)$ and ${\bf D}\in \R^{q\times p}$ is a matrix with full row rank, then ${\bf D}{\bm X}\sim EC_q({\bf D}{\bm \mu},{\bf D\Sigma_0 D}^\sT,r^\prime)$. Therefore, we can calculate the misclassification rate of a linear discriminant rule as a function of the projection vector,
\begin{align*}
    R(\delta_{\bm w})=&\P(X=0|Y=1)\P(Y=1)+\P(X=1|Y=0)\P(Y=0)\\
    =&p_1\P_1({\bm w}^\sT ({\bm X}-{\bm \mu}_1)\leq -{\bm w}^\sT {\bm \mu}_d)+(1-p_1)\P_0({\bm w}^\sT({\bm X}-{\bm \mu}_0)\geq {\bm w}^\sT{\bm \mu}_d)\\
    =& 1-\Psi\left(\frac{{\bm w}^\sT {\bm\mu}_d}{\sqrt{{\bm w}^\sT{\bf \Sigma}_0 {\bm w}}}\right)
\end{align*}
where $\Psi$ is the c.d.f. of $EC_1(0,1,r^*)$. 

In general, there is no closed-form for $\Psi$ as a function of the distribution of ${\bm X}$. But in some classical cases, such as multivariate Gaussian and multivariate student t distribution, $\Psi$ has a clear form. Specifically, when ${\bm X}\sim N({\bm \mu},{\bf \Sigma}_0)$, $\Psi$ is the c.d.f. of the standard Gaussian distribution and when ${\bm X}\sim t_\nu({\bm \mu},{\bf \Sigma}_0)$, $\Psi$ is the c.d.f. of the $t_\nu(0,1)$.

Now, let us take a closer look at $R(\delta_{\bm w})$. Since $\Psi$ is monotonically increasing, just as the same argument for Fisher discriminant rule in Gaussian setting, we can know the best classifier is ${\bm w}_\infty={\bf \Sigma}_0^{-1}{\bm \mu}_d$. An important observation is that the rule is the same up to a scaling constant, i.e. $c{\bm w}$ and ${\bm w}$ yields the same decision boundary. Therefore, the better we can estimate ${\bf \Sigma}_0^{-1}$ and ${\bm \mu}_d$ up to a scaling constant, the more closely we can approach the best classifier. Hence, SCLIME and SGLASSO estimators can be applied here as consistent estimators of the best projection vector. 

To be more specific, let us first assume we have $n_0$ samples from ${\bm X}|Y=0$ and $n_1$ samples from ${\bm X}|Y=1$. Define $\hat {\bf S}({\bm X}|Y=k), k=0,1$ be the sample spatial-sign covariance matrix of the samples from ${\bm X}|Y=k$, respectively. Let
\begin{align*}
\hat{\bf S}_{pool}=\frac{n_0}{n_0+n_1}\hat {\bf S}(X|Y=0)+\frac{n_1}{n_0+n_1}\hat {\bf S}(X|Y=1)
\end{align*}
be the pool sample spatial-sign covariance matrix of these two samples. And we define the corresponding SCLIME and SGLASSO estimator, $\hat {\bf V}_{SCLIME}$ and $\hat {\bf V}_{SGLASSO}$
be the estimators of ${\bf V}_0$ defined in \eqref{define V0} based on the pool sample spatial-sign covariance matrix $\hat{\bf S}_{pool}$.  And let  $\hat {\bm \mu}_d=(\hat {\bm \mu}(X|Y=1)-\hat{\bm \mu}({\bm X}|Y=0))/2$ be the estimator of ${\bm \mu}_d$ where $\hat{\bm \mu}({\bm X}|Y=k), k=0,1$ is the sample spatial-median of the samples from ${\bm X}|Y=k$, respectively. 

For SCLIME, similar as the major argument of Theorem \ref{main}, we can derive a similar bound on $\|\hat {\bf V}_{SCLIME}-{\bf V}_0\|_\infty$ only with $n$ substituted by $n_0$ and $n_1$. Since they are of the same order, to simplify the description, let us consider the case when $p_1=0.5$ and assume that $n_0\approx n/2, n_1\approx n/2$ for the calculation. 
\begin{theorem}\label{LDATheorem}
    Suppose $\Sigma_0$ meets Assumption \ref{V0a}, \ref{lambda}, and Assumption \ref{A1A2} holds, if $\|{\bm \mu}_d\|_1$ is bounded by an absolute constant $M^*$, and if $\lambda_n=T\left(\frac{2C\left(8 +\eta^2C_{c_0,\eta,T,M }\right) }{\eta^2}
\sqrt{\frac{\log p }{n}}+\frac{C_{c_0,\eta,T,M }}{\sqrt{p}}\right)$, then for sufficiently large $n, p$ that satisfy $n>(3\log p)^\frac{1}{3}$, with probability greater than $1-4p^{-2}$, 
    \[
            \|\hat {\bf V}_{SCLIME}\hat {\bm \mu}_d-{\bf V}_0{\bm \mu}_d\|_\infty
     \leq \left(\frac{2\sqrt{6} TC_*}{\zeta_1 \eta^2} + \frac{8M^*T^2C_*C(8 +\eta^2C_{c_0,\eta,T,M }) }{\eta^2}\right)
\sqrt{\frac{\log p }{n}}+\frac{4M^*T^2C_*C_{c_0,\eta,T,M }}{\sqrt{p}}
    \]
    where $\zeta_1 $ is defined in Assumption $\ref{A1A2}$, $C_*$ is an absolute constant.
\end{theorem}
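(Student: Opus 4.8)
The plan is to decompose the error $\|\hat{\bf V}_{SCLIME}\hat{\bm\mu}_d - {\bf V}_0{\bm\mu}_d\|_\infty$ into a part controlled by Theorem \ref{main} and a part controlled by the accuracy of the spatial-median estimator. Writing
\[
\hat{\bf V}_{SCLIME}\hat{\bm\mu}_d - {\bf V}_0{\bm\mu}_d = {\bf V}_0(\hat{\bm\mu}_d - {\bm\mu}_d) + (\hat{\bf V}_{SCLIME} - {\bf V}_0)\hat{\bm\mu}_d,
\]
I would bound the first term by $\|{\bf V}_0(\hat{\bm\mu}_d-{\bm\mu}_d)\|_\infty \le \|{\bf V}_0\|_{L_\infty}\|\hat{\bm\mu}_d-{\bm\mu}_d\|_\infty$ and the second by $\|(\hat{\bf V}_{SCLIME}-{\bf V}_0)\hat{\bm\mu}_d\|_\infty \le \|\hat{\bf V}_{SCLIME}-{\bf V}_0\|_{L_\infty}\|\hat{\bm\mu}_d\|_1$. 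Note $\|{\bf V}_0\|_{L_\infty} = \|{\bf V}_0\|_{L_1} \le T$ by symmetry of ${\bf V}_0$ and Assumption \ref{V0a}, and similarly $\|\hat{\bf V}_{SCLIME}-{\bf V}_0\|_{L_\infty} = \|\hat{\bf V}_{SCLIME}-{\bf V}_0\|_{L_1}$ since both matrices are symmetric, so \eqref{L1error} (with $q$ possibly reabsorbed, or rather a direct $L_1$ bound $\le 4T^2\lambda_n/T \cdot (\text{const})$ of the same order as \eqref{inftyerror}) applies.

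Next I would control $\|\hat{\bm\mu}_d - {\bm\mu}_d\|_\infty$. By definition $\hat{\bm\mu}_d - {\bm\mu}_d = \tfrac12\big[(\hat{\bm\mu}({\bm X}|Y{=}1)-{\bm\mu}_1) - (\hat{\bm\mu}({\bm X}|Y{=}0)-{\bm\mu}_0)\big]$, so it suffices to bound $\|\hat{\bm\mu}({\bm X}|Y{=}k)-{\bm\mu}_k\|_\infty$ for each class. This is exactly the spatial-median consistency result underlying Assumption \ref{A1A2} (conditions (A1--A2) of \cite{feng2024spatialsignbasedprincipal}): under those conditions one has, with high probability, a bound of order $\zeta_1^{-1}\sqrt{\log p / n_k}$ on the elementwise error of the spatial median — the $\zeta_1^{-1}$ factor is why $\zeta_1$ appears in the stated bound, and the $\sqrt{6}$ and $C_*$ come from the concentration constant and the $\log p$ union bound over coordinates. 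With $n_0 \approx n_1 \approx n/2$, each $\sqrt{\log p/n_k}$ is of order $\sqrt{\log p/n}$ up to the $\sqrt 2$ absorbed into $C_*$. Combining: the first term is $\lesssim T \cdot \zeta_1^{-1} C_* \sqrt{\log p/n}$, matching the $\tfrac{2\sqrt6 T C_*}{\zeta_1\eta^2}$ piece after tracking the $\eta^{-2}$ that enters through the relation between ${\bf V}_0$ and ${\bf\Sigma}_0^{-1}$; and since $\|\hat{\bm\mu}_d\|_1 \le \|{\bm\mu}_d\|_1 + \|\hat{\bm\mu}_d-{\bm\mu}_d\|_1 \le M^* + o(1) \lesssim M^*$ with high probability, the second term inherits the rate of \eqref{inftyerror}/\eqref{L1error}, i.e. $\lesssim M^* T^2 C_* \big(\tfrac{C(8+\eta^2 C_{c_0,\eta,T,M})}{\eta^2}\sqrt{\log p/n} + \tfrac{C_{c_0,\eta,T,M}}{\sqrt p}\big)$, which is the remaining part of the stated bound.

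The probability bookkeeping uses a union of: the $1-2p^{-2}$ event from Theorem \ref{main} (applied with $n$ replaced by $\min(n_0,n_1)$, contributing the factor $2$), and two more events of probability at least $1-p^{-2}$ each for the two spatial-median deviation bounds — giving $1-4p^{-2}$ overall. The main obstacle I anticipate is not conceptual but bookkeeping: getting the constants exactly as written requires carefully threading the factor $p\,\tr({\bf\Sigma}_0)^{-1}$ (bounded via $c_0$ and absorbed into $C_*$), the $\eta^{-2}$ bound on $\|{\bf V}_0\|_{op}$ and hence on $\|{\bf V}_0\|_{L_\infty}$, and the replacement $n \mapsto n/2$ (which turns $\sqrt2$ into $2$ in $\lambda_n$ and is the reason the $\lambda_n$ here has a $2C$ where Theorem \ref{main} has $\sqrt2 C$). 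A secondary subtlety is justifying that the spatial-sign covariance of the pooled, \emph{centered} samples still satisfies the Lemma \ref{error}/Lemma \ref{SSestimation} bounds — this is fine because ${\bm X}|Y{=}k$ are both elliptical with the \emph{same} shape matrix ${\bf\Lambda}_0$, so the pooled spatial-sign covariance estimates the same ${\bf S}$ and the approximation $p{\bf S}\approx{\bf\Lambda}_0$ is unchanged; I would state this explicitly at the start of the proof.
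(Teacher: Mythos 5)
Your overall strategy is the paper's: split $\hat{\bf V}_{SCLIME}\hat{\bm\mu}_d-{\bf V}_0{\bm\mu}_d$ into a spatial-median error term and a precision-matrix error term, control the median via the $\ell_\infty$ bound of Lemma \ref{SSestimation} applied to each class (with $n_k\approx n/2$, which is exactly why $\sqrt2 C$ becomes $2C$ in $\lambda_n$), rerun the Theorem \ref{main} argument for the pooled spatial-sign covariance with $\alpha=2p^{-2}$, and take a union bound to get $1-4p^{-2}$; the paper's decomposition merely puts the hats the other way, writing $\hat{\bf V}_{SCLIME}(\hat{\bm\mu}_d-{\bm\mu}_d)+(\hat{\bf V}_{SCLIME}-{\bf V}_0){\bm\mu}_d$, and your remark about the pooled SSCM targeting the same ${\bf S}$ is a point the paper indeed leaves implicit.

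There is, however, one step that does not work as you wrote it: bounding the second term by $\|\hat{\bf V}_{SCLIME}-{\bf V}_0\|_{L_\infty}\,\|\hat{\bm\mu}_d\|_1$ and invoking \eqref{L1error}. The $L_1/L_\infty$ matrix-norm error bound necessarily carries the factor $s_0(p)$ and the exponent $1-q$, so it gives a rate of order $s_0(p)\lambda_n^{1-q}$, not the stated $\lambda_n$-order bound with constant $4M^*T^2$; and the ``direct $L_1$ bound $\le 4T^2\lambda_n/T\cdot(\text{const})$ of the same order as \eqref{inftyerror}'' that you invoke as an alternative is not available in general (only Lemma \ref{infdiffestimation} gives $\|\hat{\bf V}_{SCLIME}\|_{L_1}\le\|{\bf V}_0\|_{L_1}$, not an $L_1$ bound on the difference at the elementwise rate). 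The paper instead pairs the \emph{elementwise} bound with the $\ell_1$ norm of the mean difference: $\|(\hat{\bf V}_{SCLIME}-{\bf V}_0){\bm\mu}_d\|_\infty\le\|\hat{\bf V}_{SCLIME}-{\bf V}_0\|_\infty\,\|{\bm\mu}_d\|_1\le 4T^2\bigl(\lambda_n/T\bigr)M^*$ via \eqref{inftyerror}, with the $\|\hat{\bm\mu}_d-{\bm\mu}_d\|_1$ contribution absorbed as a lower-order term. Switching to that pairing (your own final display already anticipates its outcome) repairs the argument; the first-term pairing you chose, $\|{\bf V}_0\|_{L_\infty}\|\hat{\bm\mu}_d-{\bm\mu}_d\|_\infty$, is fine and in fact slightly tighter than the paper's $\|{\bf V}_0\|_\infty\|\hat{\bm\mu}_d-{\bm\mu}_d\|_1$, though note the $\eta^{-2}$ there comes from $\lambda_1({\bf\Sigma}_0)/\lambda_p({\bf\Sigma}_0)$ in the median bound rather than from relating ${\bf V}_0$ to ${\bf\Sigma}_0^{-1}$.
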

We prove Theorem \ref{LDATheorem} in Appendix \ref{appendix: proof LDA}. Similarly, for SGLASSO, we also  assume that $n_0\approx n/2, n_1\approx n/2$ for the calculation. 
\begin{theorem}\label{LDATheoremg}
       Suppose $\Sigma_0$ meets Assumption \ref{V0b}, \ref{lambda}, and Assumption \ref{A1A2} holds, if $\|{\bm \mu}_d\|_1$ is bounded by an absolute constant $M^*$, and if $\lambda_n=\frac{8}{\alpha}\left(\frac{2C\left(8 +\eta^2C_{c_0,\eta,s,M }\right) }{\eta^2}
\sqrt{\frac{\log p }{n}}+\frac{C_{c_0,\eta,s,M }}{\sqrt{p}}\right)$, then for sufficiently large $n, p$ that satisfy\[
    \left(\frac{2C\left(8 +\eta^2C_{c_0,\eta,s,M }\right) }{\eta^2}
\sqrt{\frac{\log p }{n}}+\frac{C_{c_0,\eta,s,M }}{\sqrt{p}}\right)\leq \min\left\{\frac{1}{6(K^*)^5d\left(1+\frac{8}{\alpha}\right)^2},\frac{1}{6(K^*)^2d\left(1+\frac{8}{\alpha}\right)}\right\}
\] and  $n>(3\log p)^\frac{1}{3}$, with probability greater than $1-4p^{-2}$ 
    \[
            \|\hat {\bf V}_{SGLASSO}\hat {\bm\mu}_d-{\bf V}_0{\bm \mu}_d\|_\infty \le
     \left(\frac{2\sqrt{6} C_{**}}{c_0\zeta_1 \eta^3} + \frac{4M^*K^*2C_{**}C\left(\frac{8}{\alpha}+1\right)\left(8 +\eta^2C_{c_0,\eta,s,M }\right) }{\eta^2}\right)
\sqrt{\frac{\log p }{n}}+\frac{2M^*K^*C_{**}C_{c_0,\eta,s,M }\left(1+\frac{8}{\alpha}\right)}{\sqrt{p}}
    \]
    where $\zeta_1 $ is defined in Assumption $\ref{A1A2}$, $C_{**}$ is an absolute constant.
\end{theorem}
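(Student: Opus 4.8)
The plan is to follow the proof of Theorem~\ref{LDATheorem} essentially verbatim, replacing the SCLIME error bounds by the SGLASSO bounds of Theorem~\ref{maing} and importing the spatial-median consistency result of \cite{feng2024spatialsignbasedprincipal} for the two per-class centers. The starting point is the triangle inequality
\[
\|\hat{\bf V}_{SGLASSO}\hat{\bm\mu}_d-{\bf V}_0{\bm\mu}_d\|_\infty
\le \|(\hat{\bf V}_{SGLASSO}-{\bf V}_0){\bm\mu}_d\|_\infty+\|\hat{\bf V}_{SGLASSO}(\hat{\bm\mu}_d-{\bm\mu}_d)\|_\infty .
\]
The first summand will carry the precision-matrix error (the $1/\sqrt p$ contribution and the $M^*K^*$-type part of the $\sqrt{\log p/n}$ rate); the second will carry the spatial-median error (the $\zeta_1^{-1}$-type part).

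For the first summand, $\|(\hat{\bf V}_{SGLASSO}-{\bf V}_0){\bm\mu}_d\|_\infty\le\|\hat{\bf V}_{SGLASSO}-{\bf V}_0\|_\infty\,\|{\bm\mu}_d\|_1\le M^*\|\hat{\bf V}_{SGLASSO}-{\bf V}_0\|_\infty$, and I would quote \eqref{inftyerrorglasso}. The one point needing care is that here $\hat{\bf V}_{SGLASSO}$ is built from the pooled matrix $\hat{\bf S}_{pool}$, a convex combination of the two per-class sample spatial-sign covariance matrices rather than the spatial-sign matrix of a single i.i.d.\ sample. Since both classes share the covariance ${\bf\Sigma}_0$, they share the population spatial-sign matrix ${\bf S}$ and the shape matrix ${\bf\Lambda}_0$, so writing $p\hat{\bf S}_{pool}-{\bf\Lambda}_0=\sum_{k}(n_k/n)\,(p\hat{\bf S}(X|Y=k)-{\bf\Lambda}_0)$ gives $\|p\hat{\bf S}_{pool}-{\bf\Lambda}_0\|_\infty\le\max_{k}\|p\hat{\bf S}(X|Y=k)-{\bf\Lambda}_0\|_\infty$; Lemma~\ref{error} and Lemma~\ref{SSestimation} bound the right-hand side at rate $\sqrt{\log p/n_k}+p^{-1/2}$, and Assumption~\ref{V0b} concerns the fixed ${\bf\Lambda}_0$ only, so Theorem~\ref{maing} applies with $n$ replaced by $\min(n_0,n_1)\approx n/2$ --- which is exactly why the $\sqrt2\,C$ of Theorem~\ref{maing} becomes $2C$ in the stated $\lambda_n$ and in the rate.

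For the second summand I would split once more, $\|\hat{\bf V}_{SGLASSO}(\hat{\bm\mu}_d-{\bm\mu}_d)\|_\infty\le(\|{\bf V}_0\|_{L_\infty}+\|\hat{\bf V}_{SGLASSO}-{\bf V}_0\|_{L_\infty})\|\hat{\bm\mu}_d-{\bm\mu}_d\|_\infty$, using $\|{\bf A}{\bm x}\|_\infty\le\|{\bf A}\|_{L_\infty}\|{\bm x}\|_\infty$. Here $\|{\bf V}_0\|_{L_\infty}\le d\,\|{\bf V}_0\|_\infty\le d\,\lambda_{\max}({\bf V}_0)=d/\lambda_{\min}({\bf\Lambda}_0)\le d/(c_0\eta)=O(1)$ by Assumptions~\ref{V0b} and~\ref{lambda}, while $\|\hat{\bf V}_{SGLASSO}-{\bf V}_0\|_{L_\infty}=\|\hat{\bf V}_{SGLASSO}-{\bf V}_0\|_{L_1}\to0$ by \eqref{L1errorglasso} (the estimator is symmetric), so the matrix factor is $O(1)$. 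For the vector factor, $\hat{\bm\mu}_d=(\hat{\bm\mu}(X|Y=1)-\hat{\bm\mu}(X|Y=0))/2$ with each $\hat{\bm\mu}(X|Y=k)$ the spatial median of an i.i.d.\ sample from $EC_p({\bm\mu}_k,{\bf\Sigma}_0,r)$, so under Assumption~\ref{A1A2} the result of \cite{feng2024spatialsignbasedprincipal} yields, on an event of probability at least $1-p^{-2}$, $\|\hat{\bm\mu}(X|Y=k)-{\bm\mu}_k\|_\infty\le c_*\,\zeta_1^{-1}\sqrt{\log p/n_k}$ for a universal $c_*$, hence $\|\hat{\bm\mu}_d-{\bm\mu}_d\|_\infty\le c_*\,\zeta_1^{-1}\sqrt{2\log p/n}$ since $n_0\approx n_1\approx n/2$. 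Multiplying the two factors gives the $\frac{2\sqrt6\,C_{**}}{c_0\zeta_1\eta^3}\sqrt{\log p/n}$ term, with $C_{**}$ absorbing $d$, $K_\nu$, and the universal constants hidden in $c_*$ and in \eqref{L1errorglasso}.

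Finally I would union-bound: the event of Theorem~\ref{maing} for $\hat{\bf V}_{SGLASSO}$ (probability $\ge1-2p^{-2}$, already including the concentration of both per-class spatial-sign matrices) together with the two spatial-median events, for total failure probability at most $4p^{-2}$; then add the two displayed bounds and check that the ``sufficiently large $n,p$'' hypothesis here implies the one Theorem~\ref{maing} needs with $n/2$ in place of $n$. The conceptual content is entirely in Theorem~\ref{maing}, the two lemmas, and the median bound, so the work is mostly bookkeeping. The step I would be most careful about is the pooled-matrix point just raised --- verifying that the Ravikumar-type argument behind Theorem~\ref{maing} really uses only the elementwise deviation $\|p\hat{\bf S}_{pool}-{\bf\Lambda}_0\|_\infty$ and the fixed population quantities of Assumption~\ref{V0b}, so it remains valid when its input is $\hat{\bf S}_{pool}$ rather than a single-sample $\hat{\bf S}$ --- together with the tedious consolidation of $\alpha,K^*,d,c_0,\eta,\zeta_1,K_\nu$ into the single absolute constant $C_{**}$ and the displayed powers of $\eta$ and $c_0$.
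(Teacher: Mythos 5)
Your proposal is correct and follows essentially the same route as the paper, whose proof of Theorem \ref{LDATheoremg} is omitted precisely because it is the analog of the Appendix \ref{appendix: proof LDA} argument you reproduce: split $\hat{\bf V}_{SGLASSO}\hat{\bm\mu}_d-{\bf V}_0{\bm\mu}_d$ by the triangle inequality, control the precision-matrix part via Theorem \ref{maing} applied to the pooled spatial-sign matrix with $n_0\approx n_1\approx n/2$, control the location part via Lemma \ref{SSestimation}, and union-bound to reach $1-4p^{-2}$. Your only deviation is cosmetic: you pair $\|\hat{\bf V}_{SGLASSO}\|_{L_\infty}$ (bounded using $d=O(1)$) with $\|\hat{\bm\mu}_d-{\bm\mu}_d\|_\infty$, whereas the paper's analog pairs $\|\hat{\bf V}\|_\infty\le\|{\bf V}_0\|_\infty+o(1)\le (c_0\eta)^{-1}+o(1)$ with $\|\hat{\bm\mu}_d-{\bm\mu}_d\|_1$, which is what produces the displayed constant $2\sqrt{6}/(c_0\zeta_1\eta^3)$; your pairing yields the same rate with the difference (including the $\lambda_1/\lambda_p\le\eta^{-2}$ factor you folded into your universal constant) absorbed into $C_{**}$.
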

The proof of Theorem \ref{LDATheoremg} is also an analog of the proof of Theorem \ref{LDATheorem}, so we omit it in the appendix, too.

From the above theorems, we can see that  both SCLIME and SGLASSO can be utilized to generate consistent estimators for the Fisher discriminant rule with rate $O\left(\sqrt{\frac{\log p }{n}}+\frac{1}{\sqrt{p}}\right)$, which can be viewed as a successful approach to handle the challenge of estimating inverse of covariance matrix when $p>n$, according to \cite{190c5a0d-ed5d-320b-82c0-14df50eada30}. 
\section{Numerical Experiment}
\subsection{Simulation}
\subsubsection{Precision Matrix Estimation}\label{simu: pme}
We consider three models as follows:
\begin{itemize}
    \item[(I)] ${\bf \Omega}=(0.6^{|i-j|})_{1\le i,j\le p}$;
    \item[(II)] The second model comes from \cite{Rothmanetal2008}. We let $\boldsymbol{\Omega}=\mathbf{B}+\delta \mathbf{I}$, where each off-diagonal entry in $\mathbf{B}$ is generated independently and equals 0.5 with probability 0.1 or 0 with probability $0.9 . \delta$ is chosen such that the conditional number (the ratio of maximal and minimal singular values of a matrix) is equal to $p$. Finally, the matrix is standardized to have unit diagonals.
    \item[(III)] ${\bf \Sigma}=(0.6^{|i-j|})_{1\le i,j\le p}$ and ${\bf \Omega}={\bf \Sigma}^{-1}$.
\end{itemize}
Model I has a banded structure, and the values of the entries
decay as they move away from the diagonal. Model II is an example of a sparse matrix without any special sparsity patterns. In Model III, ${\bf \Sigma}$ can be well approximated by
a sparse matrix and the inverse ${\bf \Omega}$ is a 3-sparse matrix.

Three elliptical distributions are considered:
\begin{itemize}
    \item Multivariate normal distribution: $N(0,{\bf\Sigma})$, 
    \item Multivariate t-distribution: $t_3(0,{\bf\Sigma})/\sqrt{3}$, 
    \item Mixed multivariate normal distribution: $MN(0.8,3,{\bf\Sigma})/\sqrt{2.6}$.
\end{itemize}
Here $t_p(0, {\bf \Lambda}, v)$ denotes a $p$-dimensional t-distribution with degrees of freedom $v$ and scatter matrix $\Lambda$. ${MN}(\gamma, \sigma, {\bf \Lambda})$ refers to a mixture multivariate normal distribution with density function $(1-\gamma) f_p(0, {\bf \Lambda}) + \gamma f_p(0, \sigma^2 {\bf \Lambda})$, where $f_p(a, b)$ is the density function of the $p$-dimensional normal distribution with mean $a$ and covariance matrix $b$.
We consider sample size $n=100$ and four different dimensions $p=30,60,90,120$. All the results are based on 100 replications.

The tuning parameter $\lambda_n$ is selected the same as \cite{cai2011constrainedl1minimizationapproach}. For the method CLIME and GLASSO, we generate a training sample of size $n=100$ from the same elliptical distribution with mean $0$ and covariance matrix ${\bf \Sigma}_0$, and an independent sample of size 100 from the same distribution for validating the tuning parameter $\lambda$. Using the training data, we compute a series of estimators with 50 different values of $\lambda$ and use the one with the smallest likelihood loss on the validation sample, where the likelihood loss is defined by
$$
L(\boldsymbol{\Sigma}, \boldsymbol{\Omega})=\langle\boldsymbol{\Omega}, \boldsymbol{\Sigma}_n\rangle-\log \operatorname{det}(\boldsymbol{\Omega})
$$
where ${\bf \Sigma}_n$ is the sample covariance matrix of the validation sample. While for the SCLIME and SGLASSO, we use a similar likelihood loss:
$$
L(\boldsymbol{S}, \boldsymbol{\Omega})=\langle\boldsymbol{\Omega}, p\boldsymbol{S}_n\rangle-\log \operatorname{det}(\boldsymbol{\Omega})
$$
where ${\bf S}_n$ is the sample spatial-sign covariance matrix of the validation sample. 

We first measure the estimation quality by the following matrix norms: the operator norm, the matrix $l_1$ norm, and the Frobenius norm. Table \ref{tab1}-\ref{tab3} report the averages and standard errors of these losses of Model I-III, respectively.
From the three tables, we can observe that the spatial-sign-based estimators, namely SCLIME and SGLASSO, exhibit similar performance to the traditional CLIME and GLASSO methods when the data follows a normal distribution. However, in the case of heavy-tailed distributions, SCLIME and SGLASSO significantly outperform CLIME and GLASSO, demonstrating their superior robustness and ability to handle non-normal data effectively. Additionally, the standard deviation of SCLIME and SGLASSO is smaller than that of CLIME and GLASSO in most cases, indicating that our proposed methods exhibit greater stability.  This suggests that the spatial-sign-based methods provide a more reliable estimation approach in situations where the underlying data deviates from normality, especially in the presence of  heavy-tailed behavior.

\begin{table}[htbp]
\caption{Comparisons of average (standard deviation) matrix losses for three distributions under Model I over 100 replications}
\centering
{\tiny
	
 \setlength{\tabcolsep}{3pt}
 \renewcommand{\arraystretch}{2}
	\begin{tabular}{l|cccc|cccc|cccc} \hline
 &\multicolumn{4}{c}{Normal Distribution} &\multicolumn{4}{c}{$t_3$ Distribution} &\multicolumn{4}{c}{Mixture Normal Distribution}\\ \hline
$p$&CLIME&SCLIME&GLASSO&SGLASSO&CLIME&SCLIME&GLASSO&SGLASSO&CLIME&SCLIME&GLASSO&SGLASSO\\ \hline
 \multicolumn{13}{c}{Frobenius norm}\\ \hline
30&3.24(0.25)&3.37(0.15)&3.50(0.21)&3.54(0.19)&4.26(1.08)&3.38(0.16)&4.37(0.35)&3.55(0.2)&4.18(0.37)&3.4(0.15)&4.31(0.27)&3.54(0.22)\\
60&5.51(0.19)&5.44(0.13)&5.8(0.15)  &5.82(0.13)& 6.92(1.07)&5.45(0.12)&6.77(0.36)  &5.84(0.14)&  6.8(0.36)&5.46(0.13)&6.74(0.23)&5.82(0.14)  \\
90& 7.23(0.17)&7.1(0.13)&7.52(0.13)  &7.54(0.08)& 8.33(2.04)&7.12(0.13)&8.55(0.34)  &7.54(0.08) &8.53(0.45)&7.11(0.13)&8.51(0.24) &7.54(0.08) \\
120 & 8.7(0.15)&8.53(0.12)&8.95(0.1)  &8.96(0.11)&  10.46(0.8)&8.55(0.12)&10.03(0.4)&8.95(0.11)  &  10.2(0.43)&8.54(0.1)&10.02(0.23) &8.97(0.12) \\ \hline
 \multicolumn{13}{c}{Matrix $l_1$-norm}\\ \hline
30&2.66(0.17)&2.82(0.17)&2.85(0.15)&2.87(0.14)&3.12(0.69)&2.78(0.15)&3.37(0.22) & 2.86(0.12)&3.21(0.17)&2.81(0.14)&3.28(0.16)&2.86(0.14) \\
60&3.07(0.12)&3.19(0.15)&3.37(0.14) &3.41(0.13) & 3.53(0.41)&3.17(0.14)&3.92(0.29) &3.37(0.12)& 3.52(0.13)&3.16(0.15)&3.79(0.22)&3.41(0.12)   \\
90& 3.21(0.12)&3.28(0.14)&3.6(0.15) &3.58(0.11)&3.47(0.83)&3.3(0.15)&4.19(0.31)  &3.6(0.13)& 3.67(0.15)&3.29(0.16)&4.08(0.26)&3.58(0.1)  \\
120 &  3.3(0.13)&3.35(0.13)&3.7(0.16) &3.72(0.17) & 3.69(0.17)&3.33(0.11)&4.42(0.32)  &3.72(0.16)& 3.72(0.16)&3.32(0.12)&4.31(0.29) &3.7(0.16)\\ \hline
   \multicolumn{13}{c}{Operator norm}\\ \hline
30&1.84(0.16)&1.88(0.10)&1.95(0.13)&1.98(0.11)&2.28(0.54)&1.9(0.11)&2.31(0.2)&1.99(0.12)&2.25(0.21)&1.91(0.1)&2.29(0.18)&1.98(0.13)\\
60&2.25(0.08)&2.2(0.07)&2.3(0.07) &2.31(0.06) & 2.63(0.35)&2.21(0.05)&2.56(0.12)  &2.32(0.06) &2.62(0.12)&2.22(0.06)&2.56(0.1) &2.31(0.07) \\
90& 2.39(0.05)&2.35(0.05)&2.43(0.05) &2.44(0.03) & 2.61(0.62)&2.35(0.05)&2.65(0.09) & 2.44(0.04)&  2.69(0.11)&2.35(0.04)&2.64(0.08) &2.43(0.03) \\
120 & 2.48(0.04)&2.43(0.04)&2.5(0.04) &2.5(0.04) &  2.81(0.14)&2.44(0.04)&2.69(0.09) & 2.5(0.04)& 2.78(0.09)&2.44(0.03)&2.69(0.07) &2.5(0.04)\\ \hline \hline
	\end{tabular}}
	\label{tab1}
\end{table}

\begin{table}[htbp]
\caption{Comparisons of average (standard deviation) matrix losses for three distributions under Model II over 100 replications}
	\centering
 {\tiny 
 \setlength{\tabcolsep}{3pt}
 \renewcommand{\arraystretch}{2}
	\begin{tabular}{l|cccc|cccc|cccc} \hline
 &\multicolumn{4}{c}{Normal Distribution} &\multicolumn{4}{c}{$t_3$ Distribution} &\multicolumn{4}{c}{Mixture Normal Distribution}\\ \hline
 $p$&CLIME&SCLIME&GLASSO&SGLASSO&CLIME&SCLIME&GLASSO&SGLASSO&CLIME&SCLIME&GLASSO&SGLASSO\\ \hline
 \multicolumn{13}{c}{Frobenius norm}\\ \hline
30&1.61(0.13)&1.65(0.17)&1.53(0.1)&1.75(0.13)&2.39(0.47)&1.84(0.14)&2.24(0.26)&1.77(0.11)&2.28(0.16)&1.84(0.15)&2.14(0.15)&1.74(0.13)\\
60&3.34(0.17)&3.18(0.12)&2.85(0.08)&2.87(0.09)&3.97(0.52)&3.18(0.13)&3.56(0.2)&2.84(0.1)&4.01(0.17)&3.2(0.13)&3.56(0.12) &2.86(0.09) \\
90& 4.37(0.11)&4.24(0.08)&3.87(0.07) &4.05(0.07) &5.22(0.47)&4.25(0.08)&4.63(0.17)&4.06(0.08) &  5.21(0.15)&4.24(0.08)&4.65(0.1)&4.06(0.07) \\
120 & 5.58(0.1)&5.36(0.07)&4.88(0.07)&4.93(0.06) &  6.16(1.31)&5.36(0.07)&5.58(0.13) & 4.94(0.07)&  6.39(0.16)&5.35(0.07)&5.67(0.07)&4.94(0.07) \\ \hline
 \multicolumn{13}{c}{Matrix $l_1$-norm}\\ \hline
30&1.28(0.2)&1.39(0.29)&1.37(0.19)&1.64(0.21)&1.85(0.35)&1.67(0.23)&2.01(0.35)&1.62(0.18)&1.75(0.21)&1.67(0.25)&1.87(0.24) &1.63(0.21)\\
60&2.43(0.23)&2.42(0.23)&2.21(0.18)&2.38(0.21)&2.73(0.36)&2.45(0.24)&2.77(0.29)&2.37(0.2)&2.79(0.17)&2.46(0.22)&2.74(0.22)&2.38(0.21)\\
90&2.49(0.16)&2.6(0.18)&2.59(0.16)&2.85(0.19)& 2.9(0.2)&2.65(0.2)&3.09(0.29) &2.83(0.18)&2.91(0.16)&2.62(0.21)&3.05(0.22)&2.86(0.17)\\
120 &  3.3(0.21)&3.2(0.2)&3.19(0.19) &3.21(0.19) & 3.53(0.75)&3.2(0.2)&3.66(0.28)&3.26(0.2)  &  3.68(0.18)&3.2(0.2)&3.68(0.22)&3.23(0.19)  \\ \hline
   \multicolumn{13}{c}{Operator norm}\\ \hline
30&0.73(0.12)&0.75(0.09)&0.7(0.1)&0.77(0.08)&1.04(0.22)&0.74(0.08)&0.97(0.12)&0.78(0.09)&0.99(0.13)&0.74(0.08)&0.92(0.1)&0.78(0.08)\\
60&1.22(0.1)&1.01(0.06)&0.95(0.05)&0.99(0.06)&1.41(0.2)&1.02(0.06)&1.2(0.08)&1(0.06)&1.42(0.09)&1.01(0.05)&1.19(0.07)&1(0.05)  \\
90& 1.35(0.06)&1.25(0.04)&1.14(0.03) &1.27(0.04) & 1.63(0.16)&1.26(0.05)&1.41(0.07) &1.27(0.05) &  1.62(0.08)&1.25(0.04)&1.4(0.06) &1.27(0.04) \\
120 & 1.75(0.06)&1.63(0.04)&1.45(0.03) &1.47(0.04) & 1.94(0.41)&1.63(0.04)&1.7(0.06) &1.47(0.03) &  1.99(0.08)&1.63(0.04)&1.73(0.05)&1.47(0.03)
\\ \hline \hline
	\end{tabular}}
	\label{tab2}
\end{table}

\begin{table}[htbp]
\caption{Comparison of average (standard deviation) matrix losses for three distributions under Model III over 100 replications} 
\centering
{\tiny
	
 \setlength{\tabcolsep}{3pt}
 \renewcommand{\arraystretch}{2}
	\begin{tabular}{l|cccc|cccc|cccc} \hline
 &\multicolumn{4}{c}{Normal Distribution} &\multicolumn{4}{c}{$t_3$ Distribution} &\multicolumn{4}{c}{Mixture Normal Distribution}\\ \hline
 $p$&\tiny CLIME&\tiny SCLIME&\tiny GLASSO&\tiny SGLASSO&\tiny CLIME&\tiny SCLIME&\tiny GLASSO&\tiny SGLASSO&\tiny CLIME&\tiny SCLIME&\tiny GLASSO&\tiny SGLASSO\\ \hline
 \multicolumn{13}{c}{Frobenius norm}\\ \hline
30&1.5(0.14)&1.59(0.15)&1.36(0.11)&1.45(0.13)&2.02(0.32)&1.72(0.15)&1.98(0.27)&1.45(0.12)&2.02(0.19)&1.7(0.14)&1.94(0.19)&1.44(0.12)\\
60&2.16(0.14)&2.27(0.14)&2.04(0.1)&2.09(0.11)&2.77(0.36)&2.24(0.13)&2.9(0.27)&2.09(0.1)&2.83(0.16)&2.25(0.13)&2.87(0.15) &2.1(0.11) \\
90&2.67(0.12)&2.73(0.12)&2.6(0.11)&2.64(0.11)&3.5(0.54)&2.72(0.11)&3.58(0.24)&2.63(0.09)&3.45(0.19)&2.73(0.11)&3.59(0.13)&2.64(0.12)\\
120 & 3.1(0.11)&3.16(0.1)&3.08(0.1) &3.11(0.11) &  4.06(0.71)&3.13(0.11)&4.23(0.24)&3.1(0.11)  & 4.01(0.22)&3.16(0.11)&4.26(0.13) &3.11(0.1) \\ \hline
 \multicolumn{13}{c}{Matrix $l_1$-norm}\\ \hline
30&1.2(0.25)&1.43(0.26)&1.17(0.16)&1.12(0.19)&1.6(0.39)&1.46(0.25)&1.64(0.4)&1.14(0.19)&1.52(0.39)&1.44(0.24)&1.5(0.28) &1.11(0.17)\\
60&1.23(0.21)&1.35(0.23)&1.3(0.17)&1.29(0.15)&1.57(0.4)&1.32(0.19)&1.96(0.4)&1.3(0.16)&1.52(0.26)&1.31(0.21)&1.78(0.26)&1.31(0.14)\\
90&1.22(0.17)&1.26(0.16)&1.42(0.13)&1.42(0.16)&1.56(0.28)&1.24(0.15)&2.13(0.38)&1.42(0.16)&1.48(0.23)&1.25(0.14)&1.92(0.26&1.43(0.17))\\
120 & 1.18(0.13)&1.18(0.12)&1.52(0.12) &1.51(0.13) & 1.51(0.29)&1.19(0.16)&2.23(0.36) &1.52(0.13) &  1.47(0.19)&1.17(0.12)&2.1(0.29) &1.51(0.12) \\ \hline
   \multicolumn{13}{c}{Operator norm}\\ \hline
30&0.57(0.07)&0.62(0.06)&0.51(0.06)&0.54(0.06)&0.78(0.14)&0.63(0.06)&0.79(0.15)&0.54(0.07)&0.75(0.08)&0.62(0.06)&0.71(0.09)&0.54(0.06)\\
60&0.6(0.05)&0.62(0.05)&0.56(0.05)&0.57(0.05)&0.76(0.11)&0.61(0.05)&0.84(0.12)&0.56(0.04)&0.77(0.06)&0.61(0.05)&0.76(0.07) &0.57(0.04) \\
90&0.61(0.05)&0.62(0.05)&0.57(0.04)&0.58(0.04)&0.79(0.11)&0.61(0.04)&0.86(0.1)&0.58(0.03)&0.79(0.07)&0.62(0.05)&0.79(0.05)&0.58(0.04)
   \\
120 & 0.63(0.06)&0.65(0.06)&0.59(0.04) &0.6(0.04) & 0.79(0.17)&0.63(0.06)&0.88(0.09)&0.59(0.03)  &  0.8(0.06)&0.64(0.06)&0.81(0.05)&0.6(0.03) \\ \hline \hline
	\end{tabular}}
	\label{tab3}
\end{table}

\subsubsection{Graphical Model Recovery}
For simulation study, we adopt the same setting as in \cite{10.1214/12-AOS1037}. To generate a $p$-dimensional sparse graph $G = (V, E)$ where $V = \{1,\cdots, p \} $ correspond to variables ${\bm X} = (X_1, . . . , X_p)\in \R^p$, we associate each index $j \in {1, . . . , p}$ with a bivariate data point $(Y^{(1)}_j , Y^{(2)}_j ) \in [0, 1]^2$ where $Y^{(k)}_1 , \cdots, Y^{(k)}_p\sim \Unif[0, 1]$ for $k = 1, 2$. Each pair of vertices $(i, j)$ is included in the edge set $E$ with probability $\P((i, j) \in E) = \exp(-\|y_i-y_j\|_2^2/0.25)/\sqrt{2\pi}$, where $y_i \triangleq (y^{(1)}_i , y^{(2)}_i )$ is the empirical observation of $(Y^{(1)}_i , Y^{(2)}_i )$. We restrict the maximum degree of the graph to be 4 and and build the inverse correlation matrix $\bm{\Omega }$ according to $\bm{\Omega}_{jk} = 1$ if $j = k$, $\bm{\Omega}_{jk} = 0.145$ if $(j, k) \in E$, and $\bm{\Omega}_{jk} = 0$ otherwise. Let $\bm{\Sigma}=\bm{\Omega}^{-1}$. We randomly sample $n$ data points from three different typical elliptical distributions with covariance matrix $\bm{\Sigma}$.
\begin{itemize}
    \item Multivariate normal distribution: $N(0,\bm{\Sigma})$, 
    \item Multivariate t-distribution: $t_3(0,\bm{\Sigma})/\sqrt{3}$, 
    \item Mixed multivariate normal distribution: $MN(0.8,3,\bm{\Sigma})/\sqrt{2.6}$.
\end{itemize}

We consider sample size $n=400$, dimension $p=100$. We repeatedly generate the data matrix $X$ for 100 times and compute the averaged True Positive Rates and False Positive Rates using a path of tuning parameters $\lambda$ from 0.005 to 1 and the tuning parameter $\tau_n$ is set to be $10^{-5}$ since slight difference can be found with different $\tau_n$. The recovery performance are evaluated by plotting the corresponding ROC curves, which are presented in Figure \ref{fig:schemes}.

Similarly, we observe that, in the case of multivariate normal distributions, the spatial-sign-based methods and classical methods perform similarly in terms of recovery. However, when dealing with heavy-tailed distributions, the spatial-sign-based methods significantly outperform the classical methods, demonstrating their superior robustness and ability to handle non-normal data distributions effectively. This highlights the advantages of incorporating spatial-sign covariance matrices in high-dimensional estimation problems, particularly in the presence of heavy-tailed distribution.
\begin{figure}[htbp]
    \caption{ Comparisons of different schemes. Each subfigure represents the ROC curves for a specific scheme.}
    \centering
    \begin{subfigure}[b]{0.45\textwidth}
        \includegraphics[width=\textwidth]{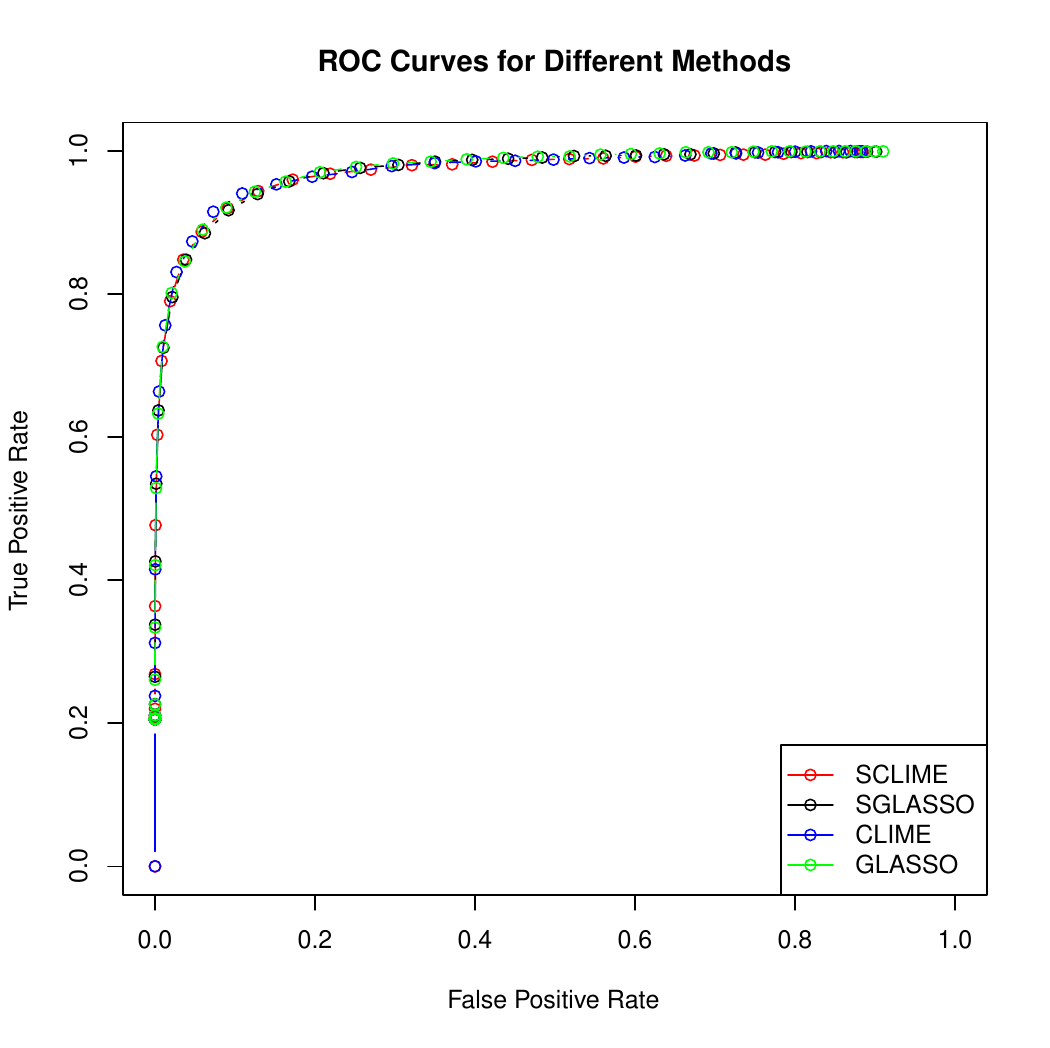}
        \caption{Normal Distribution}
        \label{fig:n}
    \end{subfigure}
    \begin{subfigure}[b]{0.45\textwidth}
        \includegraphics[width=\textwidth]{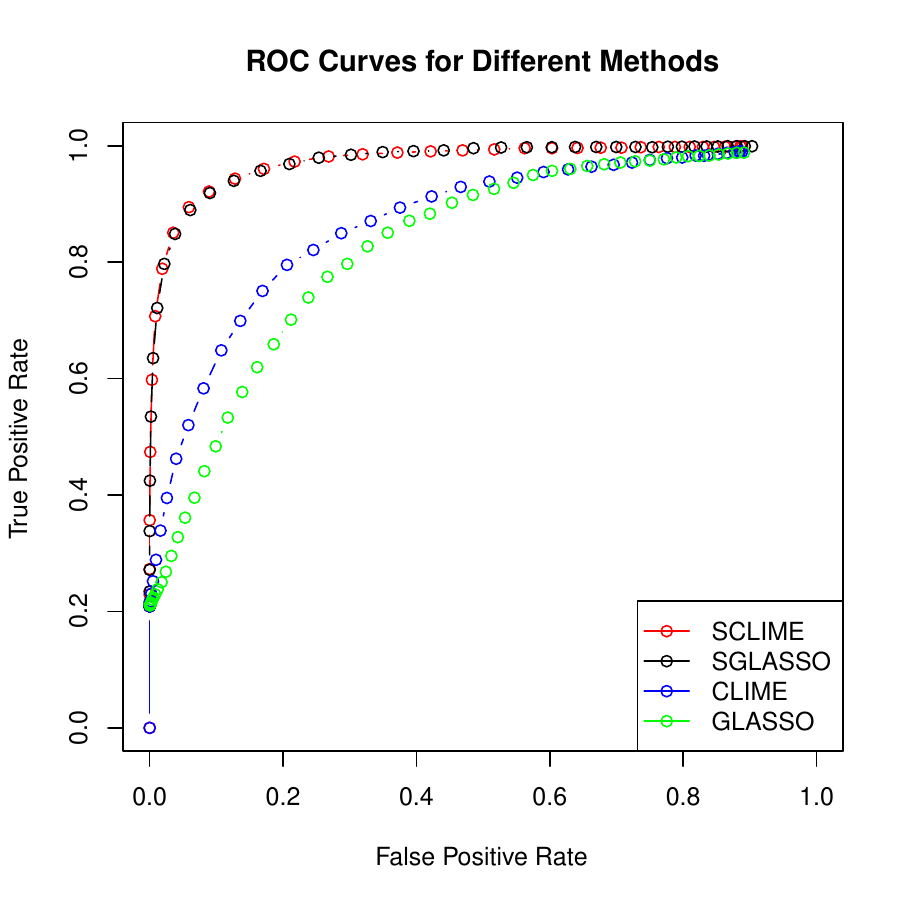}
        \caption{Multivariate $t_3$ Distribution}
        \label{fig:t}
    \end{subfigure}
    \begin{subfigure}[b]{0.45\textwidth}
        \includegraphics[width=\textwidth]{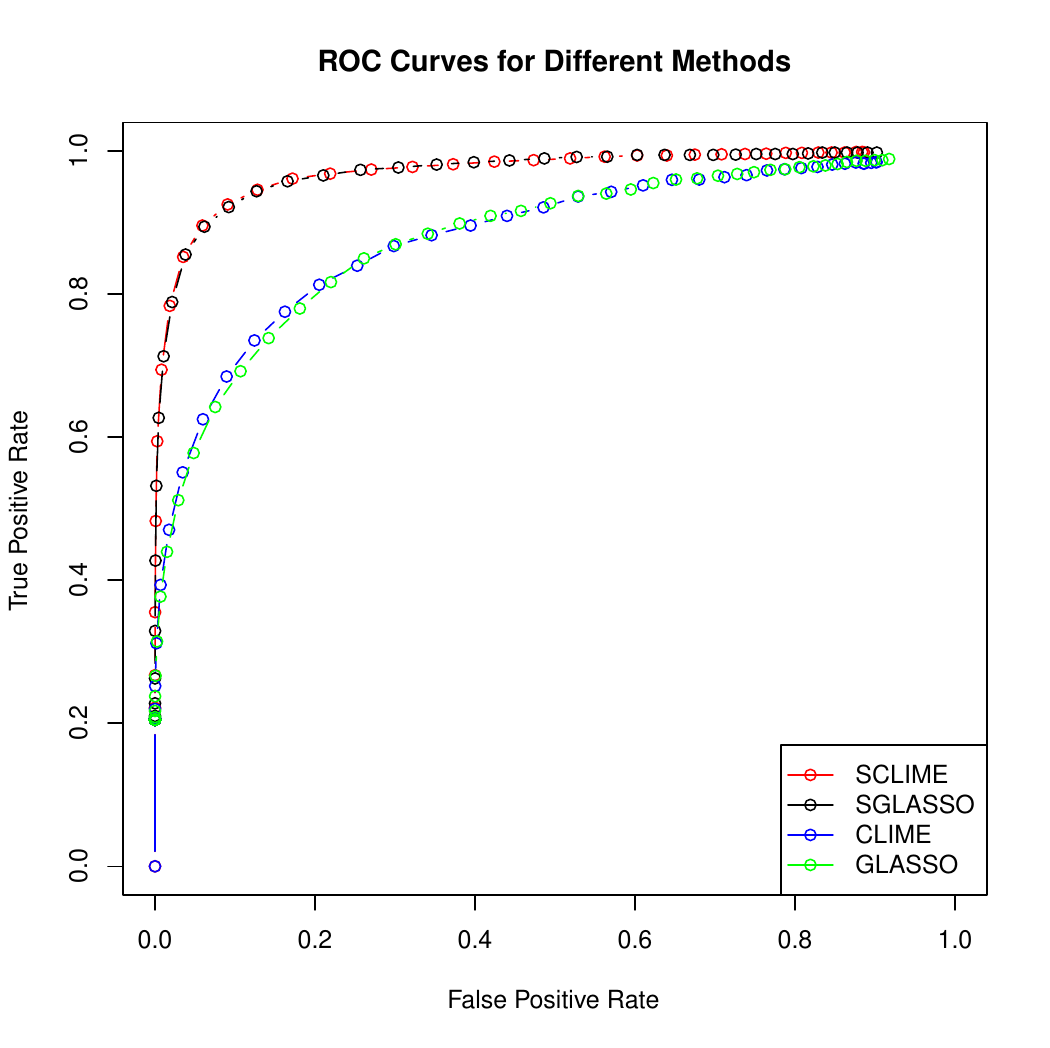}
        \caption{Mixture Normal Distribution}
        \label{fig:m}
    \end{subfigure}
    \label{fig:schemes}
\end{figure}
\subsubsection{Linear Discriminant Analysis}
For simulation study, we follow the classical data generating process of Linear Discriminant Analysis. For each data point pair $({\bm X},Y)$, we first generate $Y$ from $\text{Bernoulli}(1,p_1)$, and for $Y=k, k=0,1$, we generate ${\bm X}$ from ${\bm X}|Y=k\sim EC_p({\bm \mu}_k, {\bf\Sigma}, r^*), k =0, 1$. Specifically, for mean vectors, without loss of generality, we set ${\bm \mu}_0=0$ and ${\bm\mu}_1=[a\bm{1}_s,\0_{p-s}]$, where $s$ is the parameter for sparsity control and $a$ represents the distance between two different distributions, aligning with the assumption for ${\bm\mu}_d$ mentioned in Theorem \ref{LDATheorem} and \ref{LDATheoremg}. For covariance matrix, we consider Model I and II used in the previous simulation, see section \ref{simu: pme}, to set $\bf{\Sigma}=\bf{\Omega}^{-1}$. Similarly, we consider three classical elliptical distributions with  different  types of $r^*$ as before.
\begin{itemize}
    \item Multivariate normal distribution: $N(0,\bm{\Sigma})$, 
    \item Multivariate t-distribution: $t_3(0,\bm{\Sigma})/\sqrt{3}$, 
    \item Mixed multivariate normal distribution: $MN(0.8,3,\bm{\Sigma})/\sqrt{2.6}$.
\end{itemize}

We first randomly sample $n$ data points as training data and another independent sample of size $n$ as validation data. The tuning parameter $\lambda_n$ is selected based on misclassification rate. To be more specific, we compute a series of estimators with 50 different values of $\lambda_n$ and use the one with the smallest misclassification rate, which is defined by 
\[
    \text{Misclassificaton Rate}=\frac{\#\{i: \hat Y_i\neq Y_i\}}{n}
\]
 where $\hat Y_i =\1\{({\bm X}_i-\hat{\bm\mu}^k_a)^\sT \hat {\bf V}_{k}\hat {\bm \mu}^k_d>0\}, k=\text{CLIME, SCLIME, GLASSO, SGLASSO}$. Notably, for $\hat{\bm \mu}^{k}_d$ and $\hat{\bm\mu}^{k}_a$, we use spatial median estimator as estimators for spatial-sign based methods and naive mean estimator for other classical methods. Finally, we independently generate a test sample of sample size $n$ to test the performance of those estimators.
 
 We consider sample size $n=200$ and $p_1=0.5$ for four different dimensions $p=30, 60, 90, 120$.  For all dimensions, we set the sparsity level $s=10$ and distance level $a=0.05$. All the results are based on 100 replications. The classification performance is evaluated by specificity, sensitivity, and Mathews correlation coefficient (MCC) criteria, defined as follows.
 \[
\text{Specificity} = \frac{\text{TN}}{\text{TN} + \text{FP}}, \quad
\text{Sensitivity} = \frac{\text{TP}}{\text{TP} + \text{FN}},
\]

\[
\text{MCC} = \frac{\text{TP} \times \text{TN} - \text{FP} \times \text{FN}}
{\sqrt{(\text{TP} + \text{FP})(\text{TP} + \text{FN})(\text{TN} + \text{FP})(\text{TN} + \text{FN})}}.
\]
Here TP and TN stand for true positives ($Y=1$) and true negatives ($Y=0$), respectively, and FP and FN stand for false positives/negatives. The larger the criterion value, the better the classification performance. The averages and standard errors of the foregoing criteria are reported in Table \ref{tab4}. 

Under the multivariate normal distribution, spatial-sign-based methods achieve similar performance as classical methods, with no significant differences observed. However, when applied to non-normal distributions, our proposed SCLIME and SGLASSO methods consistently outperform the corresponding classical methods, particularly in terms of the MCC index. This indicates that our methods are more effective at handling the complexities of non-normal data. Therefore, our proposed methods are highly preferable in practical applications, especially when the underlying distribution exhibits heavy-tailed properties, where traditional methods may struggle. The robustness of SCLIME and SGLASSO makes them a valuable tool for high-dimensional statistical inference in the presence of heavy-tailed distribution.

\begin{table}[htbp]
\caption{ Comparisons of average (standard deviation) classification errors for different distributions under Model I}
\centering
{\tiny
 \setlength{\tabcolsep}{3pt}
 \renewcommand{\arraystretch}{2}
	\begin{tabular}{l|cccc|cccc|cccc} \hline
 &\multicolumn{4}{c}{Normal Distribution} &\multicolumn{4}{c}{$t_3$ Distribution} &\multicolumn{4}{c}{Mixture Normal Distribution}\\ \hline
$p$&CLIME&SCLIME&GLASSO&SGLASSO&CLIME&SCLIME&GLASSO&SGLASSO&CLIME&SCLIME&GLASSO&SGLASSO\\ \hline
 \multicolumn{13}{c}{Specificity}\\ \hline
30&0.892(0.04)&0.885(0.04)&0.898(0.04)&0.889(0.04)&0.773(0.10)&0.830(0.05)&0.821(0.05)&0.842(0.04)&0.868(0.06)&0.854(0.05)&0.872(0.05)&0.864(0.04)\\
60&0.868(0.04)&0.856(0.04)&0.866(0.05)&0.862(0.04)&0.741(0.09)&0.809(0.05)&0.798(0.06)&0.817(0.05)&0.823(0.07)&0.826(0.05)&0.813(0.06)&0.835(0.05)\\
90&0.846(0.01)&0.834(0.01)&0.845(0.01)&0.843(0.01)&0.686(0.01)&0.777(0.01)&0.759(0.01)&0.783(0.01)&0.804(0.07)&0.808(0.05)&0.781(0.06)&0.808(0.05)\\
120&0.817(0.06)&0.8(0.06)&0.804(0.06)&0.799(0.06)&0.675(0.08)&0.766(0.03)&0.755(0.05)&0.779(0.03)&0.786(0.04)&0.77(0.05)&0.763(0.03)&0.773(0.04)\\ \hline
 \multicolumn{13}{c}{Sensitivity}\\ \hline
30&0.889(0.04)&0.885(0.04)&0.900(0.04)&0.893(0.03)&0.780(0.13)&0.836(0.05)&0.83(0.05)&0.843(0.04)&0.807(0.07)&0.866(0.04)&0.823(0.06)&0.875(0.04)\\
60&0.864(0.04)&0.852(0.04)&0.866(0.04)&0.859(0.04)&0.731(0.12)&0.807(0.05)&0.795(0.05)&0.817(0.04)&0.782(0.07)&0.841(0.05)&0.796(0.06)&0.847(0.05)\\
90&0.842(0.01)&0.831(0.01)&0.843(0.01)&0.840(0.01)&0.704(0.02)&0.792(0.01)&0.781(0.01)&0.805(0.01)&0.752(0.08)&0.809(0.06)&0.763(0.07)&0.813(0.06)\\
120&0.822(0.05)&0.8(0.05)&0.822(0.04)&0.809(0.04)&0.693(0.07)&0.75(0.06)&0.737(0.05)&0.772(0.06)&0.707(0.03)&0.773(0.03)&0.735(0.03)&0.769(0.04)\\ \hline
 \multicolumn{13}{c}{MCC}\\ \hline
30&0.784(0.04)&0.772(0.05)&0.800(0.04)&0.783(0.04)&0.570(0.10)&0.669(0.05)&0.654(0.05)&0.687(0.05)&0.682(0.06)&0.723(0.05)&0.699(0.06)&0.741(0.05)\\
60&0.734(0.05)&0.710(0.05)&0.734(0.06)&0.722(0.06)&0.481(0.11)&0.619(0.06)&0.595(0.06)&0.636(0.06)&0.612(0.07)&0.671(0.06)&0.613(0.07)&0.685(0.05)\\
90&0.690(0.06)&0.666(0.07)&0.689(0.07)&0.683(0.07)&0.395(0.13)&0.571(0.07)&0.541(0.07)&0.589(0.07)&0.562(0.07)&0.620(0.06)&0.548(0.07)&0.623(0.06)\\
120 &0.64(0.06)&0.602(0.05)&0.628(0.04)&0.61(0.05)&0.37(0.09)&0.517(0.06)&0.493(0.05)&0.551(0.06)&0.495(0.04)&0.543(0.05)&0.498(0.05)&0.542(0.05)\\ \hline \hline
	\end{tabular}}
	\label{tab4}
\end{table}
\begin{table}[htbp]
\caption{ Comparisons of average (standard deviation)  classification errors for different distributions under Model II}
\centering
{\tiny
 \setlength{\tabcolsep}{3pt}
 \renewcommand{\arraystretch}{2}
	\begin{tabular}{l|cccc|cccc|cccc} \hline
 &\multicolumn{4}{c}{Normal Distribution} &\multicolumn{4}{c}{$t_3$ Distribution} &\multicolumn{4}{c}{Mixture Normal Distribution}\\ \hline
$p$&CLIME&SCLIME&GLASSO&SGLASSO&CLIME&SCLIME&GLASSO&SGLASSO&CLIME&SCLIME&GLASSO&SGLASSO\\ \hline
 \multicolumn{13}{c}{Specificity}\\ \hline
30 &0.771(0.06)&0.777(0.05)&0.781(0.05)&0.782(0.05)&0.694(0.12)&0.739(0.08)&0.721(0.08)&0.744(0.07)&0.778(0.08)&0.763(0.07)&0.791(0.07)&0.767(0.06)\\
60 &0.684(0.08)&0.734(0.05)&0.746(0.06)&0.741(0.05)&0.623(0.11)&0.701(0.07)&0.685(0.07)&0.708(0.07)&0.698(0.09)&0.728(0.07)&0.733(0.08)&0.737(0.07)\\
90 &0.635(0.07)&0.698(0.06)&0.711(0.06)&0.707(0.06)&0.604(0.14)&0.687(0.07)&0.673(0.07)&0.693(0.08)&0.637(0.09)&0.691(0.06)&0.695(0.07)&0.700(0.06)\\
120&0.619(0.02)&0.686(0.01)&0.674(0.01)&0.692(0.01)&0.570(0.11)&0.651(0.06)&0.638(0.07)&0.662(0.06)&0.594(0.01)&0.657(0.01)&0.676(0.01)&0.698(0.01)\\ \hline
 \multicolumn{13}{c}{Sensitivity}\\ \hline
30&0.760(0.06)&0.770(0.05)&0.780(0.05)&0.776(0.05)&0.669(0.15)&0.743(0.08)&0.726(0.07)&0.748(0.07)&0.692(0.09)&0.761(0.07)&0.715(0.08)&0.770(0.07)\\
60&0.692(0.07)&0.735(0.06)&0.746(0.06)&0.743(0.06)&0.624(0.12)&0.710(0.07)&0.692(0.07)&0.713(0.06)&0.633(0.09)&0.720(0.07)&0.686(0.08)&0.725(0.07)\\
90&0.651(0.07)&0.702(0.06)&0.711(0.06)&0.709(0.06)&0.571(0.14)&0.672(0.06)&0.666(0.06)&0.680(0.07)&0.612(0.09)&0.698(0.06)&0.680(0.07)&0.713(0.07)\\
120&0.59(0.02)&0.669(0.01)&0.681(0.01)&0.679(0.01)&0.576(0.12)&0.655(0.06)&0.650(0.07)&0.664(0.07)&0.624(0.01)&0.696(0.01)&0.676(0.01)&0.718(0.01)\\ \hline
 \multicolumn{13}{c}{MCC}\\ \hline
30&0.532(0.08)&0.547(0.07)&0.563(0.07)&0.559(0.07)&0.377(0.11)&0.485(0.08)&0.451(0.09)&0.495(0.08)&0.477(0.07)&0.526(0.06)&0.511(0.07)&0.539(0.07)\\
60&0.377(0.10)&0.471(0.08)&0.494(0.07)&0.485(0.07)&0.252(0.10)&0.412(0.07)&0.379(0.08)&0.422(0.07)&0.335(0.09)&0.450(0.07)&0.422(0.08)&0.463(0.07)\\
90&0.286(0.09)&0.401(0.07)&0.424(0.07)&0.417(0.07)&0.179(0.09)&0.361(0.08)&0.341(0.07)&0.374(0.08)&0.251(0.11)&0.391(0.07)&0.377(0.07)&0.414(0.07)\\
120&0.214(0.09)&0.356(0.08)&0.356(0.07)&0.371(0.07)&0.149(0.09)&0.308(0.07)&0.289(0.07)&0.327(0.07)&0.219(0.09)&0.364(0.07)&0.353(0.07)&0.396(0.07)\\ \hline \hline
	\end{tabular}}
	\label{tab5}
\end{table}
\subsection{Real Data Analysis}
\subsubsection{Analysis of Equities Data}
In this section, we apply four methodologies, i.e. CLIME, SCLIME, GLASSO, SGLASSO, to analyze graphical model of stock data, specifically the Standard $\& $ Poor's 500 (S$\& $P 500) index. We collect the daily closing prices for $p=470$ stocks that are consistently in the S$\&$P 500 index between January 1, 2015 through December 31, 2019 from Yahoo! Finance through \texttt{yfinance} in \texttt{Python}. This gives us $n=1256$ data points in total, each of which corresponding to the vector of closing prices on a trading day. Denote the daily closing price corresponding to day $t$ and stock $j$ by $P_{t,j}$, we consider build the graph over indices $j$ based on daily return $X_{t,j}=\log(P_{t,j}/P_{t-1,j})$. 

To further compare the robustness of different methods, besides processing clean data, we additionally add some random noise to the data and process it to further validate the robustness of spatial-sign-based methods. Specifically, let $r \in [0, 1)$ represents the proportion of samples being contaminated. For each stock, we randomly select $\lceil nr \rceil$ entries and replace them with either $a$ or $-a$ with equal probability, where $a$ represents some extreme values that can viewed as outliers. We choose $a =0.13$ and $r=0.005$ for contamination. All the inputs of different methods go through covariance to correlation transformation. The tuning parameter is automatically selected using a stability based approach named StARS proposed in \cite{liu2010stability}. 
\begin{figure}[htbp]
    \caption{ Comparison of different methods for clean stock graphical model. Each subfigure represents the model  for a specific method.}
    \centering
    \begin{subfigure}[b]{0.45\textwidth}
        \includegraphics[width=\textwidth]{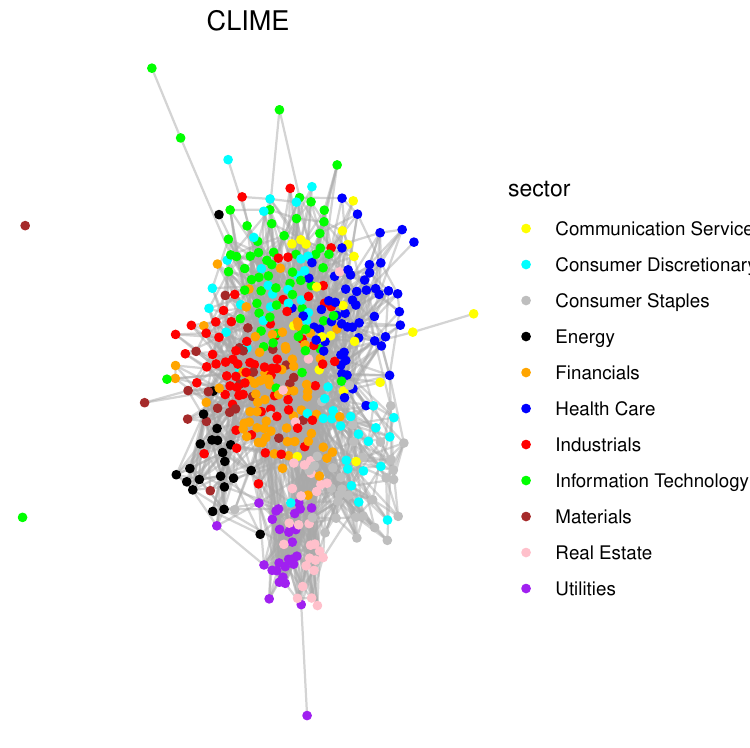}
        \caption{CLIME}
        
    \end{subfigure}
    \begin{subfigure}[b]{0.45\textwidth}
        \includegraphics[width=\textwidth]{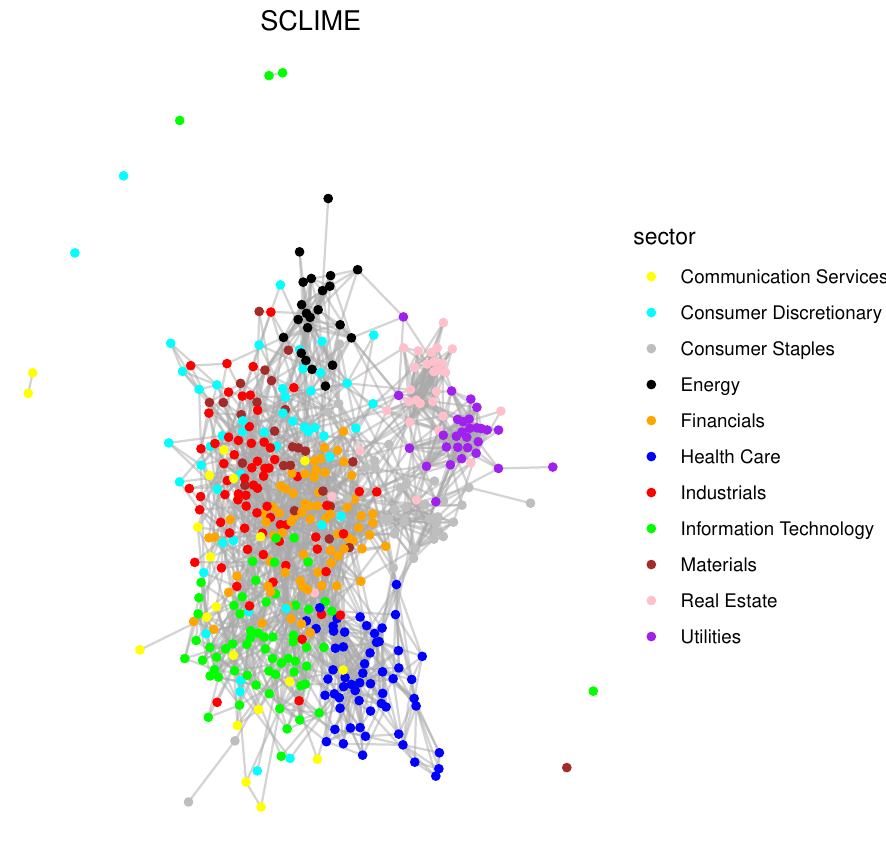}
        \caption{SCLIME}

    \end{subfigure}
    \begin{subfigure}[b]{0.45\textwidth}
        \includegraphics[width=\textwidth]{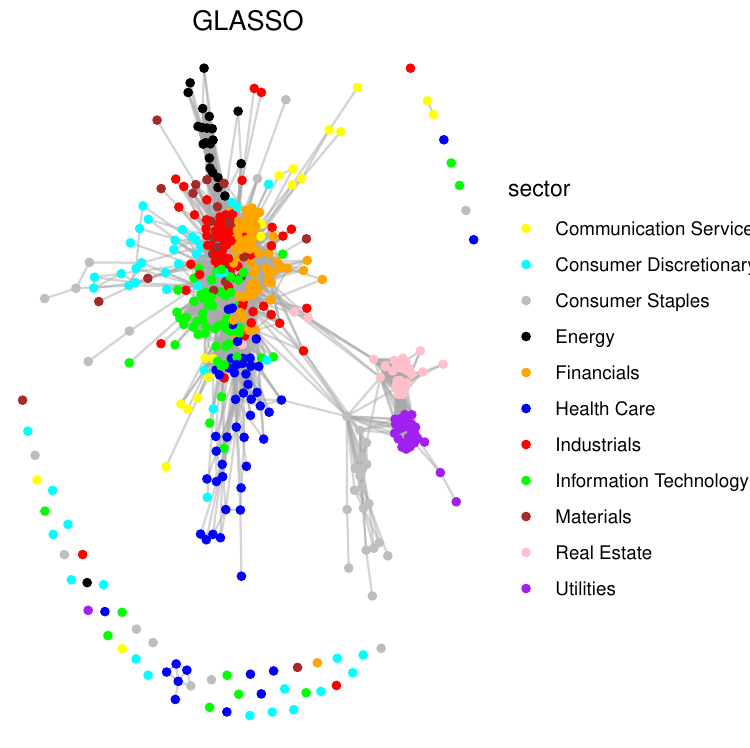}
        \caption{GLASSO}
    \end{subfigure}
    \begin{subfigure}[b]{0.45\textwidth}
        \includegraphics[width=\textwidth]{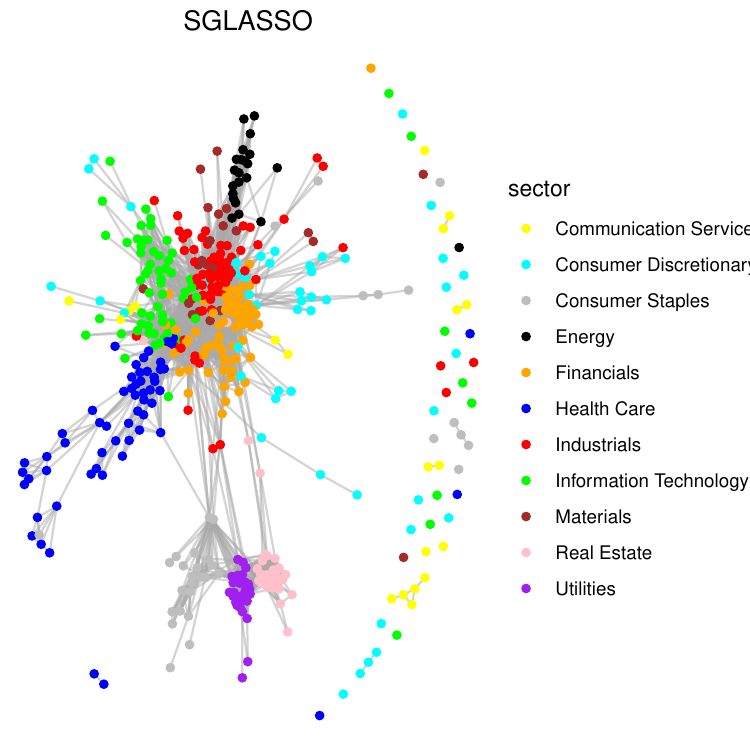}
        \caption{SGLASSO}
    \end{subfigure}
    \label{fig:gmstock}
\end{figure}
\begin{figure}[htbp]
    \caption{Comparisons of different methods for contaminated stock graphical model. Each subfigure represents the model  for a specific method.}
    \centering
    \begin{subfigure}[b]{0.45\textwidth}
        \includegraphics[width=\textwidth]{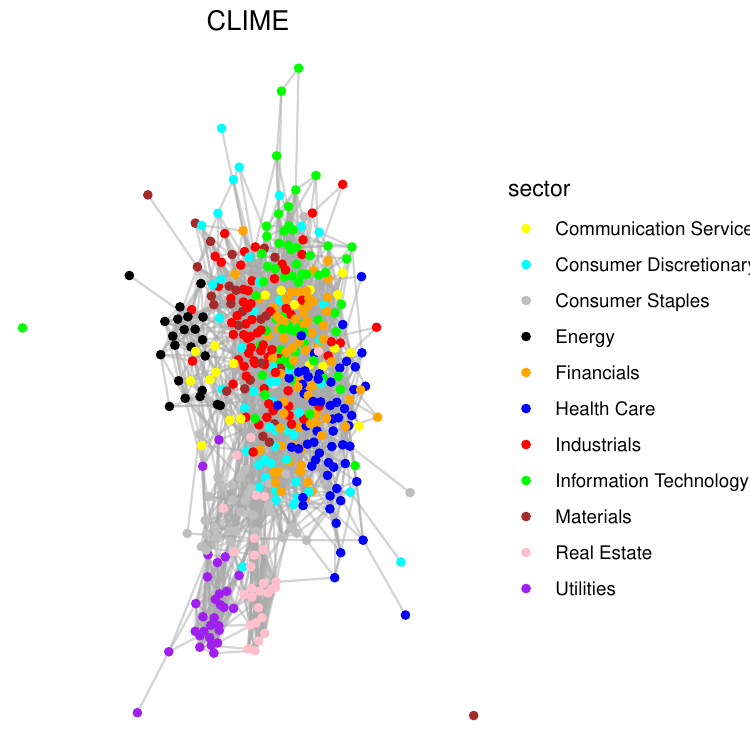}
        \caption{CLIME}
        
    \end{subfigure}
    \begin{subfigure}[b]{0.45\textwidth}
        \includegraphics[width=\textwidth]{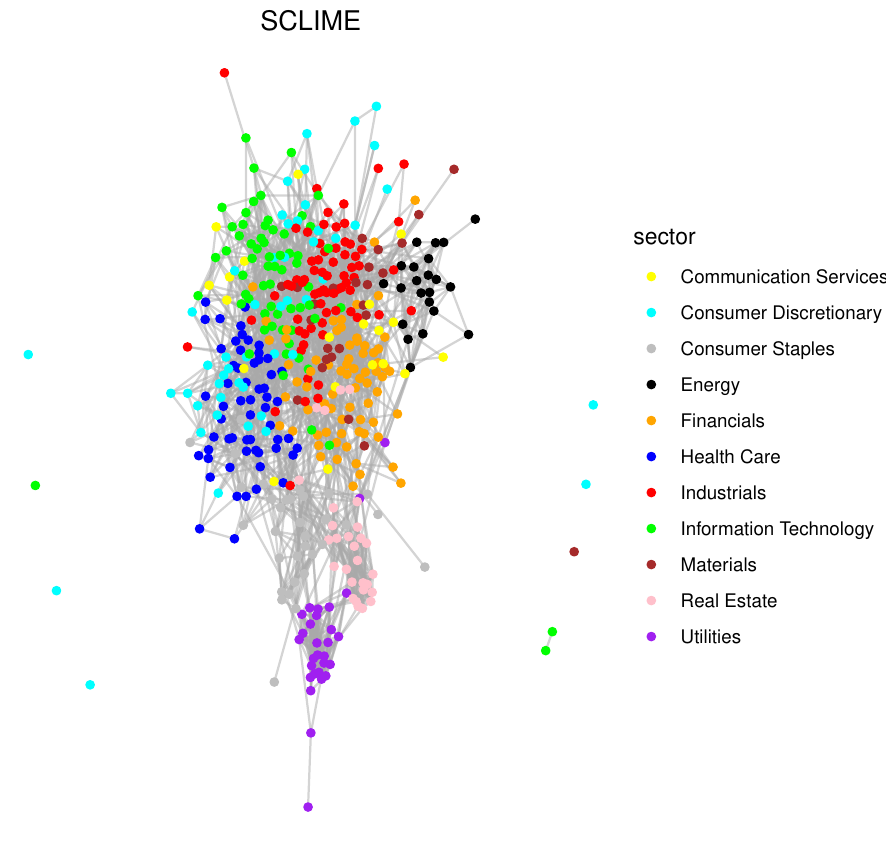}
        \caption{SCLIME}

    \end{subfigure}
    \begin{subfigure}[b]{0.45\textwidth}
        \includegraphics[width=\textwidth]{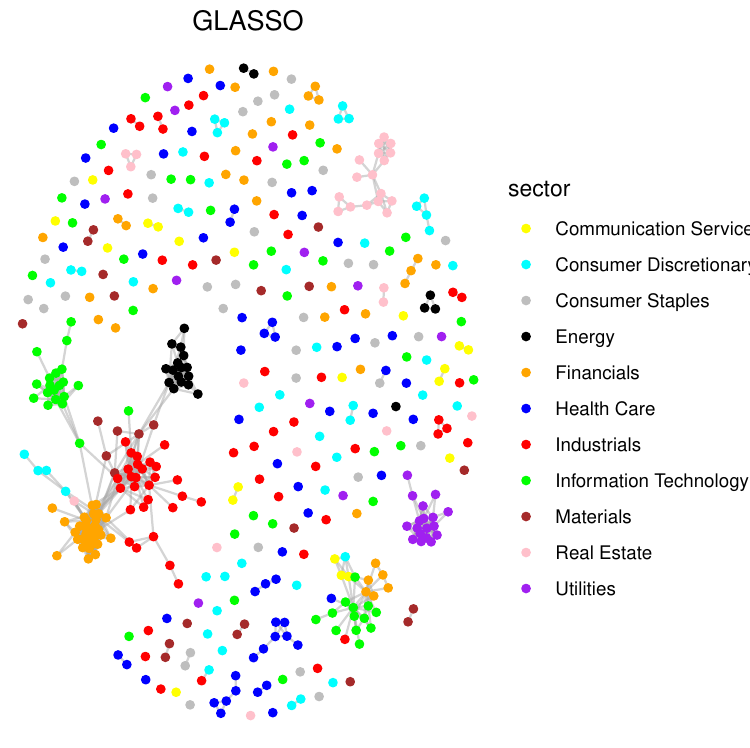}
        \caption{GLASSO}
    \end{subfigure}
    \begin{subfigure}[b]{0.45\textwidth}
        \includegraphics[width=\textwidth]{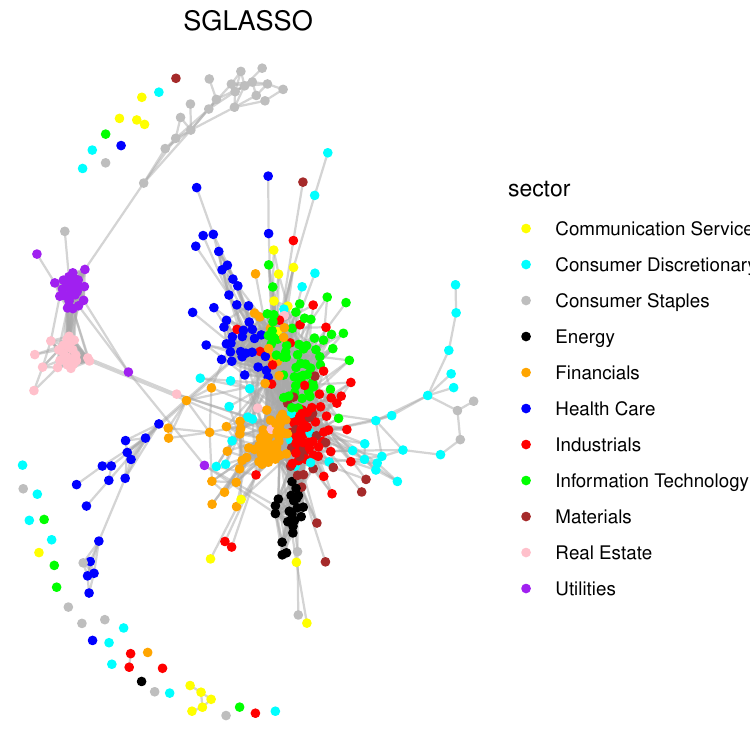}
        \caption{SGLASSO}
    \end{subfigure}
    \label{fig:gmstock_c}
\end{figure}

Figure \ref{fig:gmstock} and \ref{fig:gmstock_c} illustrate the estimated clean and contaminated stock graphical models, respectively, using the same layout setting, where the nodes are colored according to the GICS sector of the corresponding stock. The 470 stocks are categorized into 11 Global Industry Classification Standard (GICS) sectors, including \texttt{Communication Services} (20 stocks), \texttt{Consumer Discretionary} (49 stocks), \texttt{Consumer Staples} (34 stocks), \texttt{Energy} (22 stocks), \texttt{Financials} (71 stocks), \texttt{Health Care} (58 stocks), \texttt{Industrials} (69 stocks), \texttt{Information Technology} (64 stocks) \texttt{Telecommunications Services} (25 stocks), \texttt{Materials} (29 stocks), and \texttt{Utilities} (29 stocks). Besides, we compute the difference between estimated graphical models with clean data and those with contaminated data by the difference of edges. The result is illustrated in Table \ref{tab 6}.
\begin{table}[htbp]
    \centering
    \caption{ Comparisons of different edges   between clean and contaminated graphical models. }
    \begin{tabular}{lcccc}
        \hline
        CLIME & SCLIME & GLASSO &SGLASSO  \\
        \hline
        5712 & 3678 & 7418 & 4390 \\
        \hline
    \end{tabular}
    \label{tab 6}
\end{table}

Obviously, the performance can be evaluated as for clustering problems, for which we expect the nodes of the same color cluster together tightly to form a group  while different groups are separated clearly. The result exhibits that in terms of clustering, while SCLIME and CLIME achieve similar performance, SGLASSO outperforms GLASSO significantly, which demonstrates the robustness of spatial-sign methods. Besides, the difference in edges between clean and contaminated graphical models further verify the robustness of sptial-sign-based methods that exhibits approximately 40 \% less different edges. 
\subsubsection{Analysis of clear cell Renal Cell Carcinoma(ccRCC) Dataset}
In this section, we apply four methodologies, i.e. CLIME, SCLIME, GLASSO, SGLASSO, to analyze clear cell renal cell carcinoma (ccRCC) dataset in  \cite{von2014neuronal}. Clear cell renal cell carcinoma (ccRCC) is the most common subtype of kidney cancer and has the highest propensity to manifest as metastatic disease. The dataset consists of 54676 genes expression levels with 143 subjects, including 71 subjects identified as ccRCC and 72 normal subjects. The goal of our classification is to identify ccRCC from normal kidney tissue, so that we are more likely to learn biological insights of the disease. Based on the estimated inverse covariance matrix of the gene expression levels, we apply linear discriminant analysis (LDA) to predict whether or not a subject should be classified as ccRCC.

The preprocessing procedures are described as follows. First, we standardize each dimension of data matrix. Second, to reduce the computational cost and also maintain high dimensional setting, a two-sample t-test is performed between the two groups for each gene, and the 100 most significant genes (i.e., with the smallest p-values) are retained as the covariates for prediction. Third, to further compare the robustness of different methods, we additionally add some random noise to the data. Specifically, let $r \in [0, 1)$ represents the proportion of samples being contaminated. For each gene, we randomly select $\lceil nr \rceil$ entries and replace them with either $a$ or $-a$ with equal probability, where $a$ represents some extreme values that can viewed as outliers. We choose $r=0.1$ and $a=10$. 

And then following the LDA framework mentioned in section \ref{LDA}, we randomly select 50 samples from each group as the test set since the dataset is of nearly the same size, while the remaining forms the training set with the sample size 43 in total.  We use the estimated Fisher rule to classify test set and evaluate the performance based on  specificity, sensitivity, and Mathews correlation coefficient (MCC) criteria. For the tuning parameters, we select $\lambda$ based on 3-fold cross validation  on the training set. The results are based on 100 replications. 

\begin{table}[h]
    \centering
    \caption{ Comparisons of average (standard deviation) classification errors of ccRCC dataset over 100 replications. }
    \begin{tabular}{lcccc}
        \hline
        Method & Specificity & Sensitivity & MCC  \\
        \hline
        CLIME & 0.849(0.05) & 0.799(0.05) & 0.650(0.07)  \\
        SCLIME &  0.854(0.04) &  0.826(0.04) &  0.681(0.05)   \\
        GLASSO &  0.847(0.05) & 0.786(0.06) & 0.637(0.05) \\
        SGLASSO &  0.863(0.03) &  0.836(0.03) & 0.700(0.04) \\
        \hline
    \end{tabular}
    \label{tab 7}
\end{table}

As illustrated in Table \ref{tab 7}, the performance of spatial-sign-based methods significantly outperforms classical methods, especially in terms of the overall classification performance measured by MCC. This further demonstrates the robustness of spatial-sign-based methods for high-dimensional statistical inference.
\section{Conclusion}
This paper introduces two novel high-dimensional precision matrix estimators, termed SCLIME and SGLASSO, which replace the sample covariance matrix in the classical CLIME and GLASSO methods with the sample spatial-sign covariance matrix. Leveraging the robust properties of the spatial-sign covariance matrix, these new estimators exhibit enhanced performance, particularly under heavy-tailed distributions. We also establish the convergence rates of the proposed estimators, which align with existing theoretical results in the literature. To further demonstrate the utility of these methods, we apply SCLIME and SGLASSO to two statistical problems: elliptical graphical models and linear discriminant analysis. Both theoretical analysis and simulation studies substantiate the robustness and efficiency of the proposed methods, highlighting their superiority over traditional estimators in challenging settings.

Next, we propose potential directions for further research. Both SCLIME and SGLASSO utilize the \(l_1\)-norm penalization, which may introduce some bias in practice. To mitigate this, alternative penalization schemes, such as SCAD \citep{FanLi2001} or MCP \citep{Zhang2010}, could be considered. Additionally, following the approach of \cite{cai2011constrainedl1minimizationapproach}, a two-stage procedure could be employed to enhance the numerical performance of SCLIME and SGLASSO. This would involve refitting the estimator with the non-zero elements identified in the initial procedure. Furthermore, these robust high-dimensional precision matrix estimators could be extended to address a variety of other high-dimensional statistical problems, such as location parameter testing \citep{huang2022overview} and quadratic discriminant analysis \citep{qin2018review}.

\section{Appendix}
\subsection{Appendix A: SCLIME-related Proofs}
\begin{lemma}{(General moments of spherically symmetric distribution)}\label{mofs}

    Suppose ${\bm u}=[u_1,u_2,\cdots,u_p]^\sT\in\R^{p}$ is uniformly distributed on $\mathbb{S}^{p-1}$, then for any integers $m_1, \dots, m_p$, with $m = \sum_{i=1}^p m_i$, the mixed moments of $u$ can be expressed as:
\[
\mathbb{E} \left( \prod_{i=1}^p u_i^{m_i} \right) =
\begin{cases}
\frac{1}{(p/2)^{[l]}} \frac{(2l)!}{\prod_{i=1}^n 4^{l_i}(l_i)!}, & \text{if } m_i = 2l_i \text{ are even}, \, i = 1, \dots, p, \, m = 2l; \\
0, & \text{if at least one of the } m_i \text{ is odd.}
\end{cases}
\]
where as above $x^{[l]} = x(x+1)\cdots(x+l-1)$.

Specifically,
\[\E[u_iu_ju_ku_l]=\frac{\delta_{ij}\delta_{kl}+\delta_{ik}\delta_{jl}+\delta_{il}\delta_{jk}}{p(p+2)}\]
\end{lemma}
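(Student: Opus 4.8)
The plan is to reduce a spherical mixed moment to a ratio of Gaussian moments. First I would dispose of the odd case: since the law of $\bm u$ on $\mathbb{S}^{p-1}$ is invariant under the reflection $u_i\mapsto -u_i$, if some $m_i$ is odd the moment $\E\bigl(\prod_i u_i^{m_i}\bigr)$ equals its own negative and hence is $0$. So from now on assume all $m_i=2l_i$ are even and set $l=\sum_{i=1}^p l_i$, so that $m=2l$.

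For the even case I would invoke the standard representation $\bm u\deql \bm g/\|\bm g\|$ with $\bm g=(g_1,\dots,g_p)^\sT\sim N_p(\mathbf{0},\id_p)$, together with the fact — coming from the rotational invariance of $N_p(\mathbf{0},\id_p)$ — that the direction $\bm g/\|\bm g\|$ is independent of the radius $\|\bm g\|$. Writing $g_i=\|\bm g\|\,u_i$ and taking the mixed moment of order $2l$ on both sides gives
\[
\E\Bigl(\prod_{i=1}^p g_i^{2l_i}\Bigr)=\E\bigl(\|\bm g\|^{2l}\bigr)\,\E\Bigl(\prod_{i=1}^p u_i^{2l_i}\Bigr),
\]
and since $\E\bigl(\|\bm g\|^{2l}\bigr)>0$ I can solve for the spherical moment. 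The numerator factorizes by independence of the $g_i$ and uses the classical even Gaussian moments $\E\bigl(g_i^{2l_i}\bigr)=(2l_i)!/(2^{l_i}l_i!)$, so it equals $\prod_{i=1}^p (2l_i)!/(2^{l_i}l_i!)$. For the radial term, $\|\bm g\|^2\sim\chi_p^2\sim\mathrm{Gamma}(p/2,2)$, whence $\E\bigl(\|\bm g\|^{2l}\bigr)=2^l\,\Gamma(p/2+l)/\Gamma(p/2)=2^l\,(p/2)^{[l]}$. Dividing and using $\sum_i l_i=l$ (so that $2^l\prod_i 2^{l_i}=4^l=\prod_i 4^{l_i}$) would yield
\[
\E\Bigl(\prod_{i=1}^p u_i^{2l_i}\Bigr)=\frac{1}{(p/2)^{[l]}}\cdot\frac{\prod_{i=1}^p (2l_i)!}{\prod_{i=1}^p 4^{l_i}\,l_i!},
\]
the asserted identity. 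The special formula for $\E[u_iu_ju_ku_l]$ then follows by inspection: unless the four indices split into two (possibly coincident) pairs the moment vanishes by the odd case; for $i=j=k=l$ (one exponent $4$, i.e. $l_i=2$) substitution gives $3/[p(p+2)]$, and for two disjoint equal pairs (two exponents $2$, i.e. $l_i=l_k=1$) it gives $1/[p(p+2)]$ — in every case matching $(\delta_{ij}\delta_{kl}+\delta_{ik}\delta_{jl}+\delta_{il}\delta_{jk})/[p(p+2)]$.

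\textbf{Main obstacle.} No step is deep. The one point that needs genuine care is the independence of the direction $\bm g/\|\bm g\|$ and the radius $\|\bm g\|$ — equivalently, that $\bm g/\|\bm g\|$ is exactly the uniform law on $\mathbb{S}^{p-1}$ — which is exactly where rotational invariance of the standard Gaussian is used; after that the argument is bookkeeping with double factorials and the Gamma-function moment identity $\E\bigl((\chi_p^2)^l\bigr)=2^l\,(p/2)^{[l]}$.
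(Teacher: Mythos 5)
Your argument is correct, but note that the paper itself does not prove this lemma at all: its ``proof'' is a one-line citation to Theorem 3.3 of Fang (2018). So your Gaussian-representation derivation --- reflection symmetry for the odd case, then $\bm u \deql \bm g/\|\bm g\|$ with the direction independent of the radius, factorizing $\E\prod_i g_i^{2l_i}$ and dividing by $\E\|\bm g\|^{2l} = 2^l (p/2)^{[l]}$ --- is a genuinely self-contained alternative, and every step (the even Gaussian moments $(2l_i)!/(2^{l_i} l_i!)$, the chi-square moment, the bookkeeping $2^l\prod_i 2^{l_i} = \prod_i 4^{l_i}$, and the fourth-moment specialization) checks out. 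One point worth flagging: your derivation yields $\E\prod_i u_i^{2l_i} = \frac{1}{(p/2)^{[l]}}\prod_{i=1}^p \frac{(2l_i)!}{4^{l_i} l_i!}$, with the product of $(2l_i)!$ in the numerator, whereas the lemma as printed has a single $(2l)!$ (and a stray $\prod_{i=1}^{n}$ with $n$ in place of $p$). Your version is the correct one --- e.g.\ it gives $\E[u_1^2 u_2^2] = 1/\bigl(p(p+2)\bigr)$, consistent with the displayed identity $\E[u_iu_ju_ku_l] = (\delta_{ij}\delta_{kl}+\delta_{ik}\delta_{jl}+\delta_{il}\delta_{jk})/\bigl(p(p+2)\bigr)$, while the printed $(2l)!$ would give $6/\bigl(p(p+2)\bigr)$ --- so the lemma's display contains a transcription typo from the cited source; the fourth-moment formula the paper actually uses downstream is unaffected.
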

\begin{proof}
    This is Theorem 3.3 in Section 3.1.2. of \cite{fang2018symmetric}  
\end{proof}
\begin{lemma}\label{mofs2}
    Suppose ${\bm u}=[u_1,u_2,\cdots,u_p]^\sT\in\R^{p}$ is uniformly distributed on $\mathbb{S}^{p-1}$, then for $\ba_1, \ba_2\in\R^{p}$, 
    \[
    \E[\ba_1^\sT u \cdot \ba_2^\sT u]=\frac{1}{p}\ba_1^\sT \ba_2
    \]
    \[
    \E[(\ba_1^\sT u \cdot \ba_2^\sT u)^2]=\frac{\|\ba_1\|_2^2\|\ba_2\|_2^2+2(\ba_1^\sT\ba_2)^2}{p(p+2)}
    \]
\end{lemma}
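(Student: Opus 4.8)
The plan is to compute both moments directly from Lemma \ref{mofs} by expanding the inner products in coordinates and then invoking the mixed-moment formula. Write $\ba_1 = (a_1,\dots,a_p)^\sT$ and $\ba_2 = (b_1,\dots,b_p)^\sT$, so that $\ba_1^\sT {\bm u} = \sum_i a_i u_i$ and $\ba_2^\sT {\bm u} = \sum_j b_j u_j$.

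For the first identity, I would write
\[
\E[\ba_1^\sT {\bm u}\cdot \ba_2^\sT {\bm u}] = \sum_{i,j} a_i b_j\, \E[u_i u_j],
\]
and then note that by Lemma \ref{mofs} the term $\E[u_i u_j]$ vanishes unless $i=j$ (otherwise two exponents are odd), while $\E[u_i^2] = 1/p$ (the $l=1$, single-coordinate case of the formula, or equivalently $\sum_i \E[u_i^2] = \E\|{\bm u}\|_2^2 = 1$ by symmetry). Hence only the diagonal survives and the sum collapses to $\frac1p\sum_i a_i b_i = \frac1p \ba_1^\sT\ba_2$.

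For the second identity, expand
\[
\E[(\ba_1^\sT {\bm u}\cdot \ba_2^\sT {\bm u})^2] = \sum_{i,j,k,l} a_i a_j b_k b_l\, \E[u_i u_j u_k u_l],
\]
and substitute the explicit fourth-moment formula from Lemma \ref{mofs}, namely $\E[u_i u_j u_k u_l] = \big(\delta_{ij}\delta_{kl} + \delta_{ik}\delta_{jl} + \delta_{il}\delta_{jk}\big)/\big(p(p+2)\big)$. Distributing the three Kronecker-delta terms over the quadruple sum gives respectively $\big(\sum_i a_i^2\big)\big(\sum_k b_k^2\big) = \|\ba_1\|_2^2\|\ba_2\|_2^2$ from $\delta_{ij}\delta_{kl}$, and $\big(\sum_i a_i b_i\big)^2 = (\ba_1^\sT\ba_2)^2$ from each of $\delta_{ik}\delta_{jl}$ and $\delta_{il}\delta_{jk}$. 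Summing and dividing by $p(p+2)$ yields $\big(\|\ba_1\|_2^2\|\ba_2\|_2^2 + 2(\ba_1^\sT\ba_2)^2\big)/\big(p(p+2)\big)$, as claimed.

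There is no real obstacle here: the argument is a mechanical bookkeeping of indices once Lemma \ref{mofs} is in hand. The only points requiring a moment's care are confirming that odd-exponent monomials contribute zero (so cross terms drop out in the first identity) and correctly matching each of the three symmetric delta-pairings to the corresponding contraction of the $a$'s and $b$'s in the second identity; writing the quadruple sum out once before collapsing it makes both checks transparent.
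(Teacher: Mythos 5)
Your proof is correct and is exactly the "basic calculation" the paper invokes: it expands both moments in coordinates and applies Lemma \ref{mofs} (odd moments vanish, $\E[u_i^2]=1/p$, and the stated fourth-moment formula), matching the paper's approach, just written out in full.
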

\begin{proof}
    By Lemma \ref{mofs} and basic calculation.
\end{proof}
\begin{lemma}\label{L1ineq}
Suppose ${\bf A}\in \R^{m\times n}, {\bf B}\in \R^{n\times n}$ and ${\bf B}$ is symmetric, then
\[
\|{\bf AB}\|_\infty\leq \|{\bf A}\|_\infty\|{\bf B}\|_{L1}
\]
    
\end{lemma}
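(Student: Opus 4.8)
The plan is to prove the elementwise $\ell_\infty$ bound $\|{\bf AB}\|_\infty \leq \|{\bf A}\|_\infty \|{\bf B}\|_{L_1}$ by a direct entrywise estimate, exploiting the symmetry of ${\bf B}$ to convert a sum over rows of ${\bf B}$ into a column sum, which is exactly what $\|{\bf B}\|_{L_1}$ measures.

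\textbf{Approach.} Fix indices $i\in\{1,\dots,m\}$ and $j\in\{1,\dots,n\}$. Write the $(i,j)$ entry of the product as $({\bf AB})_{ij} = \sum_{k=1}^n a_{ik} b_{kj}$. First I would apply the triangle inequality to get $|({\bf AB})_{ij}| \leq \sum_{k=1}^n |a_{ik}|\,|b_{kj}| \leq \|{\bf A}\|_\infty \sum_{k=1}^n |b_{kj}|$, using $|a_{ik}|\leq \|{\bf A}\|_\infty$ for every entry. The next step is to recognize that $\sum_{k=1}^n |b_{kj}|$ is the $\ell_1$ norm of the $j$-th column of ${\bf B}$, hence is bounded by $\max_{1\le j'\le n}\sum_{k=1}^n |b_{kj'}| = \|{\bf B}\|_{L_1}$. (Here the symmetry of ${\bf B}$ is not strictly needed for this particular chain, since the column-sum already appears naturally; symmetry would only matter if one wanted to rephrase in terms of $\|{\bf B}\|_{L_\infty}$, and since for symmetric ${\bf B}$ one has $\|{\bf B}\|_{L_1}=\|{\bf B}\|_{L_\infty}$, the statement is robust either way.) Combining the two bounds gives $|({\bf AB})_{ij}| \leq \|{\bf A}\|_\infty\|{\bf B}\|_{L_1}$, and since $i,j$ were arbitrary, taking the maximum over all entries yields $\|{\bf AB}\|_\infty \leq \|{\bf A}\|_\infty\|{\bf B}\|_{L_1}$.

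\textbf{Key steps in order.} (1) Expand $({\bf AB})_{ij}=\sum_k a_{ik}b_{kj}$ and apply the triangle inequality. (2) Bound each $|a_{ik}|$ by $\|{\bf A}\|_\infty$ and factor it out. (3) Identify the remaining sum $\sum_k|b_{kj}|$ as a column sum of ${\bf B}$ and bound it by $\|{\bf B}\|_{L_1}$. (4) Take the maximum over $i,j$. Each step is a one-line inequality, so there is essentially no computational burden.

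\textbf{Main obstacle.} There is no serious obstacle here — the result is elementary. The only point requiring a moment's care is bookkeeping with the norm definitions: making sure that $\|{\bf A}\|_\infty$ denotes the elementwise max (so the bound $|a_{ik}|\le\|{\bf A}\|_\infty$ is immediate) and that $\|{\bf B}\|_{L_1}$ is the maximum absolute \emph{column} sum, matching the index of summation that survives after factoring out ${\bf A}$. If one instead had $\|{\bf B}\|_{L_\infty}$ (max absolute row sum), the symmetry hypothesis is what lets us interchange the two, so stating the hypothesis keeps the lemma applicable in both forms when it is invoked later in the appendix.
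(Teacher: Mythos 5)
Your entrywise argument is correct and is the standard one; the paper in fact states this lemma without proof, so there is nothing to diverge from, and your chain $|({\bf AB})_{ij}|\leq \|{\bf A}\|_\infty\sum_k|b_{kj}|\leq \|{\bf A}\|_\infty\|{\bf B}\|_{L_1}$ settles it. Your side remark is also on point: symmetry of ${\bf B}$ is not needed for the stated inequality, and it is included only so that the bound can equally be invoked with the column-sum factor on the other side (i.e.\ identifying $\|{\bf B}\|_{L_1}$ with $\|{\bf B}\|_{L_\infty}$), which is how the lemma is used later in the appendix.
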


\begin{lemma}
    Suppose ${\bm u}=[u_1,\cdots, u_p]^\sT\in \R^p$ is uniformly distributed on $\mathbb{S}^{p-1}$, then under Assumption \ref{V0} and \ref{lambda}, we have 
    \[
    \left|\E[\frac{u_iu_j}{{\bm u}^\sT {\bf\Sigma}_0{\bm u}}]-\frac{1}{\tr({\bf \Sigma}_0)}\delta_{ij}\right|=O(p^{-\frac{3}{2}})
    \]
    Furthermore, we have 
    \[
    \left|\E[\frac{u_iu_j}{{\bm u}^\sT {\bf\Sigma}_0{\bm u}}]\right|=O(p^{-1})
    \]
    
\end{lemma}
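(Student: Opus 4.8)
The plan is to compare the target expectation with a ``frozen-denominator'' surrogate. Write $\tau\triangleq\tr({\bf\Sigma}_0)$ and $Q\triangleq{\bm u}^\sT{\bf\Sigma}_0{\bm u}$. Since $\E[u_iu_j]=\delta_{ij}/p$ (because $\E[{\bm u}{\bm u}^\sT]=p^{-1}\id_p$), the surrogate satisfies $\E[p\,u_iu_j/\tau]=\delta_{ij}/\tau$, and
\[
\E\!\left[\frac{u_iu_j}{Q}\right]-\frac{\delta_{ij}}{\tau}=\E\!\left[u_iu_j\,\frac{\tau-pQ}{\tau\,Q}\right].
\]
Thus it suffices to show the right-hand side is $O(p^{-3/2})$; the second assertion of the lemma then follows from the triangle inequality together with $\tau\asymp p$ (a stated consequence of Assumption \ref{lambda}), which gives $\delta_{ij}/\tau=O(p^{-1})$.

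To control the right-hand side I would use two deterministic bounds furnished by Assumption \ref{lambda}: $Q\ge\lambda_p({\bf\Sigma}_0)>\eta$ holds pointwise on $\mathbb{S}^{p-1}$, and $\tau\ge c_0p$. Hence, by Cauchy--Schwarz,
\[
\left|\E\!\left[u_iu_j\,\frac{\tau-pQ}{\tau\,Q}\right]\right|\le\frac{1}{c_0\eta\,p}\,\E\bigl[|u_iu_j|\,|\tau-pQ|\bigr]\le\frac{1}{c_0\eta\,p}\sqrt{\E[u_i^2u_j^2]}\,\sqrt{\E[(\tau-pQ)^2]}.
\]
By Lemma \ref{mofs} (equivalently the identity $\E[u_au_bu_cu_d]=(\delta_{ab}\delta_{cd}+\delta_{ac}\delta_{bd}+\delta_{ad}\delta_{bc})/(p(p+2))$), $\E[u_i^2u_j^2]$ equals $1/(p(p+2))$ if $i\ne j$ and $3/(p(p+2))$ if $i=j$, so $\sqrt{\E[u_i^2u_j^2]}=O(p^{-1})$. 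It therefore remains to prove $\E[(\tau-pQ)^2]=\operatorname{Var}(pQ)=p^2\operatorname{Var}(Q)=O(p)$, i.e.\ $\operatorname{Var}(Q)=O(p^{-1})$.

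For the variance I would expand $Q=\sum_{a,b}({\bf\Sigma}_0)_{ab}u_au_b$ and apply the same fourth-moment identity, obtaining $\E[Q^2]=(\tau^2+2\|{\bf\Sigma}_0\|_F^2)/(p(p+2))$ and $\E[Q]^2=\tau^2/p^2$. The key is that the $\tau^2$ leading terms cancel: a direct simplification yields
\[
\operatorname{Var}(Q)=\frac{2\bigl(p\,\|{\bf\Sigma}_0\|_F^2-\tau^2\bigr)}{p^2(p+2)}.
\]
By Assumption \ref{lambda}, $\|{\bf\Sigma}_0\|_F^2=\sum_k\lambda_k({\bf\Sigma}_0)^2\le\lambda_1({\bf\Sigma}_0)\,\tau\le\tau/\eta\le p/(c_0\eta)$, so the numerator is $O(p^2)$ (and nonnegative, again by Cauchy--Schwarz), whence $\operatorname{Var}(Q)=O(p^{-1})$ and $\operatorname{Var}(pQ)=O(p)$. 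Substituting back, the error is at most $(c_0\eta p)^{-1}\cdot O(p^{-1})\cdot O(p^{1/2})=O(p^{-3/2})$, which is what the plan must deliver.

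\textbf{Main obstacle.} The only delicate point is the variance estimate. A crude bound $\E[Q^2]\le(\tau^2+2\|{\bf\Sigma}_0\|_F^2)/(p(p+2))=O(1)$ is useless here: it would give only an $O(p^{-1})$ bound on the full error, which neither improves on the surrogate term nor reaches $O(p^{-3/2})$. One must use that the quantity is a genuine variance, so the $O(1)$ pieces annihilate, and then it is precisely the bounded-spectrum hypothesis of Assumption \ref{lambda} that forces $p\|{\bf\Sigma}_0\|_F^2-\tau^2$ to be of order $p^2$ rather than larger. The remaining ingredients — the surrogate reduction, the pointwise lower bound $Q>\eta$, Cauchy--Schwarz, and the spherical moment formulas of Lemma \ref{mofs} — are routine.
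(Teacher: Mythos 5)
Your proposal is correct, and it follows the paper's overall strategy — center the denominator at its mean $\E[{\bm u}^\sT{\bf\Sigma}_0{\bm u}]=\tr({\bf\Sigma}_0)/p$ (your ``frozen denominator'' is exactly this decomposition), apply Cauchy--Schwarz, and exploit that $\operatorname{Var}({\bm u}^\sT{\bf\Sigma}_0{\bm u})=O(p^{-1})$ via the fourth-moment identity of Lemma \ref{mofs} — but it diverges at one key step in a way worth noting. To control the $1/Q$ factor, the paper keeps $u_iu_j/Q$ inside the Cauchy--Schwarz and bounds $1/Q\leq{\bm u}^\sT{\bf\Sigma}_0^{-1}{\bm u}$, which forces a computation of eighth-order spherical moments and invokes Assumption \ref{V0} (through $\|{\bf V}_0\|_1\leq pT$) to show $\E|u_iu_j/Q|^2=O(p^{-2})$; you instead use the deterministic spectral bound $Q\geq\lambda_p({\bf\Sigma}_0)>\eta$ on the unit sphere, pull $1/(\tau Q)\leq(c_0\eta p)^{-1}$ out of the expectation, and then only need the fourth moment $\E[u_i^2u_j^2]=O(p^{-2})$. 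Your route is more elementary (no eighth-moment bookkeeping) and in fact shows the conclusion holds under Assumption \ref{lambda} alone, without the sparsity/$L_1$ condition of Assumption \ref{V0}; it could even be spliced into the paper's own argument to replace the $\E|u_iu_j{\bm u}^\sT{\bf\Sigma}_0^{-1}{\bm u}|^2$ computation. Your exact variance formula $\operatorname{Var}(Q)=2\bigl(p\|{\bf\Sigma}_0\|_F^2-\tr({\bf\Sigma}_0)^2\bigr)/\bigl(p^2(p+2)\bigr)$ is also cleaner than the paper's term-by-term bound, though both deliver the same $O(p^{-1})$ rate, and your observation that the second moment alone would only give $O(p^{-1})$ (so the variance cancellation is essential) is the same structural point the paper relies on. The concluding triangle-inequality step for the second assertion matches the paper.
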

\begin{proof}
By Lemma \ref{mofs},
    \begin{align*}
           \left|\E[\frac{u_iu_j}{{\bm u}^\sT {\bf\Sigma}_0{\bm u}}]-\frac{1}{\tr({\bf \Sigma}_0)}\delta_{ij}\right|
           =& \left|\E[\frac{u_iu_j}{{\bm u}^\sT {\bf \Sigma}_0{\bm u}}-\frac{u_iu_j}{\E[{\bm u}^\sT{\bf \Sigma}_0{\bm u}]}]\right|\\
           \leq &\E\left|\frac{u_iu_j}{{\bm u}^\sT {\bf \Sigma}_0{\bm u}}-\frac{u_iu_j}{\E[{\bm u}^\sT {\bf\Sigma}_0{\bm u}]}\right|\\
           =& \E\left|\frac{u_iu_j({\bm u}^\sT {\bf\Sigma}_0{\bm u}-\E[{\bm u}^\sT {\bf\Sigma}_0{\bm u}])}{{\bm u}^\sT {\bf\Sigma}_0{\bm u}\E[{\bm u}^\sT {\bf\Sigma}_0{\bm u}]}\right|\\
           =&\frac{p}{\tr(\Sigma_0)}\E\left[\left|\frac{u_iu_j}{{\bm u}^\sT {\bf\Sigma}_0{\bm u}}\right|\left|{\bm u}^\sT {\bf\Sigma}_0{\bm u}-\E[{\bm u}^\sT {\bf\Sigma}_0{\bm u}]\right|\right]\\
           \leq &\frac{1}{c_0}\sqrt{\E\left|\frac{u_iu_j}{{\bm u}^\sT {\bf\Sigma}_0{\bm u}}\right|^2\Var({\bm u}^\sT {\bf\Sigma}_0{\bm u})}\\
    \end{align*}
    Then consider the first term. Since 
    \[
    ({\bm u}^\sT {\bf\Sigma}_0{\bm u} )({\bm u}^\sT{\bf \Sigma}_0^{-1}{\bm u} )\geq 1
    \]
    Under Assumption \ref{V0}, we have 
    \begin{align*}
        \E|\frac{u_iu_j}{{\bm u}^\sT {\bf\Sigma}_0{\bm u}}|^2
        \leq &\E|u_iu_j{\bm u}^\sT{\bf \Sigma}_0^{-1}{\bm u}|^2\\
        =&\E[\sum_{k_1,l_1}\sum_{k_2,l_2}\Sigma_{k_1l_1}^{-1}\Sigma_{k_2l_2}^{-1}u_i^2u_j^2u_{k_1}u_{k_2}u_{l_1}u_{l_2}]\\
        = & \frac{105p^2}{\tr(\Sigma_0)^2}\cdot\frac{\sum_{k_1,l_1}\sum_{k_2,l_2}v_{k_1l_1}v_{k_2l_2}}{p(p+2)(p+4)(p+6)}\\
        \leq &\frac{105}{c_0^2}\cdot\frac{p^2T^2}{p(p+2)(p+4)(p+6)}\\
        =& O(p^{-2})
    \end{align*}
    Then consider the second term. Under Assumption \ref{lambda}, by Lemma \ref{mofs}, we have
    \begin{align*}
        \Var(u^\sT \Sigma_0u)
        =&\E[{\bm u}^\sT {\bf\Sigma}_0{\bm u}{\bm u}^\sT {\bf\Sigma}_0{\bm u}]-\E^2[{\bm u}^\sT {\bf\Sigma}_0{\bm u}]\\
        =& \E[\tr({\bm u}^\sT {\bf\Sigma}_0{\bm u}{\bm u}^\sT {\bf\Sigma}_0{\bm u})]-\frac{\tr({\bf \Sigma}_0)^2}{p^2}\\
        =& \E[\tr({\bm u}{\bm u}^\sT {\bf\Sigma}_0{\bm u}{\bm u}^\sT{\bf \Sigma}_0)]-\frac{\tr({\bf \Sigma}_0)^2}{p^2}\\
         =& \tr(\E[{\bm u}{\bm u}^\sT {\bf\Sigma}_0{\bm u}{\bm u}^\sT]{\bf \Sigma}_0)-\frac{\tr({\bf \Sigma}_0)^2}{p^2}\\
         =&\frac{\tr(\tr({\bf \Sigma}_0){\bf \Sigma}_0+2\diag\{{\bf \Sigma}_0\}{\bf \Sigma}_0+2{\bf \Sigma}_0^2)}{p(p+2)}-\frac{\tr({\bf \Sigma}_0)^2}{p^2}\\
         =&\frac{4\tr({\bf \Sigma}_0^2)}{p(p+2)}+2\frac{\tr({\bf \Sigma}_0)^2}{p^2(p+2)}\\
         \leq& \frac{4\sum_{i=1}^p\lambda_i^2({\bf \Sigma}_0)}{p(p+2)}+\frac{2}{c_0^2(p+2)}\\
        \leq& \frac{4}{\eta^2(p+2)}+\frac{2}{c_0^2(p+2)}\\
        =&O(p^{-1})
    \end{align*}
    A hidden calculation involved here is
    \begin{align*}
          \E[{\bm u}^\sT {\bf\Sigma}_0{\bm u} {\bm u}{\bm u}^\sT]_{ij}
          &=\sum_{k,l}\Sigma_{kl}(\delta_{ij}\delta_{kl}+\delta_{ik}\delta_{jl}+\delta_{il}\delta_{jk})\\
          &=(\tr({\bf \Sigma}_0)+2\Sigma_{ii}+2\Sigma_{jj})\delta_{ij}+(\Sigma_{ij}+\Sigma_{ji})(1-\delta_{ij})
    \end{align*}

    Combine them together, the proof is finished.
\end{proof}

\begin{lemma}\label{Sinfbound}
    Under Assumption \ref{V0}, and \ref{lambda}, 
    \[
    \|{\bf S}\|_\infty=O(p^{-1})
    \]
    Particularly, when $p$ is sufficiently large, there exists a constant $C_{c_0,\eta,M}$, such that
    \[
    \|{\bf S}\|_\infty\leq \frac{C_{c_0,\eta,T,M }}{p}
    \]
\end{lemma}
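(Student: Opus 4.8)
The plan is to bound every entry of ${\bf S}$ by first reducing the off-diagonal entries to the diagonal via Cauchy--Schwarz, and then estimating the diagonal entries directly from the elliptical representation \eqref{Setting}, lower-bounding the Rayleigh-quotient denominator by $\lambda_p({\bf\Sigma}_0)$.

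First I would rewrite ${\bf S}$. By \eqref{Setting}, ${\bm X}_i-{\bm\mu}=r_i{\bf\Gamma}_0{\bm u}_i$, so the positive scalar $r_i$ cancels in the spatial sign and $U({\bm X}_i-{\bm\mu})={\bf\Gamma}_0{\bm u}_i/\|{\bf\Gamma}_0{\bm u}_i\|_2$ (the atom $\{r_i=0\}$, if any, contributes nothing). Hence, with ${\bm u}$ uniform on $\mathbb{S}^{p-1}$ and using $\|{\bf\Gamma}_0{\bm u}\|_2^2={\bm u}^\sT{\bf\Sigma}_0{\bm u}$,
\[
S_{ij}=\E\!\left[\frac{({\bf\Gamma}_0{\bm u})_i\,({\bf\Gamma}_0{\bm u})_j}{{\bm u}^\sT{\bf\Sigma}_0{\bm u}}\right].
\]
Applying Cauchy--Schwarz to this expectation, with the two functions $({\bf\Gamma}_0{\bm u})_i/\|{\bf\Gamma}_0{\bm u}\|_2$ and $({\bf\Gamma}_0{\bm u})_j/\|{\bf\Gamma}_0{\bm u}\|_2$, gives $|S_{ij}|\le\sqrt{S_{ii}S_{jj}}$, so $\|{\bf S}\|_\infty\le\max_{1\le i\le p}S_{ii}$ and it remains only to control the diagonal.

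For the diagonal I would use that $\|{\bm u}\|_2=1$, so Assumption \ref{lambda} yields ${\bm u}^\sT{\bf\Sigma}_0{\bm u}\ge\lambda_p({\bf\Sigma}_0)>\eta$; therefore
\[
S_{ii}\le\frac{1}{\eta}\,\E[({\bf\Gamma}_0{\bm u})_i^2]=\frac{1}{\eta}\bigl({\bf\Gamma}_0\,\E[{\bm u}{\bm u}^\sT]\,{\bf\Gamma}_0\bigr)_{ii}=\frac{({\bf\Sigma}_0)_{ii}}{\eta p}<\frac{M}{\eta p},
\]
where I invoke $\E[{\bm u}{\bm u}^\sT]=p^{-1}\id_p$ (from Lemma \ref{mofs}), ${\bf\Gamma}_0^2={\bf\Sigma}_0$, and $({\bf\Sigma}_0)_{ii}\le\|{\bf\Sigma}_0\|_\infty<M$ (Assumption \ref{lambda}). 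Combining the two displays gives $\|{\bf S}\|_\infty\le M/(\eta p)=O(p^{-1})$, which is the stated bound with $C_{c_0,\eta,T,M}=M/\eta$; in fact it holds for every $p$.

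The only step that needs care is the choice of route: one should \emph{not} expand $S_{ij}=\sum_{k,l}({\bf\Gamma}_0)_{ik}({\bf\Gamma}_0)_{jl}\,\E[u_ku_l/({\bm u}^\sT{\bf\Sigma}_0{\bm u})]$ and substitute the preceding lemma's bound $|\E[u_ku_l/({\bm u}^\sT{\bf\Sigma}_0{\bm u})]|=O(p^{-1})$ (or its $O(p^{-3/2})$ refinement) termwise, because $\sum_k|({\bf\Gamma}_0)_{ik}|\le\sqrt{p\,({\bf\Sigma}_0)_{ii}}$ reintroduces a factor of order $p$ and only gives $O(p^{-1/2})$. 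Keeping the quadratic form intact and bounding its denominator by $\lambda_p({\bf\Sigma}_0)$ is what delivers the sharp $O(p^{-1})$ rate; beyond that the argument is a two-line computation.
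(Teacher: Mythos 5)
Your proof is correct, but it takes a genuinely different route from the paper's. The paper argues exactly along the termwise path you warn against: it expands $S_{ij}=\sum_{k,l}\Gamma_{ik}\Gamma_{lj}\,\mathbb{E}\bigl[u_ku_l/({\bm u}^{\top}{\bf\Sigma}_0{\bm u})\bigr]$, inserts the uniform $O(p^{-1})$ bound from the preceding lemma, and then passes from $\bigl|\sum_{k,l}\Gamma_{ik}\Gamma_{lj}\bigr|$ to $\sqrt{\Sigma_{ii}}\sqrt{\Sigma_{jj}}$; read as Cauchy--Schwarz, that last step silently drops a factor of order $p$ (one only has $\sum_k|\Gamma_{ik}|\le\sqrt{p\,\Sigma_{ii}}$, and pulling a uniform bound out of a signed sum requires absolute values anyway), so the paper's displayed chain is loose precisely where you predicted the termwise substitution degrades, and your accounting ($O(1)$ from the crude entrywise bound, $O(p^{-1/2})$ from the $O(p^{-3/2})$ refinement) is accurate. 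Your alternative --- reduce off-diagonal entries to the diagonal via $|S_{ij}|\le\sqrt{S_{ii}S_{jj}}$, then bound the diagonal using ${\bm u}^{\top}{\bf\Sigma}_0{\bm u}\ge\lambda_p({\bf\Sigma}_0)>\eta$ together with $\mathbb{E}[{\bm u}{\bm u}^{\top}]=p^{-1}\id_p$ --- is airtight, yields the explicit constant $M/\eta$ valid for every $p$ (no ``sufficiently large $p$'' needed), and uses only Assumption \ref{lambda}, not Assumption \ref{V0}. What the paper's route is aiming for, in exchange, is uniformity with the adjacent computation in Lemma \ref{error}, where the entrywise $O(p^{-3/2})$ bound is genuinely the right tool for $\|p{\bf S}-{\bf\Lambda}_0\|_\infty$; your argument is specific to $\|{\bf S}\|_\infty$, but for this lemma it is the cleaner and sharper proof.
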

\begin{proof}
        Consider the representation of $S_{ij}$, when $p$ is sufficiently large, 
    \begin{align*}
            |S_{ij}|=&\left|\sum_{k=1}^p\sum_{l=1}^p{ \Gamma}_{ik}{ \Gamma}_{lj}\E[\frac{u_ku_l}{{\bm u}^\sT {\bf\Sigma}_0{\bm u}}]\right|\\
            \leq& \left|\sum_{k=1}^p\sum_{l=1}^p{ \Gamma}_{ik}{\Gamma}_{lj}\right|O(p^{-1})\\
            \leq&\left|\sqrt{\sum_{k=1}^p({ \Gamma}_{ik})^2}\sqrt{\sum_{l=1}^p({ \Gamma}_{lj})^2}\right|O(p^{-1})\\
            =&\left|\sqrt{{ \Sigma}_{ii}}\sqrt{{ \Sigma}_{jj}}\right|O(p^{-1})\\
            \leq & MO(p^{-1})\\
            \leq & \frac{C_{c_0,\eta,T,M }}{p}
    \end{align*}
\end{proof}
\begin{lemma}{(Approximation Error)}\label{error}
Under Assumption \ref{V0} and \ref{lambda},
    \[
    \|p{\bf S}-{\bf \Lambda}_0\|_\infty = O(p^{-\frac{1}{2}}), \quad p\rightarrow\infty
    \]
    Particularly, when $p$ is sufficiently large, the constant $C_{c_o,\eta,T,M}$ mentioned in Lemma \ref{Sinfbound} can be used to bound the error, i.e.
    \[
    \|p{\bf S}-{\bf \Lambda}_0\|_\infty\leq \frac{C_{c_0,\eta,T,M}}{\sqrt{p}}
    \]
\end{lemma}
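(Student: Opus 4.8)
\noindent\emph{Proposed approach.} The plan is to diagonalise and so reduce the elementwise $\ell_\infty$ bound to a one–dimensional moment estimate. Since $({\bm X}-{\bm\mu})\deql r{\bf\Gamma}_0{\bm u}$ and the spatial sign is scale invariant,
\[
{\bf S}=\E\left[\frac{{\bf\Gamma}_0{\bm u}\,{\bm u}^\sT{\bf\Gamma}_0}{{\bm u}^\sT{\bf\Sigma}_0{\bm u}}\right],
\]
which by Proposition 2.1 of \citet{han2016ecahighdimensionalelliptical} shares its eigenvectors with ${\bf\Sigma}_0$ and, writing $\lambda_j:=\lambda_j({\bf\Sigma}_0)$, $Y\sim N_p({\bm 0},\id_p)$ and $Q:=\sum_{m=1}^p\lambda_mY_m^2$, has eigenvalues $\mu_j=\E[\lambda_jY_j^2/Q]$. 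As ${\bf\Lambda}_0=\tfrac{p}{\tr{\bf\Sigma}_0}{\bf\Sigma}_0$ is a function of ${\bf\Sigma}_0$, the matrices $p{\bf S}$ and ${\bf\Lambda}_0$ commute, so $p{\bf S}-{\bf\Lambda}_0$ is symmetric with eigenvalues $p\mu_j-\tfrac{p\lambda_j}{\tr{\bf\Sigma}_0}$, and using $|A_{ij}|=|{\bm e}_i^\sT A{\bm e}_j|\le\|A\|_\op$ one gets
\[
\|p{\bf S}-{\bf\Lambda}_0\|_\infty\le\|p{\bf S}-{\bf\Lambda}_0\|_\op=\max_{1\le j\le p}\left|p\mu_j-\frac{p\lambda_j}{\tr{\bf\Sigma}_0}\right|.
\]
Since $\lambda_j<1/\eta$ by Assumption \ref{lambda}, it then suffices to show $\big|\E[Y_j^2/Q]-1/\E[Q]\big|=O(p^{-3/2})$ uniformly in $j$, with implied constant depending only on $c_0,\eta$.

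To obtain this I would set $a:=\E[Q]=\tr{\bf\Sigma}_0$ and use the exact identity
\[
\frac1Q=\frac1a-\frac{Q-a}{a^2}+\frac{(Q-a)^2}{a^3}-\frac{(Q-a)^3}{a^3Q},
\]
multiply by $Y_j^2$ and take expectations; because $\E[Y_j^2]=1$,
\[
\E\!\left[\frac{Y_j^2}{Q}\right]-\frac1a=-\frac{\E[Y_j^2(Q-a)]}{a^2}+\frac{\E[Y_j^2(Q-a)^2]}{a^3}-\E\!\left[\frac{Y_j^2(Q-a)^3}{a^3Q}\right].
\]
Elementary Gaussian moment computations (using independence of the $Y_m^2$) give $\E[Y_j^2(Q-a)]=2\lambda_j$ and $\E[Y_j^2(Q-a)^2]=10\lambda_j^2+2\sum_{m\ne j}\lambda_m^2$, so with $\tr{\bf\Sigma}_0\asymp p$ and $\lambda_m\in(\eta,1/\eta)$ the first two terms are $O(p^{-2})$. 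For the remainder the decisive point is the \emph{deterministic} bound $Y_j^2/Q\le1/\lambda_j\le1/\eta$; combined with $\E|Q-a|^3\le(\E(Q-a)^4)^{3/4}$ and $\E(Q-a)^4=48\sum_m\lambda_m^4+12(\sum_m\lambda_m^2)^2=O(p^2)$, the remainder is at most $\tfrac1{\eta a^3}\E|Q-a|^3=O(p^{-3/2})$. Collecting the three bounds and tracking constants yields $\big|p\mu_j-\tfrac{p\lambda_j}{\tr{\bf\Sigma}_0}\big|\le C(c_0,\eta)\,p^{-1/2}$, which is the first assertion; enlarging the constant to the maximum of $C(c_0,\eta)$ and the constant produced in Lemma \ref{Sinfbound} gives the stated bound with $C_{c_0,\eta,T,M}$.

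The step I expect to be the real obstacle is precisely why the naive route fails: writing $(p{\bf S}-{\bf\Lambda}_0)_{ij}=p\sum_{k,l}\Gamma_{ik}\Gamma_{lj}\,\varepsilon_{kl}$ with $|\varepsilon_{kl}|=\big|\E[u_ku_l/({\bm u}^\sT{\bf\Sigma}_0{\bm u})]-\delta_{kl}/\tr{\bf\Sigma}_0\big|=O(p^{-3/2})$ and bounding term by term costs a factor $\sum_k|\Gamma_{ik}|\asymp\sqrt p$ in each of the two inner sums, so together with the explicit $p$ this only delivers $O(p^{1/2})$; genuine cancellation is unavoidable. Diagonalisation is what converts the off–diagonal cancellation into the harmless inequality $\|\cdot\|_\infty\le\|\cdot\|_\op$, and the remaining scalar estimate must be pushed to \emph{third} order: a second–order expansion leaves a remainder of size $\Var(Q)/(\lambda_ja^2)\asymp p^{-1}$, which multiplied by $p$ is not $o(1)$. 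The boundedness of $Y_j^2/Q$ is what keeps the third–order remainder of order $p^{-3/2}$ rather than blowing up on the event that $Q$ is small.
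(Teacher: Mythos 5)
Your proposal is correct, and it takes a genuinely different route from the paper's own proof. The paper argues entrywise: it expands $(p{\bf S}-{\bf \Lambda}_0)_{ij}=p\sum_{k,l}\Gamma_{ik}\Gamma_{lj}\,\varepsilon_{kl}$ with $\varepsilon_{kl}=\E[u_ku_l/({\bm u}^\sT{\bf\Sigma}_0{\bm u})]-\delta_{kl}/\tr({\bf\Sigma}_0)$, invokes its preliminary lemma $\max_{k,l}|\varepsilon_{kl}|=O(p^{-3/2})$ (which is where Assumption \ref{V0a}, i.e.\ $\|{\bf V}_0\|_{L_1}\le T$, enters), and then collapses the double sum to $\sqrt{\Sigma_{ii}\Sigma_{jj}}\,O(p^{-1/2})\le M\,O(p^{-1/2})$. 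You instead diagonalise: by the eigenvalue representation of ${\bf S}$ from Proposition 2.1 of \cite{han2016ecahighdimensionalelliptical} (quoted in Section 2), $p{\bf S}$ and ${\bf\Lambda}_0$ are simultaneously diagonalisable, so $\|p{\bf S}-{\bf\Lambda}_0\|_\infty\le\|p{\bf S}-{\bf\Lambda}_0\|_\op=\max_j p\lambda_j\,\bigl|\E[Y_j^2/Q]-1/\E[Q]\bigr|$, and you settle the scalar estimate by an exact third-order expansion of $1/Q$ around $a=\E[Q]=\tr({\bf\Sigma}_0)$, with the deterministic bound $Y_j^2/Q\le 1/\lambda_j<1/\eta$ taming the remainder; your moment computations ($\E[Y_j^2(Q-a)]=2\lambda_j$, $\E[Y_j^2(Q-a)^2]=10\lambda_j^2+2\sum_{m\ne j}\lambda_m^2$, $\E(Q-a)^4=48\sum_m\lambda_m^4+12(\sum_m\lambda_m^2)^2$) check out, and the final claim about reusing the constant of Lemma \ref{Sinfbound} follows by enlarging constants, as you note. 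As for what each route buys: the paper's argument is shorter and runs parallel to Lemma \ref{Sinfbound}, but your concern about the naive entrywise bound is well founded — with only $\max_{k,l}|\varepsilon_{kl}|=O(p^{-3/2})$, bounding $\sum_k|\Gamma_{ik}|$ and $\sum_l|\Gamma_{lj}|$ costs a factor $\sqrt p$ each, and the paper's displayed passage from $p\,\bigl|\sum_{k,l}\Gamma_{ik}\Gamma_{lj}\bigr|\,O(p^{-3/2})$ to $\bigl(\sum_k\Gamma_{ik}^2\bigr)^{1/2}\bigl(\sum_l\Gamma_{lj}^2\bigr)^{1/2}O(p^{-1/2})$ implicitly uses $\bigl|\sum_k\Gamma_{ik}\bigr|\le\bigl(\sum_k\Gamma_{ik}^2\bigr)^{1/2}$, which does not hold for a general square root ${\bf\Gamma}_0$. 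Your spectral argument supplies the needed cancellation rigorously, depends only on Assumption \ref{lambda} (constants in $(c_0,\eta)$, with no use of $T$ or $M$), at the price of a longer Gaussian moment calculation and of routing the proof through the eigenvalue formula for ${\bf S}$ rather than working directly with ${\bm u}$.
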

\begin{proof}
    Consider the representation of $(pS-\Lambda_0)_{ij}$,
    \begin{align*}
        |(p{\bf S}-{\bf \Lambda}_0)_{ij}|
        =&p\left|{\bf S}-\frac{1}{\tr({\bf \Sigma}_0)}{\bf \Sigma}_0\right|_{ij}\\
        \leq&p\sum_{k=1}^p\sum_{l=1}^p\Gamma_{ik}\Gamma_{lj}\left|\E[\frac{u_ku_l}{{\bm u}^\sT {\bf\Sigma}_0{\bm u}}]-\frac{1}{\tr({\bf \Sigma}_0)}\delta_{kl}\right|\\
        \leq&p\left|\sum_{k=1}^p\sum_{l=1}^p\Gamma_{ik}\Gamma_{lj}\right|O(p^{-\frac{3}{2}})\\
        \leq&\left|\sqrt{\sum_{k=1}^p(\Gamma_{ik})^2}\sqrt{\sum_{l=1}^p(\Gamma_{lj})^2}\right|O(p^{-\frac{1}{2}})\\
        \leq & \left|\sqrt{\Sigma_{ii}}\sqrt{\Sigma_{jj}}\right|O(p^{-\frac{1}{2}})\\
        \leq & MO(p^{-\frac{1}{2}})
    \end{align*}
    Compare the coefficient, and then the proof is finished.
\end{proof}
\begin{lemma}\label{SSestimation}
Under Assumption \ref{A1A2}, there exists $C>0$, such that for  sufficiently large $n$ that satisfies
\[
    n>\max\{\left[\log(p)+\log(\sqrt{\frac{2}{\alpha}})\right]^{\frac{1}{3}}, \left[\log(p)+\log({\frac{2}{\alpha}})\right]^{\frac{1}{3}}\}
\]
we have
\[
\|\widehat{\bf S} - {\bf S}\|_{\infty} \leq C \left( \frac{8 \lambda_1({\bf\Sigma}_0)}{p \lambda_p({\bf\Sigma}_0)} + \|{\bf S}\|_{\infty} \right) 
\sqrt{\frac{\log p + \log(1/\sqrt{\alpha/2})}{n}}.
\] 
\[
    \|\hat{\bm \mu} - {\bm \mu}\|_\infty \leq \frac{2 \lambda_1({\bf \Sigma}_0)}{\zeta_1 p \lambda_p({\bf \Sigma}_0)} \sqrt{\frac{\log (p) + \log(2/\alpha)}{n}}
\]
with probability larger than $1-\alpha$.
\end{lemma}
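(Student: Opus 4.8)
\emph{Proof proposal.} The plan is to bound the two quantities separately, following the scheme used to establish consistency of the spatial median and of the sample spatial-sign covariance matrix in \cite{feng2024spatialsignbasedprincipal}. Write $U_i := U({\bm X}_i - {\bm\mu})$ and $\xi_i := \|{\bm X}_i - {\bm\mu}\|_2$; by the representation \eqref{Setting}, ${\bm X}_i - {\bm\mu} = \xi_i U_i$ with $(U_i,\xi_i) \deql (-U_i,\xi_i)$, so $\E[U_i] = {\bm 0}$, $\E[U_iU_i^\sT] = {\bf S}$, and $U_i \deql {\bf\Gamma}_0 {\bm u}_i/\|{\bf\Gamma}_0{\bm u}_i\|_2$ with $\|{\bf\Gamma}_0{\bm u}_i\|_2^2 \in [\lambda_p({\bf\Sigma}_0),\lambda_1({\bf\Sigma}_0)]$. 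For the spatial median I would start from the stationarity equation $\frac1n\sum_{i=1}^n U({\bm X}_i-\hat{\bm\mu}) = {\bm 0}$ and Taylor-expand ${\bm\mu}\mapsto\frac1n\sum_i U({\bm X}_i-{\bm\mu})$ at the true centre, using $\nabla_{\bm x}U({\bm x}) = \|{\bm x}\|_2^{-1}(\id_p - U({\bm x})U({\bm x})^\sT)$, to get $\hat{\bm\mu}-{\bm\mu} = {\bf B}_n^{-1}\bigl(\frac1n\sum_i U_i\bigr) + {\bm R}_n$, where ${\bf B}_n = \frac1n\sum_i\xi_i^{-1}(\id_p - U_iU_i^\sT) \to {\bf B} := \zeta_1\id_p - \E[\xi_1^{-1}U_1U_1^\sT]$. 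Assumption \ref{A1A2}(1)--(2) makes ${\bf B}$ well-conditioned (smallest eigenvalue a positive multiple of $\zeta_1$, and $\|{\bf B}^{-1}\|_{L_\infty} = O(\zeta_1^{-1})$), and ${\bf B}_n$ concentrates around it. Each coordinate of the leading term $\frac1n\sum_i U_i$ is an i.i.d.\ mean-zero average of variables bounded by $1$ with variance $S_{jj} = \E[({\bf\Gamma}_0{\bm u}_1)_j^2/\|{\bf\Gamma}_0{\bm u}_1\|_2^2] \le ({\bf\Sigma}_0)_{jj}/(p\lambda_p({\bf\Sigma}_0)) \le \lambda_1({\bf\Sigma}_0)/(p\lambda_p({\bf\Sigma}_0))$, so Bernstein's inequality and a union bound over $j=1,\dots,p$ control its $\ell_\infty$ norm on an event of probability at least $1-\alpha/2$; the remainder ${\bm R}_n$ is dominated via Assumption \ref{A1A2}(1) and a self-bounding argument, which is where the hypothesis $n > (3\log p)^{1/3}$ (equivalently, the displayed lower bound on $n$) enters, to render the higher-order terms negligible relative to the linear term.

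For $\widehat{\bf S}$ I would split $\widehat{\bf S} - {\bf S} = (\widehat{\bf S} - \bar{\bf S}) + (\bar{\bf S} - {\bf S})$ with $\bar{\bf S} := \frac1n\sum_i U_iU_i^\sT$. For the sampling term $\bar{\bf S} - {\bf S}$, each entry $\frac1n\sum_i(U_{i,j}U_{i,k}-S_{jk})$ is an i.i.d.\ mean-zero average of variables bounded by $1$; the key point is that its per-term variance is not merely $O(\|{\bf S}\|_\infty)$ but $O(\|{\bf S}\|_\infty^2)$ — from $\E[U_{1,j}^2U_{1,k}^2] \le (\E[U_{1,j}^4]\E[U_{1,k}^4])^{1/2}$ together with the fourth-moment identity of Lemma \ref{mofs}, which gives $\E[U_{1,j}^4] = O(({\bf\Sigma}_0)_{jj}^2/(\lambda_p({\bf\Sigma}_0)^2 p^2)) = O(\|{\bf S}\|_\infty^2)$ — so Bernstein and a union bound over the $p^2$ entries yield $\|\bar{\bf S}-{\bf S}\|_\infty \le C\|{\bf S}\|_\infty\sqrt{(\log p + \log(1/\sqrt{\alpha/2}))/n}$. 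For the centering term $\widehat{\bf S}-\bar{\bf S}$, I would expand $U({\bm X}_i-\hat{\bm\mu})U({\bm X}_i-\hat{\bm\mu})^\sT$ to first order in $\hat{\bm\mu}-{\bm\mu}$; the gradient is a linear combination of $\xi_i^{-1}$-weighted products of coordinates of $U_i$, so after substituting the Step-1 linearization of $\hat{\bm\mu}-{\bm\mu}$ the resulting leading correction is again an i.i.d.\ average (mean zero by the sign symmetry $(U_i,\xi_i)\deql(-U_i,\xi_i)$) whose variance is computed through the elliptical representation, while its quadratic remainder is controlled using $\E[\xi_1^{-2}] \le \zeta\zeta_1^2$; all the weighted sums $\frac1n\sum_i\xi_i^{-1}(\cdots)$ are handled uniformly via the sub-Gaussianity of $\nu_i = \zeta_1^{-1}\xi_i^{-1}$ (Assumption \ref{A1A2}(3)). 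Collecting the pieces and combining the two $\alpha/2$ events gives $\|\widehat{\bf S}-{\bf S}\|_\infty \le C(8\lambda_1({\bf\Sigma}_0)/(p\lambda_p({\bf\Sigma}_0)) + \|{\bf S}\|_\infty)\sqrt{(\log p + \log(1/\sqrt{\alpha/2}))/n}$.

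The main obstacle is the centering term $\widehat{\bf S}-\bar{\bf S}$: the naive Lipschitz estimate $\|\widehat{\bf S}-\bar{\bf S}\|_\infty \le (\frac1n\sum_i\xi_i^{-1})\|\hat{\bm\mu}-{\bm\mu}\|_2 \le 2\zeta_1\sqrt{p}\,\|\hat{\bm\mu}-{\bm\mu}\|_\infty$ loses a factor $\sqrt p$ relative to the target rate, so one must instead exploit the cancellation in the linear term of the expansion and track the variances of the resulting averages explicitly through Lemma \ref{mofs} — the same mechanism that upgrades the crude variance proxy $\|{\bf S}\|_\infty$ to the sharp $\|{\bf S}\|_\infty^2$ in the sampling term. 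Carrying every numerical constant through this computation, so that $\lambda_n$ may later be specialized as in Theorems \ref{main} and \ref{maing}, is the remaining, purely bookkeeping, difficulty.
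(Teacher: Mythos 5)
Your plan has a genuine gap at the central concentration step. You treat each entry of $\bar{\bf S}-{\bf S}$ (and each coordinate of $\frac1n\sum_i U_i$) as a bounded-by-$1$ variable whose variance is $O(\|{\bf S}\|_\infty^2)$ (resp.\ $O(\lambda_1({\bf \Sigma}_0)/(p\lambda_p({\bf \Sigma}_0)))$) and then invoke Bernstein with a union bound. But Bernstein for variables bounded by $M=1$ with variance $\sigma^2$ gives a deviation of order $\max\{\sigma\sqrt{\log p/n},\,\log p/n\}$, and with $\sigma\asymp 1/p$ the linear term $\log p/n$ dominates whenever $n\lesssim p^2\log p$ — in particular throughout the high-dimensional regime ($n>(3\log p)^{1/3}$, $p\log p/n\to\infty$) in which this lemma is later used. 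So your argument only yields $\|\bar{\bf S}-{\bf S}\|_\infty\lesssim \log p/n$, not the claimed $C\,p^{-1}\sqrt{(\log p+\log(1/\sqrt{\alpha/2}))/n}$; the fourth-moment computation via Lemma \ref{mofs} is correct but beside the point, because what is needed is a tail bound at scale $1/p$, not merely a variance at scale $1/p^2$. The paper's proof supplies exactly this: by Theorem 4.2 of Han and Liu, $\|{\bm v}^\sT U({\bm X}-{\bm \mu})\|_{\psi_2}\le\sqrt{2\lambda_1({\bf \Sigma}_0)/(p\lambda_p({\bf \Sigma}_0))}$ for all unit ${\bm v}$ (equation \eqref{subgnorm}), hence each entry of $U_iU_i^\sT$ is sub-exponential with $\psi_1$-norm at most $8\lambda_1({\bf \Sigma}_0)/(p\lambda_p({\bf \Sigma}_0))$, and the sub-exponential Bernstein inequality (whose quadratic regime is active as soon as $n\gtrsim\log p$) then gives the $p^{-1}\sqrt{\log p/n}$ rate; the same device handles $\frac1n\sum_i U_i$ and $\frac1n\sum_i\nu_iU_i$ in the median bound, where your bounded-Bernstein step fails in the same way once $p\log p\gtrsim n$.

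A secondary issue is your spatial-median step: you linearize the estimating equation and invert ${\bf B}_n=\frac1n\sum_i\xi_i^{-1}(\id_p-U_iU_i^\sT)$, asserting $\|{\bf B}^{-1}\|_{L_\infty}=O(\zeta_1^{-1})$ from Assumption \ref{A1A2}(1)--(2). Assumption \ref{A1A2}(2) only controls the operator norm of ${\bf S}$; the $L_\infty$-to-$L_\infty$ norm of $(\id_p-\zeta_1^{-1}\E[\xi_1^{-1}U_1U_1^\sT])^{-1}$ is not obviously $O(1)$, since the row sums of a matrix with $O(1/p)$ entries can be $O(1)$, so this claim needs a proof (and you would also need to control ${\bf B}_n-{\bf B}$ and the Taylor remainder in $L_\infty$). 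The paper sidesteps all of this: starting from the exact identity $\sum_i\hat{\bm U}_i=0$ it writes $\zeta_1(\hat{\bm \mu}-{\bm \mu})=B_1+B_2+B_3+B_4$ with $B_1=\frac1n\sum_i{\bm U}_i$ and shows $B_2,B_3,B_4$ are lower order using Lemma 7.4 of Feng (2024) and the sub-Gaussianity of $\nu_i$ — no Hessian inversion is required. Your treatment of the centering term $\hat{\bf S}-\bar{\bf S}$ (expanding in $\hat{\bm \mu}-{\bm \mu}$ and exploiting cancellation rather than the naive Lipschitz bound, which indeed loses $\sqrt p$) is in the same spirit as the paper's $J_{11},J_{12},J_{13}$ decomposition, but those pieces are again bounded through the same sub-Gaussian/sub-exponential concentration, so the gap above propagates there as well.
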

\begin{proof}
    The following proof is an elaborated version of the proof of Theorem 7.1 in \cite{feng2024spatialsignbasedprincipal}. 
    
    A standard result is that for ${\bm X}\sim{EC}_p({\bm \mu},{\bf \Sigma}_0,r)$,  $\forall {\bm v}\in \mathbb{S}^{p-1}$
    \begin{equation}\label{subgnorm}
            \|{\bm v}U({\bm X}-{\bm \mu})\|_{\psi_2}\leq\sqrt{\frac{\lambda_1({\bf \Sigma}_0)}{\lambda_p({\bf \Sigma}_0)}\cdot\frac{2}{p}}
    \end{equation}

    This can be viewed as a corollary of Theorem 4.2 in \cite{han2016ecahighdimensionalelliptical}, with a fact that $U({\bm X})\deql S({\bm X})$, where $S(\cdot)$ is the self-normalized operator, if ${\bm X}$ is of elliptical distribution with mean $0$.

    By the inequality $\|{\bm X\bm Y}\|_{\psi_1}\leq 2\|{\bm X}\|_{\psi_2}\|{\bm Y}\|_{\psi_2}$, we can know that,
    \[
    \|\be_j^\sT U({\bm X}_i-{\bm \mu}_i)U({\bm X}_i-{\bm \mu}_i)^\sT\be_k\|_{\psi_1}\leq \frac{\lambda_1({\bf \Sigma}_0)}{\lambda_p({\bf \Sigma}_0)}\cdot\frac{4}{p}\leq \frac{\lambda_1({\bf \Sigma}_0)}{\lambda_p({\bf \Sigma}_0)}\cdot\frac{8}{p}
    \]
    By Berstein's inequality(refer to \cite{Vershynin_2018}), when $t<n(\frac{\lambda_1({\bf \Sigma}_0)}{\lambda_p({\bf \Sigma}_0)}\cdot\frac{8}{p}+\|{\bf S}\|_\infty)$, we have
    \[
    \P(\left|\frac{1}{n}\sumn\left[\be_j^\sT U({\bm X}_i-{\bm \mu})U({\bm X}_i-{\bm \mu})^\sT\be_k-S_{jk}\right]\right|\geq t)
    \leq 2 \exp\{-c\frac{nt^2}{(\frac{\lambda_1({\bf\Sigma}_0)}{\lambda_p({\bf \Sigma}_0)}\cdot\frac{8}{p}+S_{jk})^2}\}
    \]
    Let $\tilde{\bf S}=\frac{1}{n}\sumn[U({\bm X}_i-{\bm \mu})U({\bm X}_i-{\bm \mu})^\sT]$, we then have when $t<n(\frac{\lambda_1({\bf \Sigma}_0)}{\lambda_p({\bf \Sigma}_0)}\cdot\frac{8}{p}+\|{\bf S}\|_\infty)$, 
    \[
    \P(\|\tilde{\bf S}-{\bf S}\|_\infty\geq t)\leq 2p^2\exp\{-c\frac{nt^2}{(\frac{\lambda_1({\bf\Sigma}_0)}{\lambda_p({\bf \Sigma}_0)}\cdot\frac{8}{p}+\|{\bf S}\|_{\infty})^2}\}
    \]
    Therefore, for sufficiently large $n$ and with probability greater than $1-\alpha$, we have 
    \[
    \|\tilde{\bf S}-{\bf S}\|_\infty\leq \sqrt{\frac{2}{c}}(\frac{\lambda_1({\bf \Sigma}_0)}{\lambda_p({\bf \Sigma}_0)}\cdot\frac{8}{p}+\|{\bf S}\|_{\infty})\sqrt{\frac{\log(p)+\log(1/\sqrt{\alpha/2})}{n}}
    \]
    Next, we will show the bound of $\| \hat{\bf S} - {\bf S} \|_{\infty}$. For simplicity, we denote $U({\bm X}_i-\hat {\bm \mu})$ by $\hat {\bm U}_i$ and $U({\bm X}_i- {\bm \mu})$ by ${\bm U}_i$. Following the notations in Assumption \ref{A1A2}, we denote $\|{\bm X}_i-\hat{\bm \mu}\|_2$ by $\hat \xi_i$. Because
\begin{align*}
    \hat{\bf S} - \tilde{\bf S} =& \frac{1}{n} \sum_{i=1}^n \left[ \hat{\bm U}_i\hat {\bm U}_i^\sT - {\bm U}_i{\bm U}_i^\sT \right],\\
=& \frac{2}{n} \sum_{i=1}^n (\hat {\bm U}_i -{\bm U}_i) {\bm U}_i^\sT + \frac{1}{n} \sum_{i=1}^n (\hat {\bm U}_i -{\bm U}_i)(\hat {\bm U}_i-{\bm U}_i)^T \\
\triangleq &J_1 + J_2.
\end{align*}

And
\begin{align*}
    \hat{\bm U}_i - {\bm U}_i &= \frac{{\bm X}_i - \hat{\bm \mu}}{\|{\bm X}_i - \hat{\mu}\|_2} - \frac{{\bm X}_i - {\bm \mu}}{\|{\bm X}_i - {\bm \mu}\|_2}
    \\
    &=\frac{{\bm X}_i - \hat{\bm \mu}}{\|{\bm X}_i - \hat{\bm \mu}\|_2} - \frac{{\bm X}_i - \hat{\bm\mu}}{\|{\bm X}_i - {\bm \mu}\|_2} + \frac{{\bm X}_i - \hat{\bm \mu}}{\|{\bm X}_i - {\bm \mu}\|_2}- \frac{{\bm X}_i - {\bm \mu}}{\|{\bm X}_i - {\bm \mu}\|_2}
    \\
    &=({\bm X}_i-\hat{\bm \mu})(\hat\xi_i^{-1}- \xi_i^{-1})+(\hat{\bm \mu}-{\bm \mu})\xi_i^{-1}
    \\
    &={\bm U}_i(\xi_i\hat\xi_i^{-1}-1)+(\hat {\bm \mu}-{\bm \mu})(\hat\xi_i^{-1}- \xi_i^{-1})+(\hat{\bm \mu}-{\bm \mu})\xi_i^{-1}
\end{align*}

Hence, we can rewrite $\|\frac{J_1}{2}\|_{\infty}$ as follows.
\begin{align*}
    \|\frac{J_1}{2}\|_{\infty} &\leq \frac{1}{n} \left\|\sum_{i=1}^n (\xi_i \hat{\xi}_i^{-1} - 1) {\bm U}_i {\bm U}_i^\top\right\|_{\infty} 
+ \frac{1}{n} \left\|\sum_{i=1}^n (\hat{\bm \mu} - {\bm \mu}) (\hat{\xi}_i^{-1}-{\xi}_i^{-1}) {\bm U}_i^\top\right\|_{\infty}+ \frac{1}{n} \left\|\sum_{i=1}^n (\hat{\bm \mu} - {\bm \mu}) {\xi}_i^{-1} {\bm U}_i^\top\right\|_{\infty}\\
&\triangleq J_{11}+J_{12}+J_{13}
\end{align*}

First, 
\[
J_{11} \leq \max_{1 \leq i \leq n} |\xi_i \hat{\xi}_i^{-1} - 1| \|\tilde{\bf S}\|_{\infty}.
\]

Because $|\hat{\xi}_i - \xi_i| \leq \|\hat{\bm\mu} - {\bm \mu}\|_2$, so $|\xi_i \hat{\xi}_i^{-1} - 1| \leq \frac{\|\hat{\bm \mu} - {\bm \mu}\|_2}{\xi_i- \|\hat {\bm \mu} - {\bm \mu}\|_2}$.

Then,
\[
\max_{1 \leq i \leq n} |\xi_i \hat{\xi}_i^{-1} - 1| \leq \zeta_1 \|\hat{\bm \mu} - {\bm \mu}\|_2 \max_{1 \leq i \leq n} (\zeta_1\xi_i - \zeta_1 \|\hat{\bm \mu} - {\bm \mu}\|_2)^{-1}.
\]

By Lemma 7.4 in \cite{feng2024spatialsignbasedprincipal}, we have $\zeta_1 \|\hat{\bm \mu} - {\bm \mu}\|_2 = O_p(n^{-1/2})$. And by the sub-Gaussian assumption of $\nu_i$, we have $\max_{1 \leq i \leq n} \nu_i = O_p(\log^{1/2} n)$. So $\max_{1 \leq i \leq n} |\xi_i \hat{\xi}_i^{-1} - 1| = O_p\left(\sqrt{\frac{\log n}{n}}\right)$.

In addition,
\[
\|\tilde{\bf S}\|_{\infty} \leq \|\tilde{\bf S} - {\bf S}\|_{\infty} + \|{\bf S}\|_{\infty}.
\]

So, for sufficiently large $n$ and $c^*$, with probability larger than $1 - 2\alpha$,
\[
J_{11} \leq c^* \sqrt{\frac{\log n}{n}} \left(\|{\bf S}\|_{\infty} + \sqrt{\frac{2}{c}} \left(\frac{8 \lambda_1({\bf \Sigma}_0)}{p \lambda_p({\bf \Sigma}_0)}+\|{\bf S}\|_{\infty}\right)  \sqrt{\frac{\log (p) + \log(1/\sqrt{\alpha/2})}{n}}\right).
\]
Hence, $J_{11}=o_p(\|\tilde{\bf S}-{\bf S}\|_\infty)$.

Second, 
\[
J_{13} \leq \zeta_1 \|\hat{\bm \mu} - {\bm \mu}\|_\infty \cdot \left\|\frac{1}{n} \sum_{i=1}^T \nu_i {\bm U}_i \right\|_\infty.
\]

By \eqref{subgnorm} and \( \nu_i \) is also a sub-Gaussian variable with parameter \( K_\nu \), we have
\[
\| \nu_i {\bm U}_i^\top {\bm v} \|_{\psi_1} \leq \| \nu_i \|_{\psi_2} \| {\bm U}_i^\top {\bm v} \|_{\psi_2} \leq K_\nu \sqrt{\frac{\lambda_1({\bf\Sigma}_0)}{\lambda_p({\bf \Sigma}_0)}\frac{2}{p}}.
\]

So, by the Bernstein inequality (refer to \cite{Vershynin_2018}), we have for $k\in[p]$ and \( t \leq n K_\nu \sqrt{2\lambda_1({\bf \Sigma}_0)/p \lambda_p({\bf \Sigma}_0)} \),

\[
\P\left(\left|\frac{1}{n} \sum_{i=1}^n \nu_i {\bm U}_i^\top e_k \right|\geq t \right) \leq 2\exp\left(-c\frac{nt^2}{2K_\nu^2 \lambda_1({\bf \Sigma}_0)/p \lambda_p({\bf \Sigma}_0)}\right).
\]

Then,
\[
\P\left(\left\|\frac{1}{n} \sum_{i=1}^n \nu_i {\bm U}_i\right\|_\infty \geq t\right) \leq 2p\exp\left(-\frac{n t^2}{2 K_\nu^2 \lambda_1({\bf \Sigma}_0)/p \lambda_p({\bf \Sigma}_0)}\right) \leq \alpha.
\]
by setting 
\[
t = \sqrt{\frac{2 K_\nu^2 \lambda_1({\bf\Sigma}_0)}{p \lambda_p({\bf \Sigma}_0)} \frac{\log (p) + \log(2/\alpha)}{n}}.
\]

Similar to the above arguments, we have
\[
\P\left(\left\|\frac{1}{n} \sum_{i=1}^n {\bm U}_i\right\|_\infty \geq \sqrt{\frac{2 \lambda_1({\bf \Sigma}_0)}{p \lambda_p({\bf \Sigma}_0)} \frac{\log (p) + \log(2/\alpha)}{n}} \right) \leq \alpha.
\]

By the equation \( \sum_{i=1}^n \hat{\bm U}_i = 0 \), we have
\[
\zeta_1 (\hat{\bm \mu} -{\bm \mu}) = \frac{1}{n} \sum_{i=1}^n {\bm U}_i + \frac{1}{n} \sum_{i=1}^n (\xi_i \hat{\xi}_i^{-1} - 1) {\bm U}_i +\left(\frac{\zeta_1 }{\frac{1}{n} \sum_{i=1}^n \hat\xi_i^{-1}}-1 \right) \frac{1}{n} \sum_{i=1}^n {\bm U}_i
\]
\[
+ \left(\frac{\zeta_1 }{\frac{1}{n} \sum_{i=1}^n \hat\xi_i^{-1}}-1 \right) \frac{1}{n} \sum_{i=1}^n {\bm U}_i (\xi_i\hat\xi_i^{-1} - 1) \overset{\triangle}{=} B_1 + B_2 + B_3 + B_4.
\]

By the sub-Gaussian assumption of \( \nu_i \), we have \( \| B_i \|_\infty = O_p\left(\sqrt{\frac{\log n}{n}}\right)\|B_1\|_\infty, i = 2, 3, 4 \) . So
\[
\zeta_1\|\hat{\bm \mu} - {\bm \mu}\|_\infty \leq \frac{2 \lambda_1({\bf \Sigma}_0)}{p \lambda_p({\bf \Sigma}_0)} \sqrt{\frac{\log (p) + \log(2/\alpha)}{n}}
\]
with probability greater than $1-\alpha$, for sufficiently large $n$.

Therefore, there exists $C^*$, s.t,
\[
J_{13}\leq C^*\frac{\lambda_1({\bf \Sigma}_0)}{p\lambda_p({\bf \Sigma}_0)}\frac{\log(p)+\log(2/\alpha)}{n}
\]
which implies $J_{13}=o_p(\|\tilde{\bf S}-{\bf S}\|_\infty)$.

Finally, 
\[
J_{12}\leq \max_{1\leq i\leq n}|\xi_i\hat\xi_i^{-1}-1|J_{13}
\]
which implies $J_{12}$ is of smaller order than $J_{13}$. By the triangle inequality, we can obtain $\|J_1\|_{\infty}=o_p(\|\tilde{\bf S}-{\bf S}\|_\infty)$. Similarly we can know that $\|J_2\|_{\infty}=o_p(\|\tilde{\bf S}-{\bf S}\|_\infty)$. 

Above all, we can know that with probability greater than $1-\alpha$, there exists $C>0$, such that
\[
\|\hat {\bf S}-{\bf S}\|_\infty \leq C(\frac{8\lambda_1({\bf \Sigma}_0)}{p\lambda_p({\bf \Sigma}_0)}+\|{\bf S}\|_\infty)\sqrt{\frac{\log(p)+\log(1/\sqrt{\alpha/2})}{n}}
\]
\end{proof}

\begin{lemma}\label{infdiffestimation}
     Suppose $\|p\hat {\bf S}{\bf V}_0-\id_p\|_\infty\leq \lambda_n$, and then $\hat {\bf V}_{SCLIME}$ defined in \eqref{CSSIME} satisfies the following inequality.
    \[
    \|\hat {\bf V}_{SCLIME}-{\bf V}_0\|_\infty\leq3\|{\bf V}_0\|_{L_1}^2\|{\bf \Lambda}_0-p\hat {\bf S}\|_\infty+\|{\bf V}_0\|_{L_1}\lambda_n
    \]
\end{lemma}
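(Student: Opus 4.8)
The plan is to follow the CLIME-style argument of \cite{cai2011constrainedl1minimizationapproach}: first control the (generally non-symmetric) column-wise minimizer $\hat{\bf V}_1$ of \eqref{obj} in elementwise $\ell_\infty$ norm, and then observe that the symmetrization \eqref{CSSIME} cannot increase this error.

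First I would extract two consequences of the hypothesis $\|p\hat{\bf S}{\bf V}_0-\id_p\|_\infty\le\lambda_n$. The program \eqref{obj} decouples across columns: the $j$-th column $\hat{v}_j^1$ of $\hat{\bf V}_1$ is the $\ell_1$-minimizer subject to $\|p\hat{\bf S}\hat{v}_j^1-{\bm e}_j\|_\infty\le\lambda_n$, and the hypothesis says exactly that the $j$-th column of ${\bf V}_0$ is feasible for that subproblem. Hence $\|\hat{v}_j^1\|_1$ is at most the $\ell_1$-norm of the $j$-th column of ${\bf V}_0$ for every $j$, so $\|\hat{\bf V}_1\|_{L_1}\le\|{\bf V}_0\|_{L_1}$ and therefore $\|\hat{\bf V}_1-{\bf V}_0\|_{L_1}\le 2\|{\bf V}_0\|_{L_1}$; also $\hat{\bf V}_1$ is itself feasible, i.e. $\|p\hat{\bf S}\hat{\bf V}_1-\id_p\|_\infty\le\lambda_n$.

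Next, using ${\bf V}_0{\bf\Lambda}_0=\id_p$ from \eqref{define V0}, I would write the exact decomposition
\[
\hat{\bf V}_1-{\bf V}_0={\bf V}_0({\bf\Lambda}_0-p\hat{\bf S})(\hat{\bf V}_1-{\bf V}_0)+{\bf V}_0({\bf\Lambda}_0-p\hat{\bf S}){\bf V}_0+{\bf V}_0(p\hat{\bf S}\hat{\bf V}_1-\id_p),
\]
obtained by inserting ${\bf V}_0{\bf\Lambda}_0$ in front of $\hat{\bf V}_1$, adding and subtracting ${\bf V}_0\,p\hat{\bf S}\,\hat{\bf V}_1$, and then splitting $\hat{\bf V}_1=(\hat{\bf V}_1-{\bf V}_0)+{\bf V}_0$ in the first resulting term. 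Bounding each summand with the elementary inequalities $\|{\bf A}{\bf B}\|_\infty\le\|{\bf A}\|_{L_\infty}\|{\bf B}\|_\infty$ and $\|{\bf A}{\bf B}\|_\infty\le\|{\bf A}\|_\infty\|{\bf B}\|_{L_1}$ (Lemma \ref{L1ineq} and its transpose analogue), together with $\|{\bf V}_0\|_{L_\infty}=\|{\bf V}_0\|_{L_1}$ (symmetry of ${\bf V}_0$), the $L_1$ estimates of the previous paragraph, and the feasibility of $\hat{\bf V}_1$, the three terms are at most $2\|{\bf V}_0\|_{L_1}^2\|{\bf\Lambda}_0-p\hat{\bf S}\|_\infty$, $\|{\bf V}_0\|_{L_1}^2\|{\bf\Lambda}_0-p\hat{\bf S}\|_\infty$, and $\|{\bf V}_0\|_{L_1}\lambda_n$ respectively. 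Summing gives $\|\hat{\bf V}_1-{\bf V}_0\|_\infty\le 3\|{\bf V}_0\|_{L_1}^2\|{\bf\Lambda}_0-p\hat{\bf S}\|_\infty+\|{\bf V}_0\|_{L_1}\lambda_n$; the constant $3$ is precisely the cost of the split $\hat{\bf V}_1=(\hat{\bf V}_1-{\bf V}_0)+{\bf V}_0$.

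Finally I would pass from $\hat{\bf V}_1$ to $\hat{\bf V}_{SCLIME}$: by \eqref{CSSIME} each entry $\hat{v}_{ij}$ equals $\hat{v}_{ij}^1$ or $\hat{v}_{ji}^1$, and since $v^0_{ij}=v^0_{ji}$ we get $|\hat{v}_{ij}-v^0_{ij}|\le\max\{|\hat{v}_{ij}^1-v^0_{ij}|,|\hat{v}_{ji}^1-v^0_{ji}|\}\le\|\hat{\bf V}_1-{\bf V}_0\|_\infty$, hence $\|\hat{\bf V}_{SCLIME}-{\bf V}_0\|_\infty\le\|\hat{\bf V}_1-{\bf V}_0\|_\infty$, which is the claim. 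The only genuinely non-routine point is the column-wise optimality/feasibility observation that makes the $L_1$ control of $\hat{\bf V}_1$ immediate; everything else is straightforward algebra with the matrix-norm inequalities.
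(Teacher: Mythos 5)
Your proposal is correct and follows essentially the same route as the paper: the same column-wise feasibility/optimality observation giving $\|\hat{\bf V}_1\|_{L_1}\le\|{\bf V}_0\|_{L_1}$ (the paper cites Lemma 1 of \cite{cai2011constrainedl1minimizationapproach} for this), the same insertion of ${\bf V}_0{\bf\Lambda}_0$ with the split into the three terms yielding the constants $2+1$ and $\lambda_n$, and the same reduction from $\hat{\bf V}_{SCLIME}$ to $\hat{\bf V}_1$ via the symmetrization rule. Writing the split as an explicit exact identity rather than the paper's nested chain of triangle inequalities is only a cosmetic difference.
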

\begin{proof}
    First, $\|p\hat {\bf S} {\bf V}_0-\id_p\|_\infty\leq \lambda_n$ and  Lemma 1 of \cite{cai2011constrainedl1minimizationapproach} 
    implies $ \|\hat {\bf V}_1\|_{L_1}\leq \|{\bf V}_0\|_{L_1}$. By definition, we have $\|\hat {\bf V}\|_{L_1}\leq \|\hat {\bf V}_1\|_{L_1}$, Hence, we have
    \[
    \|\hat {\bf V}_{SCLIME}\|_{L_1}\leq \|\hat {\bf V}_1\|_{L_1}\leq \|{\bf V}_0\|_{L_1}
    \]
    
    Therefore, by Lemma \ref{L1ineq}, 
    \begin{align*}
        \|\hat {\bf V}_{SCLIME}-{\bf V}_0\|_\infty 
        \leq\|\hat {\bf V}_1-{\bf V}_0\|_\infty
        &\leq\|{\bf V}_0\|_{L_1}\|{\bf \Lambda}_0(\hat {\bf V}_1-{\bf V}_0)\|_\infty\\
        &\leq \|{\bf V}_0\|_{L_1}\left(\|({\bf \Lambda}_0-p\hat {\bf S})(\hat {\bf V}_1-{\bf V}_0)\|_\infty+\|p\hat {\bf S}(\hat {\bf V}_1-{\bf V}_0)\|_\infty\right)\\
        &\leq \|{\bf V}_0\|_{L_1}\left(\|({\bf \Lambda}_0-p\hat {\bf S})\|_\infty\|(\hat {\bf V}_1-{\bf V}_0)\|_{L_1}+\|p\hat {\bf S}\hat {\bf V}_1-\id_p\|_\infty+ \|\id_p-p\hat {\bf S}{\bf V}_0\|_\infty\right)\\
        &\leq 2\|{\bf V}_0\|_{L_1}^2\|({\bf \Lambda}_0-p\hat {\bf S})\|_\infty+\|{\bf V}_0\|_{L_1}\lambda_n+\|{\bf V}_0\|_{L_1}\|{\bf \Lambda}_0-p\hat {\bf S}\|_\infty\|{\bf V}_0\|_{L_1}\\
        &=3\|{\bf V}_0\|_{L_1}^2\|{\bf \Lambda}_0-p\hat {\bf S}\|_\infty+\|{\bf V}_0\|_{L_1}\lambda_n
    \end{align*}
\end{proof}
\subsubsection{Proof of Theorem \ref{main}}\label{appendix: proof main}
\begin{proof}
For \eqref{inftyerror}, by Lemma \ref{SSestimation}, we know that when $n$ is sufficiently large that satisfy $n>(3\log p)^{\frac{1}{3}}$, with probability larger than $1-2p^{-2}$,
\begin{align*}
\|\widehat{\bf S} - {\bf S}\|_{\infty} 
&\leq C \left( \frac{8 \lambda_1({\bf \Sigma}_0)}{p \lambda_p({\bf \Sigma}_0)} + \|{\bf S}\|_{\infty} \right) 
\sqrt{\frac{\log p + \log(1/\sqrt{2p^{-2}/2})}{n}}\\
&\leq \sqrt{2} C \frac{8 +\eta^2C_{c_0,\eta,T,M } }{p\eta^2}
\sqrt{\frac{\log p }{n}}
\end{align*}

Next, consider when $p$ is sufficiently large, there exist $C_1$ and $C_\epsilon$, such that
\begin{align*}
    \|{\bf \Lambda}_0-p\hat {\bf S}\|_\infty
    \leq& \|{\bf \Lambda}_0-p{\bf S}\|_\infty+p\|{\bf S}-\hat {\bf S}\|_\infty\\
    \leq&  \frac{\sqrt{2}(8 +\eta^2C_{c_0,\eta,T,M })C }{\eta^2}
\sqrt{\frac{\log p }{n}}+\frac{C_{c_0,\eta,T,M }}{\sqrt{p}}.
\end{align*}
Since $\|p\hat {\bf S}{\bf V}_0-\id_p\|_\infty\leq \|p\hat {\bf S} -{\bf \Lambda}_0\|_\infty \|{\bf V}_0\|_{L_1}\leq \lambda_n$, 
by Lemma \ref{infdiffestimation}, the conclusion holds.

For \eqref{L1error}, let $t_n = \|\hat {\bf V}_{SCLIME} - {\bf V}_0\|_\infty$ and define
\[
\bm{h}_j = \hat{\bm v}_j - {\bm v}_j^0,
\]
\[
\bm{h}_j^1 = (\hat{v}_{ij} \1\{|\hat{v}_{ij}| \geq 2t_n\}; 1 \leq i \leq p)^\sT - {\bm v}_j^0, \quad
\]
\[
\bm{h}_j^2 = \bm{h}_j - \bm{h}_j^1.
\]
By the definition \eqref{CSSIME} of $\hat{\bf V}_{SCLIME}$, we have $\|\hat{\bm v}_j\|_1 \leq \|{\bm v}_j^0\|_1$. Then
\[
\|{\bm v}_j^0\|_1 - \|\bm{h}_j^1\|_1 + \|\bm{h}_j^2\|_1 \leq \|{\bm v}_j^0 + \bm{h}_j^1\|_1 + \|\bm{h}_j^2\|_1 = \|\hat{\bm v}_j\|_1 \leq \|{\bm v}_j^0\|_1,
\]
which implies that $|\bm{h}_j^2|_1 \leq |\bm{h}_j^1|_1$. It follows that $|\bm{h}_j|_1 \leq 2|\bm{h}_j^1|_1$. Thus we only need to upper bound $|\bm{h}_j^1|_1$. By triangle inequality and Assumption \ref{V0}, we have
\begin{align*}
\|\bm{h}_j^1\|_1 
&= \sum_{i=1}^p \left|\hat{v}_{ij} \1\{|\hat{v}_{ij}| \geq 2t_n\} - v_{ij}^0\right|\\
&\leq \sum_{i=1}^p \left|v_{ij}^0 \1\{|v_{ij}^0| \leq 2t_n\}\right|
+ \sum_{i=1}^p \left|\hat{v}_{ij} \1\{|\hat{v}_{ij}| \geq 2t_n\} - v_{ij}^0 \1\{|v_{ij}^0| \geq 2t_n\}\right|\\
&\leq (2t_n)^{1 - q} s_0(p) + t_n \sum_{i=1}^p \1\{|\hat{v}_{ij}| \geq 2t_n\}
+ \sum_{i=1}^p |v_{ij}^0| |\1\{|v_{ij}^0| \geq 2t_n\} - \1\{|v_{ij}^0| \geq 2t_n\}|\\
&\leq (2t_n)^{1 - q} s_0(p) + t_n \sum_{i=1}^p \1\{|\hat{v}_{ij}| \geq t_n\}
+ \sum_{i=1}^p |v_{ij}^0| \1\{||v_{ij}^0| - 2t_n| \leq |\hat{v}_{ij} - v_{ij}^0|\}\\
&\leq (2t_n)^{1-q} s_0(p) + (t_n)^{1-q} s_0(p) + (3t_n)^{1-q} s_0(p)\\
&\leq (1 + 2^{1-q} + 3^{1-q}) t_n^{1-q} s_0(p)
\end{align*}

where we use the following inequality: for any $a, b, c \in \mathbb{R}$, we have
\[
\left|\1\{a < c\} - \1\{b < c\}\right| \leq \1\{|b - c| < |a - b|\}.
\]

This completes the proof of \eqref{L1error}.

Finally, \eqref{Ferror} follows from \eqref{inftyerror}, \eqref{L1error}, and the inequality
\[
\|\mathbf{A}\|_F^2 \leq p \|\mathbf{A}\|_{L_1} \|\mathbf{A}\|_\infty
\]
for any $p \times p$ matrix.

\end{proof}
\subsubsection{Proof of Theorem \ref{Graphthm}}\label{appendix: proof graph}
\begin{proof}
    By Theorem \ref{main}, we know that when $n, p$ is sufficiently large, with probability larger than $1-p^{-2}$, 
    \[
    \|\hat {\bf V}_{SCLIME}-{\bf V}_0\|_\infty < \tau_n
    \]
    For $(i,j)\notin S({\bf V}_0)$, we know that 
    \[|\tilde{v}_{ij}|\leq |\tilde{v}_{ij}-v^0_{ij}|+|v_{ij}^0|\leq \max_{i,j}|\hat{v}_{ij}-v_{ij}^0|< \tau_n
    \]
    which implies $\tilde{v}_{ij}=0$.

    For $(i,j)\in S({\bf V}_0)$, since $\theta_{ij}>2\tau_n$, we know that $|\hat v_{ij}|\geq |v_{ij}^0|-|\hat v_{ij}-v^0_{ij}|>\theta_{ij}-\tau_n>\tau_n$, which implies $\tilde{v}_{ij}=\hat v_{ij}$. When $v_{ij}^0>0$, we have $v_{ij}^0\geq \theta_n$, and thus $\tilde{v}_{ij}=\hat v_{ij}\geq v_{ij}^0-|\hat v_{ij}-v^0_{ij}|>\theta_{ij}-\tau_n>\tau_n$, which implies $\text{sgn}(v_{ij}^0)=\text{sgn}(\tilde{v}_{ij})$. When $v_{ij}^0<0$, we have $v_{ij}^0\leq -\theta_n$, and thus $\tilde{v}_{ij}=\hat v_{ij}\leq v_{ij}^0+|\hat v_{ij}-v^0_{ij}|<-\theta_{ij}+\tau_n<-\tau_n$, which implies $\text{sgn}(v_{ij}^0)=\text{sgn}(\tilde{v}_{ij})$.

    Above all, the proof is finished.

\end{proof}
\subsubsection{Proof of Theorem \ref{LDATheorem}}\label{appendix: proof LDA}
\begin{proof}
    First, by Lemma \ref{SSestimation}, for sufficiently large $n$, 
    \begin{align*}
        \|\hat {\bm \mu}_d-{\bm \mu}_d\|_1
        \leq& \frac{1}{2}(\|\hat{\bm \mu}_1-{\bm \mu}_1\|_1+\|\hat{\bm \mu}_0-{\bm \mu}_0\|_1)\\
        \leq& \frac{p}{2}(\|\hat{\bm \mu}_1-{\bm \mu}_1\|_\infty+\|\hat{\bm \mu}_0-{\bm \mu}_0\|_\infty)\\
        \leq &\frac{1}{2}[\frac{2 \lambda_1({\bf \Sigma}_0)}{\zeta_1  \lambda_p({\bf \Sigma}_0)} \sqrt{\frac{\log (p) + \log(2/\alpha)}{n_0}}+\frac{2 \lambda_1({\bf \Sigma}_0)}{\zeta_1  \lambda_p({\bf \Sigma}_0)} \sqrt{\frac{\log (p) + \log(2/\alpha)}{n_1}}]\\
        \approx& \frac{2\sqrt{2} \lambda_1({\bf \Sigma}_0)}{\zeta_1 \lambda_p({\bf \Sigma}_0)} \sqrt{\frac{\log (p) + \log(2/\alpha)}{n}}\\
        \leq &\frac{2\sqrt{2} }{\zeta_1 \eta^2} \sqrt{\frac{\log (p) + \log(2/\alpha)}{n}}
    \end{align*}
    with probability greater than $1-\alpha$.

    Second, by Lemma \ref{SSestimation}, similar as the argument in the proof of Theorem \ref{main}, for sufficiently large $n, p$, with probability greater than $1-\alpha$, if $\lambda_n\geq T \left[\frac{\sqrt{2}(8 +\eta^2C_{c_0,\eta,T,M })C }{\eta^2}
\sqrt{\frac{\log p + \log(1/\sqrt{\alpha/2})}{n}}+\frac{C_{c_0,\eta,T,M }}{\sqrt{p}}\right]$
\[
        \|\hat {\bf V}_{SCLIME}-{\bf V}_0\|_\infty 
        \leq   4T^2\left[\frac{\sqrt{2}(8 +\eta^2C_{c_0,\eta,T,M })C }{\eta^2}
\sqrt{\frac{\log p + \log(1/\sqrt{\alpha/2})}{n}}+\frac{C_{c_0,\eta,T,M }}{\sqrt{p}}\right]
\]

Hence, by choosing $\alpha=2p^{-2}$ and $\lambda_n=T\left[\frac{2(8 +\eta^2C_{c_0,\eta,T,M })C }{\eta^2}
\sqrt{\frac{\log p }{n}}+\frac{C_{c_0,\eta,T,M }}{\sqrt{p}}\right]$, with probability greater than $1-4p^{-2}$, for sufficiently large $n, p$ that satisfy $n>(3\log p)^\frac{1}{3}$, 
\[
            \|\hat {\bf V}_{SCLIME}-{\bf V}_0\|_\infty 
        \leq   4T^2\left[\frac{2(8 +\eta^2C_{c_0,\eta,T,M })C }{\eta^2}
\sqrt{\frac{\log p }{n}}+\frac{C_{c_0,\eta,T,M }}{\sqrt{p}}\right]
\]
\[
    \|\hat {\bm \mu}_d-{\bm \mu}_d\|_1\leq  \frac{2\sqrt{6} }{\zeta_1 \eta^2} \sqrt{\frac{\log p }{n}}
\]
Therefore, 
   \begin{align*}
        \|\hat {\bf V}_{SCLIME}\hat {\bm \mu}_d-{\bf V}_0{\bm \mu}_d\|_\infty
        =&\|\hat {\bf V}_{SCLIME}\hat {\bm \mu}_d-\hat {\bf V}_{SCLIME}{\bm \mu}_d+\hat {\bf V}_{SCLIME}{\bm \mu}_d-{\bf V}_0{\bm \mu}_d\|_\infty\\
        \leq & \|\hat {\bf V}_{SCLIME}\|_\infty\|\hat{\bm \mu}_d-{\bm \mu}_d\|_1+\|\hat {\bf V}_{SCLIME}-{\bf V}_0\|_\infty\|{\bm \mu}_d\|_1\\
        \leq & \| {\bf V}_0\|_\infty\|\hat{\bm {\bm \mu}}_d-{\bm \mu}_d\|_1 + \|\hat {\bf V}_{SCLIME}-{\bf V}_0\|_\infty(\|\hat{\bm \mu}_d-{\bm \mu}_d\|_1+\|{\bm \mu}_d\|_1)\\
        \leq & T\frac{2\sqrt{6} }{\zeta_1 \eta^2} \sqrt{\frac{\log p }{n}}+ 4M^*T^2[\frac{2(8 +\eta^2C_{c_0,\eta,T,M })C }{\eta^2}
\sqrt{\frac{\log p }{n}}+\frac{C_{c_0,\eta,T,M }}{\sqrt{p}}]\\
+&o(\|\hat {\bf V}_{SCLIME}-{\bf V}_0\|_\infty )
    \end{align*}
Above all , the final result is if $\lambda_n = T[\frac{2(8 +\eta^2C_{c_0,\eta,T,M })C }{\eta^2}
\sqrt{\frac{\log p }{n}}+\frac{C_{c_0,\eta,T,M }}{\sqrt{p}}]$, for sufficiently large $n, p$, there exists $C_*$, such that 
\begin{align*}
    \|\hat {\bf V}_{SCLIME}\hat {\bm \mu}_d-{\bf V}_0{\bm \mu}_d\|_\infty
     &\leq C_*\left[T\frac{2\sqrt{6} }{\zeta_1 \eta^2} \sqrt{\frac{\log p }{n}}+ 4M^*T^2\left[\frac{2(8 +\eta^2C_{c_0,\eta,T,M })C }{\eta^2}
\sqrt{\frac{\log p }{n}}+\frac{C_{c_0,\eta,T,M }}{\sqrt{p}}\right]\right]\\
& = \left(\frac{2\sqrt{6} TC_*}{\zeta_1 \eta^2} + \frac{8M^*T^2C_*C(8 +\eta^2C_{c_0,\eta,T,M }) }{\eta^2}\right)
\sqrt{\frac{\log p }{n}}+\frac{4M^*T^2C_*C_{c_0,\eta,T,M }}{\sqrt{p}}
\end{align*}

with probability greater than $1-4p^{-2}$
\end{proof}
\subsection{Appendix B: SGLASSO-related Proofs}
From a high-level view, we want to first show that $\tilde {\bf V}_{SGLASSO}-{\bf V}_0$ can achieve the desirable rate, where
    \[
        \tilde{\bf V}_{SGLASSO}=\argmin_{{\bf V}\succ 0, {\bf V}_{{\bf S}^c}=0}\tr(p\hat {\bf S}{\bf V})-\log\det({\bf V})+\lambda_n\|{\bf V}\|_1
    \]
    We then show $\hat {\bf V}_{SGLASSO}=\tilde{\bf V}_{SGLASSO}$. Before introducing the main proof, we first define some useful concepts and state some useful lemmas. We define ${\bf \Delta}=\tilde{\bf V}_{SGLASSO}-{\bf V}_0$, ${\bf W}=p\hat {\bf S}-{\bf \Lambda}_0$, $R({\bf \Delta})=({\bf V}_0+{\bf \Delta})^{-1}-{\bf V}_0^{-1}+{\bf V}_0^{-1}{\bf \Delta} {\bf V}_0^{-1}$.
\begin{lemma}
    For ${\bf A},{\bf B}\in \R^{p\times p}$, we have
    \[
        \|{\bf A}{\bf B}\|_{L_\infty}\leq \|{\bf A}\|_{L_\infty}\|{\bf B}\|_{L_\infty}
    \]
\end{lemma}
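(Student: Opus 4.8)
The plan is to argue directly from the definition of the matrix $\ell_\infty$ norm as the maximal absolute row sum, namely $\|{\bf A}\|_{L_\infty}=\max_{1\le i\le p}\sum_{j=1}^p|a_{ij}|$, using only the triangle inequality and an interchange in the order of summation.

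First I would set ${\bf C}={\bf A}{\bf B}=(c_{ij})$, so that $c_{ij}=\sum_{k=1}^p a_{ik}b_{kj}$. Fixing a row index $i$, the triangle inequality gives $\sum_{j=1}^p|c_{ij}|\le\sum_{j=1}^p\sum_{k=1}^p|a_{ik}|\,|b_{kj}|$, and swapping the two sums rewrites this as $\sum_{k=1}^p|a_{ik}|\bigl(\sum_{j=1}^p|b_{kj}|\bigr)$. The inner parenthesized sum is at most $\|{\bf B}\|_{L_\infty}$ for every $k$, so it can be pulled out of the sum over $k$, leaving the factor $\sum_{k=1}^p|a_{ik}|\le\|{\bf A}\|_{L_\infty}$. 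Hence $\sum_{j=1}^p|c_{ij}|\le\|{\bf A}\|_{L_\infty}\|{\bf B}\|_{L_\infty}$ for each $i$, and taking the maximum over $i$ finishes the proof.

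As an alternative route, one can note that $\|\cdot\|_{L_\infty}$ is exactly the operator norm induced by the vector $\ell_\infty$ norm (equivalently $\|{\bf A}\|_{L_\infty}=\|{\bf A}^{\top}\|_{L_1}$), so submultiplicativity is a special case of the general submultiplicativity of induced operator norms. There is essentially no obstacle here; the only point worth flagging is that the argument uses the single uniform bound $\|{\bf B}\|_{L_\infty}$ across all intermediate indices $k$, which is precisely why the maximal-row-sum quantity is the right object to invoke, and that---unlike Lemma~\ref{L1ineq}---no symmetry hypothesis on ${\bf A}$ or ${\bf B}$ is needed.
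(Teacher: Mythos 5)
Your proof is correct: the row-by-row expansion, triangle inequality, and interchange of summation give exactly the submultiplicativity of the maximal absolute row sum, and your remark that $\|\cdot\|_{L_\infty}$ is the operator norm induced by the vector $\ell_\infty$ norm is a valid alternative justification. The paper states this lemma without proof (treating it as standard), and your argument is precisely the canonical one being implicitly invoked, so there is nothing to reconcile.
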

\begin{lemma}
    For ${\bf A}\in \R^{p\times p}$ and symmetric ${\bf B}\in \R^{p\times p}$, we have 
    \[
        {\rm Vec}({\bf B}{\bf A}{\bf B})=({\bf B}\otimes {\bf B}) {\rm Vec}({\bf A})
    \]
\end{lemma}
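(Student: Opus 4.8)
The plan is to deduce the statement from the classical vectorization identity ${\rm Vec}({\bf X}{\bf Y}{\bf Z}) = ({\bf Z}^\sT \otimes {\bf X})\,{\rm Vec}({\bf Y})$, which holds for any conformable matrices ${\bf X}, {\bf Y}, {\bf Z}$ under the column-stacking convention for ${\rm Vec}$ used throughout the paper. Granting this identity, the lemma is immediate: take ${\bf X} = {\bf Z} = {\bf B}$ and ${\bf Y} = {\bf A}$ to obtain ${\rm Vec}({\bf B}{\bf A}{\bf B}) = ({\bf B}^\sT \otimes {\bf B})\,{\rm Vec}({\bf A})$, and then use ${\bf B}^\sT = {\bf B}$ since ${\bf B}$ is symmetric.

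To justify the classical identity itself (rather than merely citing a standard matrix-algebra reference), I would compare the two sides block by block. For each index $j$, the $j$-th column of ${\bf X}{\bf Y}{\bf Z}$ is ${\bf X}{\bf Y}{\bf Z}e_j = \sum_i z_{ij}\,{\bf X}{\bf Y}e_i = \sum_i z_{ij}\,{\bf X}\,({\bf Y})_{\cdot i}$, where $({\bf Y})_{\cdot i}$ denotes the $i$-th column of ${\bf Y}$. On the other hand, the $j$-th block of $({\bf Z}^\sT \otimes {\bf X})\,{\rm Vec}({\bf Y})$ equals $\sum_i ({\bf Z}^\sT)_{ji}\,{\bf X}\,({\bf Y})_{\cdot i} = \sum_i z_{ij}\,{\bf X}\,({\bf Y})_{\cdot i}$. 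Since the two agree for every $j$, stacking the columns shows the full vectors coincide.

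I do not expect any genuine obstacle here; the lemma is a bookkeeping fact. The only point requiring care is to keep the conventions for ${\rm Vec}$ (column- versus row-stacking) and for the Kronecker product fixed consistently, since the placement of the transpose in the general identity depends on that choice; with ${\bf B}$ symmetric the transpose disappears entirely and the verification is purely mechanical. This identity is exactly what is needed in the SGLASSO analysis to re-express ${\bf V}_0^{-1}{\bf \Delta}{\bf V}_0^{-1} = {\bf \Lambda}_0{\bf \Delta}{\bf \Lambda}_0$ and the remainder term $R({\bf \Delta})$ through ${\bf \Gamma}^* = {\bf \Lambda}_0 \otimes {\bf \Lambda}_0$, bringing the problem into the framework of \cite{RavikumarWainwrightRaskuttiYu2008}.
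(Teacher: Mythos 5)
Your proof is correct and complete: the paper states this lemma as a standard vectorization fact without supplying a proof, and your column-by-column verification of ${\rm Vec}({\bf X}{\bf Y}{\bf Z}) = ({\bf Z}^{\top}\otimes{\bf X}){\rm Vec}({\bf Y})$ followed by ${\bf B}^{\top}={\bf B}$ is exactly the standard argument one would insert here. No gaps; your remark about fixing the column-stacking convention is the only point of care, and you handle it correctly.
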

\begin{lemma}
    $R({\bf \Delta})$ can be represented as ${\bf V}_0^{-1}{\bf \Delta} {\bf V}_0^{-1}{\bf \Delta} J {\bf V}_0^{-1}$, where $J=\sum_{k=0}^\infty(-1)^{k}({\bf V}_0^{-1}{\bf \Delta})^k$. Moreover, its element-wise $l_\infty$ bound can be controlled when $K^*d\|{\bf \Delta}\|_\infty< \frac{1}{3}$.
    \[
        \|R({\bf \Delta})\|_\infty \leq \frac{3}{2}(K^*)^3 d\|{\bf \Delta}\|_\infty^2
    \]
\end{lemma}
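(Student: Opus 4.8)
The plan is to pin down the closed form of $R({\bf\Delta})$ via a Neumann expansion, and then bound it entrywise with mixed $\ell_1$/$\ell_\infty$ (H\"older-type) inequalities arranged so that the sparsity parameter $d$ is spent exactly once.

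\emph{Representation.} First I would use ${\bf V}_0^{-1}={\bf\Lambda}_0$ together with the observation that ${\bf\Delta}=\tilde{\bf V}_{SGLASSO}-{\bf V}_0$ is supported on $S$ (both $\tilde{\bf V}_{SGLASSO}$, by the constraint in its definition, and ${\bf V}_0$ are); since $S$ is symmetric with at most $d$ indices in each row, every row and column of ${\bf\Delta}$ has at most $d$ nonzeros, so $\|{\bf\Delta}\|_{L_\infty}\le d\|{\bf\Delta}\|_\infty$ and $\|{\bf\Delta}\|_{L_1}\le d\|{\bf\Delta}\|_\infty$. By the first lemma above (submultiplicativity of $\|\cdot\|_{L_\infty}$), $\|{\bf\Lambda}_0{\bf\Delta}\|_{L_\infty}\le\|{\bf\Lambda}_0\|_{L_\infty}\|{\bf\Delta}\|_{L_\infty}\le K^*d\|{\bf\Delta}\|_\infty<1/3$, so the Neumann series $(\id_p+{\bf\Lambda}_0{\bf\Delta})^{-1}=\sum_{k\ge0}(-1)^k({\bf\Lambda}_0{\bf\Delta})^k$ converges (in the submultiplicative norm $\|\cdot\|_{L_\infty}$, hence entrywise). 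Factoring ${\bf V}_0+{\bf\Delta}={\bf V}_0(\id_p+{\bf\Lambda}_0{\bf\Delta})$ gives $({\bf V}_0+{\bf\Delta})^{-1}=\big(\sum_{k\ge0}(-1)^k({\bf\Lambda}_0{\bf\Delta})^k\big){\bf\Lambda}_0$; its $k=0$ term ${\bf\Lambda}_0$ and $k=1$ term $-{\bf\Lambda}_0{\bf\Delta}{\bf\Lambda}_0$ cancel the $-{\bf\Lambda}_0$ and $+{\bf\Lambda}_0{\bf\Delta}{\bf\Lambda}_0$ in $R({\bf\Delta})=({\bf V}_0+{\bf\Delta})^{-1}-{\bf\Lambda}_0+{\bf\Lambda}_0{\bf\Delta}{\bf\Lambda}_0$, leaving the tail $\sum_{k\ge2}(-1)^k({\bf\Lambda}_0{\bf\Delta})^k{\bf\Lambda}_0=({\bf\Lambda}_0{\bf\Delta})^2J{\bf\Lambda}_0$ with $J=\sum_{k\ge0}(-1)^k({\bf\Lambda}_0{\bf\Delta})^k$, i.e.\ exactly ${\bf V}_0^{-1}{\bf\Delta}{\bf V}_0^{-1}{\bf\Delta}J{\bf V}_0^{-1}$.

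\emph{Bounding the pieces.} Transposing the first lemma gives $\|AB\|_{L_1}\le\|A\|_{L_1}\|B\|_{L_1}$; since ${\bf\Lambda}_0$ is symmetric, $\|{\bf\Lambda}_0\|_{L_1}=\|{\bf\Lambda}_0\|_{L_\infty}\le K^*$, so $\|{\bf\Lambda}_0{\bf\Delta}\|_{L_1}\le K^*d\|{\bf\Delta}\|_\infty<1/3$ and hence $\|J\|_{L_1}\le\sum_{k\ge0}3^{-k}=3/2$. Next I would write $R({\bf\Delta})=({\bf\Lambda}_0{\bf\Delta})\,M$ with $M={\bf\Lambda}_0{\bf\Delta}J{\bf\Lambda}_0$ and bound $\|M\|_\infty$: viewing $M_{\ell j}=(\be_\ell^\sT{\bf\Lambda}_0{\bf\Delta})(J{\bf\Lambda}_0\be_j)$ as an inner product and splitting it as "$\ell_\infty$ times $\ell_1$", $|M_{\ell j}|\le\|\be_\ell^\sT{\bf\Lambda}_0{\bf\Delta}\|_\infty\,\|J{\bf\Lambda}_0\be_j\|_1\le\big(\|\be_\ell^\sT{\bf\Lambda}_0\|_1\,\|{\bf\Delta}\|_\infty\big)\big(\|J\|_{L_1}\,\|{\bf\Lambda}_0\be_j\|_1\big)\le K^*\|{\bf\Delta}\|_\infty\cdot\frac{3}{2}K^*$, so $\|M\|_\infty\le\frac{3}{2}(K^*)^2\|{\bf\Delta}\|_\infty$. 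Finally, for any $i,j$, writing $R({\bf\Delta})_{ij}=(\be_i^\sT{\bf\Lambda}_0{\bf\Delta})(M\be_j)$ and splitting it the other way, "$\ell_1$ times $\ell_\infty$", $|R({\bf\Delta})_{ij}|\le\|\be_i^\sT{\bf\Lambda}_0{\bf\Delta}\|_1\,\|M\be_j\|_\infty\le\big(\|\be_i^\sT{\bf\Lambda}_0\|_1\,\|{\bf\Delta}\|_{L_\infty}\big)\|M\|_\infty\le K^*d\|{\bf\Delta}\|_\infty\cdot\frac{3}{2}(K^*)^2\|{\bf\Delta}\|_\infty=\frac{3}{2}(K^*)^3d\|{\bf\Delta}\|_\infty^2$, which is the stated bound.

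The crux — and the reason for this particular grouping — is keeping $d$ to the first power. A careless iteration of submultiplicative $\|\cdot\|_{L_\infty}$ bounds, or invoking $\|{\bf\Delta}\|_{L_\infty}\le d\|{\bf\Delta}\|_\infty$ at each of the two occurrences of ${\bf\Delta}$, would yield a spurious $d^2$. The remedy is the factorization $R=({\bf\Lambda}_0{\bf\Delta})({\bf\Lambda}_0{\bf\Delta}J{\bf\Lambda}_0)$ combined with using the entrywise H\"older inequality in \emph{opposite} directions at the two levels: "$\ell_\infty$-row of ${\bf\Lambda}_0{\bf\Delta}$ times $\ell_1$-column of $J{\bf\Lambda}_0$" for $M$, and "$\ell_1$-row of ${\bf\Lambda}_0{\bf\Delta}$ times $\ell_\infty$-column of $M$" for $R$, so that the row-sparsity of ${\bf\Delta}$ is cashed in exactly once, while the symmetry of ${\bf\Lambda}_0$ (giving $\|{\bf\Lambda}_0\|_{L_1}=\|{\bf\Lambda}_0\|_{L_\infty}\le K^*$) and the bound $\|J\|_{L_1}\le 3/2$ absorb all the other factors.
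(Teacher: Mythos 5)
Your proposal is correct and follows essentially the same route as the paper: the identical Neumann-series cancellation giving $R({\bf\Delta})={\bf V}_0^{-1}{\bf\Delta}{\bf V}_0^{-1}{\bf\Delta}J{\bf V}_0^{-1}$, then an entrywise H\"older split in which $\|{\bf\Lambda}_0\|_{L_\infty}\leq K^*$ is used three times, $d$ enters exactly once through $\|{\bf\Delta}\|_{L_\infty}$ (or $\|{\bf\Delta}\|_{L_1}$) $\leq d\|{\bf\Delta}\|_\infty$, and the factor $3/2$ comes from the geometric series bound on $J$ under $K^*d\|{\bf\Delta}\|_\infty<\tfrac13$. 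The only cosmetic difference is that the paper handles the second factor by transposing and using submultiplicativity of $\|\cdot\|_{L_\infty}$, whereas you apply H\"older a second time to the block ${\bf\Lambda}_0{\bf\Delta}J{\bf\Lambda}_0$; you also make explicit two points the paper leaves implicit (convergence of the series and the support of ${\bf\Delta}$ lying in $S$), which is fine.
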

\begin{proof}
    Consider
    \begin{align*}
        R({\bf \Delta})&=({\bf V}_0+{\bf \Delta})^{-1}-{\bf V}_0^{-1}+{\bf V}_0^{-1}{\bf \Delta} {\bf V}_0^{-1}\\
        &=(I+{\bf V}_0^{-1}{\bf \Delta})^{-1}{\bf V}_0^{-1}-{\bf V}_0^{-1}+{\bf V}_0^{-1}{\bf \Delta} {\bf V}_0^{-1}\\
        &=\sum_{k=0}^\infty(-1)^{k}({\bf V}_0^{-1}{\bf \Delta})^k{\bf V}_0^{-1}-{\bf V}_0^{-1}+{\bf V}_0^{-1}{\bf \Delta} {\bf V}_0^{-1}\\
        &=\sum_{k=2}^\infty(-1)^{k}({\bf V}_0^{-1}{\bf \Delta})^k{\bf V}_0^{-1}\\
        &={\bf V}_0^{-1}{\bf \Delta} {\bf V}_0^{-1}{\bf \Delta} \sum_{k=0}^\infty(-1)^{k}({\bf V}_0^{-1}{\bf \Delta})^k{\bf V}_0^{-1}\\
        &={\bf V}_0^{-1}{\bf \Delta} {\bf V}_0^{-1}{\bf \Delta} J{\bf V}_0^{-1}
    \end{align*}
    Moreover, consider 
    \begin{align*}
        \|R({\bf \Delta})\|_\infty 
        &= \max_{i,j}|e_i{\bf V}_0^{-1}{\bf \Delta} {\bf V}_0^{-1}{\bf \Delta} J {\bf V}_0^{-1}e_j|\\
        &\leq \max_i\|e_i{\bf V}_0^{-1}{\bf \Delta}\|_\infty\max_{j}\|{\bf V}_0^{-1}{\bf \Delta} J{\bf V}_0^{-1}e_j\|_1\\
        &\leq \max_i \|e_i{\bf V}_0^{-1}\|_1\|{\bf \Delta}\|_\infty \|{\bf V}_0^{-1}J^\sT{\bf \Delta} {\bf V}_0^{-1}\|_{L_\infty}\\
        &\leq \|{\bf V}_0^{-1}\|_{L_\infty}\|{\bf \Delta}\|_\infty \|{\bf V}_0^{-1}J^\sT{\bf \Delta} {\bf V}_0^{-1}\|_{L_\infty}\\
        &\leq (K^*)^3 d\|{\bf \Delta}\|^2_\infty\|J^\sT\|_{L_\infty}
    \end{align*}
    Hence we only have to bound $\|J^\sT\|_{L_\infty}$. Consider
 \begin{align*}
     \|J^\sT\|_{L_\infty}
     &\leq \sum_{k=0}^\infty \|{\bf \Delta} {\bf V}_0^{-1}\|^k_{L_\infty}\\
     & \leq \frac{1}{1-\|{\bf \Delta} \|_{L_\infty}\|{\bf V}_0^{-1}\|_{L_\infty}}\\
     &\leq \frac{1}{1-K^*d\|{\bf \Delta}\|_\infty}\\
     &\leq \frac{3}{2}
 \end{align*}
 Combine them together, the proof is finished.
\end{proof}
\subsubsection{Proof of Theorem \ref{maing}}\label{appendix: proof maing}
\begin{proof}
    The proof can be separated into two parts.  

    \textbf{Step I: Rate of $\tilde{\bf V}_{SGLASSO}-{\bf V}_0$}

    Consider \[
    \nabla^2_{\bf V}(\tr(p\hat {\bf S} {\bf V})-\log\det ({\bf V})) = {\bf V}^{-1}\otimes {\bf V}^{-1}\succ 0
    \]
    then we know 
    \[
        \nabla^2_{{\bf V}_S}(\tr(p\hat {\bf S} {\bf V})-\log\det ({\bf V})) = [{\bf V}^{-1}\otimes {\bf V}^{-1}]_{SS}\succ 0
    \]
    By Lagrangian duality, we know $\tilde{\bf V}_{SGLASSO}=\argmin _{{\bf V}\succ 0, {\bf V}_{S^c}=0, \|{\bf V}\|_1\leq C(\lambda_n)}[\tr(p\hat {\bf S} {\bf V})-\log\det({\bf V})]$. Hence the strict convexity of the objective function in ${\bf V}_S$ implies the uniqueness of $\tilde{\bf V}_{SGLASSO}$. Besides, we know that 
    \[
        {\bf V}_S = [\tilde{\bf V}_{SGLASSO}]_{S} \Longleftrightarrow G({\bf V}_S)\triangleq[p\hat {\bf S}]_S-{\bf V}^{-1}_S+\lambda_n {\bf Z}_S=0
    \]
    where ${\bf Z}_S\in \partial\|{\bf V}_S\|_1$.

    Define another map $F:\R^{|S|}\rightarrow\R^{|S|}$, such that $F(\Vec({\bf \Delta}^\prime_S))\triangleq-({\bf \Gamma}^*_{SS})^{-1}{\rm Vec}(G([V_0]_S+{\bf \Delta}^\prime_S))+{\rm Vec}({\bf \Delta}^\prime_S)$. By definition, we know that $F(\Vec({\bf \Delta}^\prime_S))=\Vec({\bf \Delta}^\prime_S)$ if and only if $\Vec(G([{\bf V}_0]_S+{\bf \Delta}^\prime_S))=0$, i.e. ${\bf \Delta}^\prime_S={\bf \Delta}_S$.

    Denote $2K^*(\|{\bf W}\|_\infty+\lambda_n)$ by $r$ and $\{\Vec({\bf \Delta}^\prime_S): \|{\bf \Delta}^\prime_S\|_\infty \leq r\}$ by $B(r)$, we now \textbf{claim} that $F(B(r))\subseteq B(r)$. Since $B(r)$ is a nonempty compact convex set, then by Brouwer fixed-point theorem, we can know that $\Vec({\bf \Delta}_S) \in B(r)$, and hence
    \[
        \|{\bf \Delta}\|_\infty =\|{\bf \Delta}_S\|_\infty \leq  r
    \]
    Using similar argument as in the proof of Theorem \ref{main}, we could show that when $p$ is sufficiently large, with probability larger than $1-2p^{-2}$
    \[
        \|{\bf W}\|_\infty = \|p\hat {\bf S}-{\bf \Lambda}_0\|_\infty\leq \left(\frac{\sqrt{2}(8 +\eta^2C_{c_0,\eta,s,M })C }{\eta^2}
\sqrt{\frac{\log p }{n}}+\frac{C_{c_0,\eta,s,M }}{\sqrt{p}}\right)
    \]
    where we replace $T$ by $s$ in the absolute constant since in terms of $\|{\bf V}_0\|_1$, $pT $ and $\frac{1}{c_0\eta}(p+s)$ is of the same order. Then we can show that by  picking $\lambda_n =\frac{8}{\alpha}\left(\frac{\sqrt{2}(8 +\eta^2C_{c_0,\eta,s,M })C }{\eta^2}
\sqrt{\frac{\log p }{n}}+\frac{C_{c_0,\eta,s,M }}{\sqrt{p}}\right)$.
    \[
       \| \tilde{\bf V}_{SGLASSO}-{\bf V}_0 \|_\infty \leq 2K^*(\frac{8}{\alpha}+1) \left(\frac{\sqrt{2}(8 +\eta^2C_{c_0,\eta,s,M })C }{\eta^2}
\sqrt{\frac{\log p }{n}}+\frac{C_{c_0,\eta,s,M }}{\sqrt{p}}\right)
    \]

    Now, let us turn to the proof of the \textbf{claim}.

    For ${\bf \Delta}^\prime_S\in B(r)$, since ${\bf \Delta}_{S^c}=0$, we have
    \begin{align*}
        F(\Vec({\bf \Delta}^\prime_S))
        &=-({\bf \Gamma}^*_{SS})^{-1}{\rm Vec}(G([{\bf V}_0]_S+{\bf \Delta}^\prime_S))+{\rm Vec}({\bf \Delta}^\prime_S)\\
        &=-({\bf \Gamma}^*_{SS})^{-1}{\rm Vec}(-(({\bf V}_0+{\bf \Delta}^\prime)^{-1}-{\bf V}_0^{-1})_S+{\bf W}_S+\lambda_n {\bf Z}_S)+{\rm Vec}({\bf \Delta}^\prime_S)\\
        &=-({\bf \Gamma}^*_{SS})^{-1}[{\rm Vec}(-(({\bf V}_0+{\bf \Delta}^\prime)^{-1}-{\bf V}_0^{-1})_S)+({\bf \Gamma}^*_{SS})\Vec({\bf \Delta}^\prime_S)]-({\bf \Gamma}^*_{SS})^{-1}\Vec({\bf W}_S+\lambda_n {\bf Z}_S)\\
        &=-({\bf \Gamma}^*_{SS})^{-1}{\rm Vec}([-(({\bf V}_0+{\bf \Delta}^\prime)^{-1}-{\bf V}_0^{-1})+{\bf V}_0^{-1}{\bf \Delta}^\prime {\bf V}_0^{-1}]_S)-({\bf \Gamma}^*_{SS})^{-1}\Vec({\bf W}_S+\lambda_n {\bf Z}_S)\\
        &= ({\bf \Gamma}^*_{SS})^{-1} \Vec([R({\bf \Delta}^\prime)]_S)-({\bf \Gamma}^*_{SS})^{-1}\Vec({\bf W}_S+\lambda_n {\bf Z}_S)
    \end{align*}
    Hence, since $r\leq \min\{\frac{1}{3(K^*)^4d}, \frac{1}{3K^*d}\}$
    \begin{align*}
    \|F(\Vec({\bf \Delta}^\prime_S))\|_\infty
    &\leq K^*\|R({\bf \Delta}^\prime)\|_\infty+K^*(\|{\bf W}\|_\infty +\lambda_n)\\
    &\leq \frac{3}{2}(K^*)^4d\|{\bf \Delta}^\prime\|_\infty^2+\frac{r}{2}\\
    &= \frac{3}{2}(K^*)^4dr^2+\frac{r}{2}\\
    &\leq r
    \end{align*}
Above all, we show the rate of $\tilde{\bf V}_{SGLASSO}-{\bf V}_0$ in terms of matrix maximum norm.

\textbf{Step II: $\hat{\bf V}_{SGLASSO}=\tilde{\bf V}_{SGLASSO}$}

Similar as the above argument, we know the uniqueness of $\hat {\bf V}_{SGLASSO}$ and 
\[
    {\bf V} = \hat{\bf V}_{SGLASSO} \Longleftrightarrow p\hat {\bf S}-{\bf V}^{-1}+\lambda_n {\bf Z}=0
\]
where ${\bf Z}\in \partial\|{\bf V}\|_1$.

To show $\hat{\bf V}_{SGLASSO}=\tilde{\bf V}_{SGLASSO}$, we only have to show that ${\bf Z}_{S^c}\triangleq\frac{1}{\lambda_n}(-[p\hat {\bf S}]_{S^c}+[\tilde{\bf V}^{-1}_{SGLASSO}]_{S^c})$ satisfies $\|{\bf Z}_{S^c}\|_\infty<1$.

Combine $G([\tilde{\bf V}^{-1}_{SGLASSO}]_{S})=0$ and the above definition together, we have
\[
    p\hat{\bf S}-\tilde{\bf V}_{SGLASSO}^{-1}+\lambda_n {\bf Z}=0
\]
We can rewrite it as 
\[
    {\bf V}_0^{-1}{\bf \Delta} {\bf V}_0^{-1}+{\bf W}-R({\bf \Delta})+\lambda_n {\bf Z}=0
\]
Vectorize the equation, we have
\[
\Vec({\bf V}_0^{-1}{\bf \Delta} {\bf V}_0^{-1}+{\bf W}-R({\bf \Delta})+\lambda_n {\bf Z})= {\bf \Gamma}^* \Vec({\bf \Delta})+\Vec({\bf W})-\Vec(R({\bf \Delta}))+\lambda_n\Vec({\bf Z})=0
\]
Since ${\bf \Delta}_{S^c}=0$, we have
\[
    {\bf \Gamma}^*_{SS} \Vec({\bf \Delta})_S+\Vec({\bf W})_S-\Vec(R({\bf \Delta}))_S+\lambda_n\Vec({\bf Z})_S=0
\]
\[
    {\bf \Gamma}^*_{S^cS} \Vec({\bf \Delta})_{S}+\Vec({\bf W})_{S^c}-\Vec(R({\bf \Delta}))_{S^c}+\lambda_n\Vec({\bf Z})_{S^c}=0
\]
From the above, we have 
\[
    \Vec({\bf \Delta})_S=({\bf \Gamma}^*_{SS})^{-1}[-\Vec({\bf W})_S+\Vec(R({\bf \Delta}))_S-\lambda_n\Vec({\bf Z})_S]
\]
Plug in to another equation, we have 
\begin{align*}
        \Vec({\bf Z})_{S^c}
        &=-\frac{1}{\lambda_n}{\bf \Gamma}^*_{S^cS} \Vec({\bf \Delta})_S+\frac{1}{\lambda_n}\Vec(R({\bf \Delta}))_{S^c}-\frac{1}{\lambda_n}\Vec({\bf W})_{S^c}\\
        &=\frac{1}{\lambda_n}{\bf \Gamma}^*_{S^cS}({\bf \Gamma}^*_{SS})^{-1}[\Vec({\bf W})_S-\Vec(R({\bf \Delta}))_S]+ {\bf \Gamma}^*_{S^cS}({\bf \Gamma}^*_{SS})^{-1}\Vec({\bf Z})_S-\frac{1}{\lambda_n}[\Vec({\bf W})_{S^c}-\Vec(R({\bf \Delta}))_{S^c}]
\end{align*}
Taking the $l_\infty$ norm, we have
\begin{align*}
    \|{\bf Z}_{S^c}\|_\infty
    &\leq \frac{1}{\lambda_n}\|{\bf \Gamma}^*_{S^cS}({\bf \Gamma}^*_{SS})^{-1}\|_{L_\infty}(\|\Vec({\bf W})_S\|_\infty+\|\Vec(R({\bf \Delta}))_S\|_\infty)\\&+ \|{\bf \Gamma}^*_{S^cS}({\bf \Gamma}^*_{SS})^{-1}\|_{L_\infty}\|\Vec({\bf Z})_S\|_\infty
    +\frac{1}{\lambda_n}(\|\Vec({\bf W})_{S^c}\|_\infty+\|\Vec(R({\bf \Delta}))_{S^c}\|_\infty)\\
    &\leq \frac{2-\alpha}{\lambda_n}(\|{\bf W}\|_\infty+\|R({\bf \Delta})\|_\infty)+1-\alpha
\end{align*}
Since $r\leq \frac{1}{3d(K^*)^4(1+\frac{8}{\alpha})}$,  we know $\|R({\bf \Delta})\|_{\infty}\leq \frac{3}{2}d(K^*)^3 r^2=3d(K^*)^4(1+\frac{8}{\alpha})\frac{8\lambda_n}{\alpha} r\leq \frac{8\lambda_n}{\alpha}$.

Hence, 
\[
    \|{\bf Z}_{S^c}\|_\infty\leq \frac{2-\alpha}{\lambda_n}\frac{2\alpha\lambda_n}{8}+1-\alpha\leq 1-\frac{\alpha}{2}<1
\]
Therefore, we prove that
\[
    \hat{\bf V}_{SGLASSO}=\tilde{\bf V}_{SGLASSO}
\]
As a result, 
\[
    \| \hat{\bf V}_{SGLASSO}-{\bf V}_0 \|_\infty \leq 2K^*(\frac{8}{\alpha}+1) \left(\frac{\sqrt{2}(8 +\eta^2C_{c_0,\eta,s,M })C }{\eta^2}
\sqrt{\frac{\log p }{n}}+\frac{C_{c_0,\eta,s,M }}{\sqrt{p}}\right)
\]
Furthermore, since the support of $\hat {\bf V}_{SGLASSO}$ is the same as that of ${\bf V}_0$, we can know that 
\[
     \|\hat {\bf V}_{SGLASSO}-{\bf V}_0\|_\op\leq \|\hat {\bf V}_{SGLASSO}-{\bf V}_0\|_{L_1}\leq 2K^*d(\frac{8}{\alpha}+1)\left(\frac{\sqrt{2}(8 +\eta^2C_{c_0,\eta,s,M })C }{\eta^2}
\sqrt{\frac{\log p }{n}}+\frac{C_{c_0,\eta,s,M }}{\sqrt{p}}\right)
\] 
And 
\[
    \|\hat {\bf V}_{SGLASSO}-{\bf V}_0\|_{F}^2\leq (p+s)\| \hat{\bf V}_{SGLASSO}-{\bf V}_0 \|_\infty ^2
\]
Therefore, 
\[
        \frac{1}{p}\|\hat {\bf V}_{SGLASSO}-{\bf V}_0\|_{F}^2\leq 4{K^*}^2\left(\frac{s}{p}+1 \right)(\frac{8}{\alpha}+1)^2 \left(\frac{\sqrt{2}(8 +\eta^2C_{c_0,\eta,s,M })C }{\eta^2}
\sqrt{\frac{\log p }{n}}+\frac{C_{c_0,\eta,s,M }}{\sqrt{p}}\right)^2
\]
\end{proof}


\end{document}